\crefname{subsection}{Subsection}{Subsections}
\crefname{item}{Item}{Items}
\newcounter{countitems}
\newcounter{nextitemizecount}
\newcommand{\setupcountitems}{%
  \stepcounter{nextitemizecount}%
  \setcounter{countitems}{0}%
  \preto\item{\stepcounter{countitems}}%
}
\newcommand{\computecountitems}{%
  \edef\@currentlabel{\number\c@countitems}%
  \label{countitems@\number\numexpr\value{nextitemizecount}-1\relax}%
}
\newcommand{\nextitemizecount}{%
  \getrefnumber{countitems@\number\c@nextitemizecount}%
}
\newcommand{\previtemizecount}{%
  \getrefnumber{countitems@\number\numexpr\value{nextitemizecount}-1\relax}%
}
\newenvironment{AutoMultiColItemize}{%
\ifnumcomp{\nextitemizecount}{>}{3}{\begin{multicols}{2}}{}%
\setupcountitems\begin{itemize}}%
{\end{itemize}%
\unskip\computecountitems\ifnumcomp{\previtemizecount}{>}{3}{\end{multicols}}{}}
\long\def\mbox#1{\leavevmode\hbox{#1}}
\newcommand{\problemtitle}[1]{\gdef\@problemtitle{#1}}% Store problem title
\newcommand{\probleminput}[1]{\gdef\@probleminput{#1}}% Store problem input
\newcommand{\problemoutput}[1]{\gdef\@problemoutput{#1}}% Store problem question
  \par\addvspace{.5\baselineskip}
  \par\addvspace{.5\baselineskip}
\newcommand\restr[2]{{% we make the whole thing an ordinary symbol
  \left.\kern-\nulldelimiterspace % automatically resize the bar with \right
  #1 % the function
  \littletaller % pretend it's a little taller at normal size
  \right|_{#2} % this is the delimiter
  }}
\newcommand{\littletaller}{\mathchoice{\vphantom{\big|}}{}{}{}}
\newcommand{\Z}{\mathbb{Z}}
\newcommand{\PDS}{PDS}
\newcommand{\maxPDS}{\textsc{MaxPDS}}
\newcommand{\connmaxPDS}{\textsc{connected MaxPDS}}
\newcommand{\NP}{\textsf{NP}}
\newcommand{\degen}{\operatorname{degen}}
\newtheorem{theorem}{Theorem}%  meant for continuous numbers
\newtheorem{claim}{Claim}%
\newtheorem{remark}{Remark}%
\newtheorem{definition}{Definition}%
\newtheorem{corollary}{Corollary}%
\newtheorem{lemma}{Lemma}%
\newcommand{\intInterval}[1]{[\![#1]\!]}
\title{Proportionally dense subgraphs of maximum size in degree-constrained graphs}
\author[1]{Narmina Baghirova}
\author[2]{Antoine Castillon}
\affil[1]{Department of Informatics, University of Fribourg, Switzerland}
\affil[2]{Univ.\ Lille, CNRS, Centrale Lille, UMR 9189 CRIStAL, F-59000 Lille, France}
\date{}
\begin{document}
\definecolor{vert}{HTML}{008800}
\maketitle
\begin{abstract}
A \emph{proportionally dense subgraph (\PDS{})} of a graph is an induced subgraph of size at least two such that every vertex in the subgraph has proportionally as many neighbors inside as outside of the subgraph. Then, \maxPDS{} is the problem of determining a \PDS{} of maximum size in a given graph. If we further require that a \PDS{} induces a connected subgraph, we refer to such problem as \connmaxPDS{}. In this paper, we study the complexity of \maxPDS{} with respect to parameters representing the density of a graph and its complement. We consider $\Delta$, representing the maximum degree, $h$, representing the $h$-index, and $\degen{}$, representing the degeneracy of a graph. We show that \maxPDS{} is \NP{}-hard parameterized by $\Delta,h$ and $\degen{}$. More specifically, we show that \maxPDS{} is \NP{}-hard on graphs with $\Delta=4$, $h=4$ and $\degen =2$. Then, we show that \maxPDS{} is \NP{}-hard when restricted to dense graphs, more specifically graphs $G$ such that $\Delta(\overline{G})\leq 6$, and graphs $G$ such that $\degen(\overline{G}) \leq 2$ and $\overline{G}$ is bipartite, where $\overline{G}$ represents the complement of $G$.
%The result on graphs $G$ such that $\overline{G}$ is bipartite implies \NP{}-hardness result on graphs with clique cover number equal to $2$.
On the other hand, we show that \maxPDS{} is polynomial-time solvable on graphs with $h\le2$. Finally, we consider graphs $G$ such that $h(\overline{G})\le 2$ and show that there exists a polynomial-time algorithm for finding a \PDS{} of maximum size in such graphs. This result implies polynomial-time complexity on graphs with $n$ vertices of minimum degree $n-3$, i.e.\ graphs $G$ such that $\Delta(\overline{G})\le 2$. For each result presented in this paper, we consider \connmaxPDS{} and explain how to extend it when we require connectivity.
\end{abstract}

\noindent\textbf{Keywords:}\\
communities; dense subgraph; parameterized complexity; complexity; algorithms

\section{Introduction}
\label{sec:introduction}
\subsection{Motivation and Known Results}
\label{subsec:motivationandknownresults}
In this paper, we study the so-called, \textsc{Maximum Proportionally Dense Subgraph (MaxPDS)} problem, introduced in \cite{MR4023158} and motivated by community detection in networks. 
Community detection has diverse applications, ranging from developing social media algorithms for identifying individuals with shared interests to discovering a set of functionally associated proteins in protein–protein interaction networks in the field of bioinformatics (for instance, see \cite{CommunityProtein}). 
By representing networks as graphs, with vertices representing individuals/elements and edges representing connections between them, we can analyze the network from a structural perspective. In 2013, Olsen (see~\cite{olsenGenView}) argued that the notion of \emph{proportionality} in the definition of community is both intuitive and supported by observations in real-world networks.  
The definition studied in this paper captures the proportion of neighbors for each vertex within the subgraph and, most importantly, compares it with the proportion of neighbors outside of the subgraph. In this way, important properties of a community are captured. Below, we give definitions of problems of a similar flavour in the literature that fail to capture \textit{natural} properties of community.

Two examples of such problems are, for instance, \textsc{Densest k-subgraph} and \textsc{$\gamma$-Quasi-Cliques}. \textsc{Densest k-Subgraph} (see~\cite{densestksub}) is the problem of finding a subgraph of maximum density containing exactly $k$ vertices, where the density of a subgraph $S\subseteq V$ is commonly defined as $\frac{\vert E(S)\vert }{\vert S \vert}$, where $E(S)$ is the set of edges of $G$ with both endpoints in $S$. 
%This condition may not reflect the \textit{natural} properties of community. 
Consider the example in which a vertex is part of the densest k-subgraph according to the definition, however, the vertex is adjacent to $10$ out of $20$ vertices in the subgraph but to $20$ out of $25$ vertices outside of the subgraph. Intuitively, considering the proportionality, it seems natural that, since the vertex is adjacent to a smaller proportion of vertices in the subgraph than outside, it should not be a part of the subgraph. 
Then, \textsc{$\gamma$-Quasi-Cliques}, for some given constant $\gamma\in\, ]0,1[$, is the problem of finding an induced subgraph of maximum size with edge density at least $\gamma$. Even when $\gamma$ is close to $1$, the example used earlier applies similarly to the $\gamma$-Quasi-Cliques. There are other related problems in the literature, see for instance \cite{ListOfSimilarProblems}, for which we can similarly deduce that they may not necessarily capture the intuitive characteristics of the notion of community. 

%For instance, the \emph{Satisfactory Partition} problem introduced in~\cite{SatPartition} consists in deciding if a given graph has a partition of its vertex set into two nonempty parts such that each vertex has at least as many neighbors in its part as in the other part. This problem and its variants have been intensively studied (see for instance~\cite{NoteOnSatPart,BalSatPart,SatPartParamComp,SatPartAlgApproach}). 

In this paper, we study the \textsc{Maximum Proportionally Dense Subgraph (MaxPDS)} problem, introduced in \cite{MR4023158}, which captures the important properties of a community that the definitions of the previously mentioned problems fail to capture. More specifically, it captures the proportion of neighbors for each vertex within the subgraph and, most importantly, compares it with the proportion of neighbors outside of the subgraph. Let us formally define this problem.

\begin{definition}(Bazgan et al.~\cite{MR4023158})\\
    Let $G=(V,E)$ be a graph and $S\subset V$, such that $2 \le \vert S \vert < \vert V \vert$. The induced subgraph $G[S]$ is a \emph{proportionally dense subgraph (PDS)} if for each $v\in S$:
    $$\frac{d_S(u)}{\vert S \vert -1} \ge \frac{d_{\overline{S}}(u)}{\vert \overline{S} \vert}\,,$$
    which is equivalent to $\displaystyle\frac{d_S(u)}{\vert S \vert -1} \ge \frac{d(u)}{\vert V \vert -1}$ and to $\displaystyle\frac{d(u)}{\vert V \vert -1} \ge \frac{d_{\overline{S}}(u)}{\vert \overline{S} \vert}$, where $\overline{S} = V \setminus S$.
\end{definition}

Notice that the proof of the first equivalence can be found in \cite{bazgan:k-comm}. The second equivalence follows from $\displaystyle\frac{d_S(u)}{\vert S \vert -1} \ge \frac{d(u)}{\vert V \vert -1}$ and the fact that $d_S(u)= d(u) - d_{\overline{S}}(u)$ and $\vert S\vert = |V|- |\overline{S}|$.

In other words, a \PDS{} is an induced subgraph of size at least two such that every vertex of the subgraph is adjacent to proportionally at least as many vertices in the subgraph as to vertices outside. Note that PDS can only exist in graph containing at least 3 vertices. Thus, in this paper we will always assume that all graphs have at least $3$ vertices. Then \maxPDS{} consists in finding a \PDS{} of maximum size in a given graph. If we require that a \PDS{} induces a connected subgraph, we refer to such problem as \connmaxPDS{}. 

Let us give more details on the background of the problem. In 2013, Olsen (see~\cite{olsenGenView}) introduced the so-called community structure problem which is defined as the problem of partitioning vertices of a given graph into at least two sets (called \textit{communities}), each containing at least two vertices, such that every vertex is adjacent to proportionally at least as many vertices in its own community as to vertices in any other community.
Notice that in the definition introduced by Olsen in~\cite{olsenGenView}, the exact number of communities is not given, i.e.\ the only restriction is that there are at least two communities.
In~\cite{bazgan:k-comm, Estivill1}, the notion of \textit{$k$-community structure} was first used to fix the number of communities to some integer $k\geq 2$, i.e.\ a $k$-community structure is a community structure containing exactly $k$ communities. This problem has been intensively studied as well (see for instance \cite{bazgan:k-comm,bazgan:firstInfiniteFamily,Estivill1,BaghirovaK-Community}. There exist other closely related problems in the literature. For instance, the \textsc{Satisfactory Partition} problem introduced in~\cite{SatPartition} consists in deciding if a given graph has a partition of its vertex set into two nonempty parts such that each vertex has at least as many neighbors in its part as in the other part. This problem and its variants have been intensively studied (see for instance~\cite{NoteOnSatPart,BalSatPart,SatPartParamComp,SatPartAlgApproach}). 

In the original definition of community structure introduced by Olsen, and the one of $k$-community structure introduced in~\cite{bazgan:k-comm, Estivill1}, each community must contain at least two vertices. One may relax this constraint and only require a community to be non-empty, i.e.\ to contain at least one vertex.
In \cite{BaghirovaK-Community}, the authors use this version as well and call such a partition a \textit{generalized $k$-community structure}.
Subsequently, based on the concept of $k$-community structures, the authors in \cite{MR4023158} introduced the notion of a  \textit{proportionally dense subgraph (PDS)} and studied the problem of finding a \PDS{} of maximum size. While closely linked to the definition of a $k$-community structure, this definition shifts from a partitioning problem to a maximization problem. Notice, that this is exactly the focus of the study in this paper. Then, in~\cite{bazgan:firstInfiniteFamily}, the authors introduce the notion of \textit{$2$-\PDS{} partitions}, which correspond exactly to generalized $2$-community structures. 

%In 2013, Olsen (see~\cite{olsenGenView}) argued that the notion of \emph{proportionality} in the definition of community structures is both intuitive and supported by observations in real-world networks and introduced the so-called community structure problem which is defined as a problem of partitioning vertices of a given graph into at least two sets (called the \textit{communities}), each containing at least two vertices, such that every vertex is adjacent to proportionally at least as many vertices in its own community as to vertices in any other community.
%Notice that in the definition introduced by Olsen in~\cite{olsenGenView}, the exact number of communities is not given, i.e.\ the only restriction is that there are at least two communities.
%In~\cite{bazgan:k-comm, Estivill1}, the notion of \textit{$k$-community structure} was first used to fix the number of communities to some integer $k\geq 2$, i.e.\ a $k$-community structure is a community structure containing exactly $k$ communities. This problem has been intensively studied as well. See, for instance, \cite{bazgan:k-comm,bazgan:firstInfiniteFamily,Estivill1,BaghirovaK-Community} to review the known results. 

In \cite{MR4023158}, the authors show that \maxPDS{} is APX-hard on split graphs and \NP{}-hard on bipartite graphs. In the same paper, the authors also show that deciding if a \PDS{} is inclusion-wise maximal is co-\NP{}-complete on bipartite graphs. Also, the authors present a polynomial-time $(2-\frac{2}{\Delta +1})$-approximation algorithm for the problem, where $\Delta$ is the maximum degree of the graph. Finally, the authors show that the \PDS{} of maximum size can be found in linear time in Hamiltonian cubic graphs if a Hamiltonian cycle is given in the input. Recently, the authors of~\cite{LCP} introduced a framework, which allows solving a special family of partitioning problems in polynomial time in classes of graphs of bounded clique-width. As an application, they showed that the problem of finding a \PDS{} and \connmaxPDS{} of maximum size can be solved in polynomial time in classes of graphs of bounded cliquewidth. 
%Notice that the algorithm presented in~\cite{LCP} is \XP{} parameterized by clique-width, which implies an \XP{} algorithm parameterized by treewidth. 
%On the other hand, the algorithm that we introduce in this paper is \FPT{} parameterized by $\Delta+\tw$. 

Finally, we would like to mention that from a theoretical perspective, it is interesting to study problems whose definition encompasses both global and local properties as in \maxPDS{}. This unique paradigm is not very common in the field of graph theory.

\subsection{Overview of the paper}
\label{subsec:ourcontribution}
In this subsection, we discuss our contribution to the study of \maxPDS{} and \connmaxPDS{} and introduce the structure of the paper.  

First, in \cref{sec:preliminaries}, we formally introduce \maxPDS{}, \connmaxPDS{} and give some necessary definitions. As mentioned above, in \cite{MR4023158} the authors showed that \maxPDS{} can be solved in linear time on Hamiltonian cubic graphs if a Hamiltonian cycle is part of the input. However, the complexity of the problem on cubic graphs remains an open question. In this paper, we study the complexity of \maxPDS{} parameterized by $\Delta$, where $\Delta$ is the maximum degree of a graph. In \cref{subsec:d and delta param}, we show that \maxPDS{} is para-\NP{}-hard parameterized by $\Delta$ and $\degen{}$, where $\degen{}$ is degeneracy of a graph. More specifically, we show that \maxPDS{} is \NP{}-hard on graphs with $\Delta=4$ and $\degen =2$, which implies \NP{}-hardness when $h=4$, where $h$ is the $h$-index of a graph. Then, in \cref{subsec:NPhardondense graphs}, we show that \maxPDS{} is \NP{}-hard when restricted to dense graphs, more specifically graphs $G$, such that $\Delta(\overline{G})\le 6$. In the same subsection, we show that on graphs $G$, such that $\degen(\overline{G})=2$ and $\overline{G}$ is bipartite, \maxPDS{} remains \NP{}-hard. Then, we note that this result implies para-\NP{}-hardness result parameterized by the clique cover number of a given graph. For each result, we consider \connmaxPDS{} and explain how the result can be extended when we require connectivity.
%Moreover, notice that in \cite{MR4023158} the authors showed that deciding if a \PDS is inclusion-wise maximal is co-\NP-complete, even on bipartite graphs.

In \cref{sec:polynsolvablecases} we present polynomial-time solvable cases. More specifically, in \cref{subsec:h=2} we present a polynomial-time algorithm for solving \maxPDS{} on graphs with $h \leq 2$. Finally, in \cref{subsec:graphsofdegreeatleastN-3}, we show how to find a \PDS{} of maximum size in graphs $G$ such that $h(\overline{G})\le 2$. This result implies polynomial-time complexity on graphs with $n$ vertices of minimum degree at least $(n-3)$. We extend each aforementioned result on \connmaxPDS{}.

In \cref{table:results}, we summarize the parameterized complexity results for \maxPDS{} and \connmaxPDS{} that we present in this paper.
Notice that since $\degen \leq h \leq \Delta$, if there exists a polynomial-time algorithm for a parameter for solving \maxPDS{} on a corresponding entry in the table, it implies polynomial-time complexity for the entries above and on the right. For example, a polynomial time algorithm for $h \leq 2$ implies polynomial-time complexity for $h \leq 1$ and $\Delta\le 2$. Conversely, \NP{}-hardness extends to entries below and to the left.

\begin{table}[!ht]
\begin{center}
\begin{NiceTabular}{|c|ccc|}
\CodeBefore
\cellcolor{blue!15}{2-2}
\rectanglecolor{blue!15}{2-3}{3-4}
\rectanglecolor{red!15}{3-2}{5-2}
\rectanglecolor{red!15}{5-2}{5-4}
\Body
\hline
{\color{white} \huge L} & $\degen(G)$ & $h(G)$ & $\Delta(G)$ \\
\hline
1 & \textsf{P} & ~ & ~ \\
\cline{1-2}
2 & \NP{}-h & \textsf{P} & \\
\cline{1-1}\cline{3-4}
3 & {\color{red!15}$\vdots$} & $\underset{~}{?}$ & $\underset{~}{?}$ \\
\cline{1-1}\cline{3-4}
4 & ~ & ~ & \NP{}-h \\
\hline
\CodeAfter
\tikz \draw[thick] (3-|3) -- (5-|3);
\tikz \draw[thick] (4-|4) -- (5-|4);
\tikz \draw[thick] (1-|3) -- (2-|3);
\tikz \draw[thick] (1-|4) -- (2-|4);
\end{NiceTabular}
\hspace{2cm}
\begin{NiceTabular}{|c|ccc|}
\CodeBefore
\cellcolor{blue!15}{2-2}
\rectanglecolor{blue!15}{2-3}{3-4}
\rectanglecolor{red!15}{3-2}{5-2}
\rectanglecolor{red!15}{5-2}{5-4}
\Body
\hline
{\color{white} \huge L} & $\degen(\overline{G})$ & $h(\overline{G})$ & $\Delta(\overline{G})$ \\
\hline
1 & \textsf{P} & ~ & ~ \\
\cline{1-2}
2 & \NP{}-h & \textsf{P} & \\
\cline{3-4}
\vdots & ~ & $\underset{~}{?}$ & $\underset{~}{?}$ \\
\cline{3-4}
6 & ~ & ~ & \NP{}-h \\
\hline
\CodeAfter
\tikz \draw[thick] (3-|3) -- (5-|3);
\tikz \draw[thick] (4-|4) -- (5-|4);
\tikz \draw[thick] (1-|3) -- (2-|3);
\tikz \draw[thick] (1-|4) -- (2-|4);
\end{NiceTabular}
\end{center}
\caption{Summary of the complexity of \maxPDS{} and \connmaxPDS{} when parameterized by $\degen{}$, $h$ and $\Delta$ of the graph $G$ and its complement $\overline{G}$. Blue indicates polynomial-time solvability, red indicates \NP{}-hardness, and a question mark means that the complexity is unknown.}
\label{table:results}
\end{table}

\section{Preliminaries}
\label{sec:preliminaries}
\subsection{Relevant notions}
In this subsection, we define the notions used in the paper and state the assumptions made throughout the paper.

In this paper, all graphs are simple and undirected and contain at least three vertices. Let $G=(V,E)$ be a graph. A complement of a graph $G = (V,E)$, denoted by $\overline{G}$, is the graph with vertex set $V$ such that two distinct vertices are adjacent if and only if they are not adjacent in $G$. A \emph{partition} of a vertex set $V$ of a graph into $k$ sets, is a family of subsets $V_1',\ldots,V_k'$ such that $V_i'\neq \emptyset$, for all $i\in \{1,\ldots,k\}$, and every vertex is included in exactly one subset $V_i'$, for some $i\in \{1,\ldots,k\}$.
The \emph{neighborhood} of a vertex $v\in V$ is denoted by $N(v)$, its \emph{closed neighborhood} by $N[v] := N(v) \cup \{v\}$ and its \emph{degree} by $d(v):=\vert N(v)\vert$. The maximum degree of a graph $G$ is denoted by $\Delta (G)$. If the graph $G$ is clear from the context, we simply write $\Delta$. The \emph{neighborhood} of $v\in V$ \emph{in $V'\subseteq V$}, denoted by $N_{V'}(v)$, is the set of vertices that are both in $V'$ and adjacent to $v$, i.e.\ $N_{V'}(v):=N(v)\cap V'$.

The \textit{subgraph} of $G=(V,E)$ \textit{induced by $S\subseteq V$} is defined as $G[S]:=(S,\{uv \in E: u\in S \text{ and } v\in S\})$. Also, for $S \subseteq V$, we denote by $\overline{S}$ the complement of $S$ in $V$, i.e.\ $\overline{S} := V \setminus S$. We denote by $\intInterval{a,b}$, with $a,b,t \in \mathbb{N}$, the set $\{a,a+1, \ldots, b\}$.
	
A \emph{tree} is a connected, acyclic graph and a \emph{forest} is an acyclic graph. Notice, that vertices of degree $1$ are called \emph{leaves}. A \emph{star} $S_n$ is a tree on $n+1$ vertices, where exactly one vertex has degree $n$ (called the \textit{center}) and all of its $n$ neighbors are leaves.
%A \emph{rooted tree} is a tree in which one vertex has been designated to be the root. We denote a tree $T=(V,E)$ rooted at vertex $r$ by $(T,r)$. We denote $T_v$ the subtree of $(T,r)$ rooted at $v$ for every $v \in V(T)$.
%In a rooted tree, the \emph{parent} of a vertex $v$ is the vertex adjacent to $v$ on the path to the root. Every vertex has a unique parent (notice, that root is the only vertex that has no parent). A \emph{child} of a vertex $v$ is a vertex of which $v$ is the parent. An \emph{internal vertex} is a vertex that is not a leaf.
%An \emph{ascendant} of a vertex $v$ is any vertex that is either the parent of $v$ or is (recursively) an ascendant of a parent of $v$. A \emph{descendant} of a vertex $v$ is any vertex that is either a child of $v$ or is (recursively) a descendant of a child of v. 

Hereafter, we define the graph parameters mentioned in this paper. Given a graph $G$, we define the \emph{degeneracy} of $G$ as the minimum integer $\degen{(G)}$ such that every subgraph of $G$ has a minimal degree at most $\degen{(G)}$. If the graph is clear from the context, we simply write $\degen{}$. Equivalently, a graph is $k$-degenerate if its vertices can be successively deleted, according to an order, so that when deleted, each vertex has degree at most $k$. Such order is called a $k$-\emph{elimination order}. The degeneracy of a graph is the smallest $k$ such that it is $k$-degenerate.

Given a graph $G$, we define the \emph{h-index} of $G$ as the largest integer $h$ such that there are at least $h(G)$ vertices of degree at least $h(G)$. Similarly, If the graph is clear from the context, we simply write $h$.

Now, let us formally define the main concept of our paper, namely the notion of \emph{Proportionally Dense Subgraph (PDS)} (as used in~\cite{PDS}).
\begin{definition}
    Let $G=(V,E)$ be a graph and $S \subset V$, such that $2 \le \vert S \vert < \vert V \vert$. Let $\overline{S} = V \setminus S$. We say that $S$ is a \emph{Proportionally Dense Subgraph (PDS)} of $G$ if for each vertex $v\in S$, one of the following equivalent inequalities hold
    \begin{equation}
    \label{prop:PDS}
        (a)\,\,\frac{d_S(u)}{\vert S \vert - 1} \ge \frac{d_{\overline{S}}(u)}{\vert \overline{S} \vert}\,, \quad \quad  (b)\,\,\frac{d_S(u)}{\vert S \vert - 1} \ge \frac{d(u)}{ \vert V \vert -1}\,,\quad \quad (c)\,\,\frac{d(u)}{|V|-1} \geq \frac{d_{\overline{S}(u)}}{|\overline{S}|}.
    \end{equation}
\end{definition}

A \emph{connected PDS} is a \PDS in which the vertices induce a connected subgraph.

We say that a vertex $v \in S$ is \emph{satisfied with respect to $S$} if it satisfies \eqref{prop:PDS}. If the \PDS{} $S$ is clear from the context, we may simply say that a vertex $v \in S$ is \emph{satisfied}. Let $G=(V,E)$ be a graph and let $C$ be a connected component of $G$. Then, when we say that $C \subseteq S$ when every vertex of $G[C]$ is in $S$.

\maxPDS{} is defined as follows:

\noindent\rule[0.5ex]{\linewidth}{1pt}
{\maxPDS{}:}\\
\noindent\rule[0.5ex]{\linewidth}{0.5pt}
{\em Input:}\hspace{0.47cm} A graph $G$.\\
{\em Output:}\hspace{0.2cm} A \PDS{} of a maximum size in $G$.\\
\noindent\rule[0.5ex]{\linewidth}{1pt}\\

The corresponding decision problem is, given a graph $G=(V,E)$ and an integer $k$, to determine whether $G$ contains a \PDS{} of size at least $k$.

Below we define \connmaxPDS{}. 

\noindent\rule[0.5ex]{\linewidth}{1pt}
{\connmaxPDS{}:}\\
\noindent\rule[0.5ex]{\linewidth}{0.5pt}
{\em Input:}\hspace{0.47cm} A graph $G$.\\
{\em Output:}\hspace{0.2cm} A connected \PDS{} of a maximum size in $G$.\\
\noindent\rule[0.5ex]{\linewidth}{1pt}\\

%Let us now formally define a related problem, namely \kcommunity{} (as used in~\cite{bazgan:k-comm}). In \cref{subsec:treewidthPDS}, we explain how to adjust the \FPT{} algorithm parameterized by $\tw+\Delta$ to solve  \kcommunity{}.

%\begin{definition}
%    A (connected) \emph{$k$-community structure} of a graph $G=(V,E)$ is a (connected) $k$-partition $\Pi = \{C_1,\ldots,C_k\}$ of the vertex set $V$ such that:
%	\begin{itemize}
%		\item $k\ge 2$,
%		\item for all $i\in \intInterval{k}$, $\vert C_i \vert \ge 2$,
%		\item for all $i \in \intInterval{k}$, all $v \in C_i$ and all $j\in \intInterval{k}$, $j\neq i$, the following property holds:
%		      \begin{equation}
%			      \label{prop:k-comm2}
%			      \frac{\vert N_{C_i} (v)\vert}{\vert C_i \vert -1} \ge \frac{\vert N_{C_j} (v)\vert}{\vert C_j \vert}\,.
%		      \end{equation}
%	\end{itemize}
%\end{definition}
%Then, the \kcommunity{} problem is defined as a problem of determining whether, in a given graph $G$, there exists a $k$-community structure.

The authors in \cite{MR4023158} showed the following two results, which we frequently refer to in this paper. 

\begin{theorem}\cite[Theorem 4]{MR4023158}
    \label{lowerboundS}
    For any given graph $G=(V,E)$, a proportionally dense subgraph of size $\lceil \frac{|V|}{2} \rceil$ or $\lceil \frac{|V|}{2} \rceil +1$ can be found in linear time.  
\end{theorem}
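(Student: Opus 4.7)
Let $k = \lceil |V|/2 \rceil$. The plan is to build $S \subseteq V$ with $|S| \in \{k, k+1\}$ satisfying the PDS inequality, via a local exchange procedure. I first observe a useful reduction: when $|S| = k$ one has $|S|-1 \leq |\bar{S}|$ (with equality when $|V|$ is odd), so
\[
\frac{d_S(v)}{|S|-1} \;\geq\; \frac{d_S(v)}{|\bar{S}|} \;\geq\; \frac{d_{\bar{S}}(v)}{|\bar{S}|},
\]
meaning the simpler condition $d_S(v) \geq d_{\bar{S}}(v)$ for every $v \in S$ already implies the PDS inequality. My primary goal will therefore be to find $S$ of size exactly $k$ in which every vertex has at least as many neighbors inside $S$ as outside; only when this attempt meets a specific obstruction will I enlarge $S$ to size $k+1$ and verify the PDS inequality directly.

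To find such an $S$, I initialize an arbitrary $k$-subset and perform local search with the potential $\Phi(S) := e(S) + e(\bar{S})$ counting the non-crossing edges. A direct computation shows that swapping $v \in S$ with $u \in \bar{S}$ changes $\Phi$ by $\bigl(d_{\bar{S}}(v)-d_S(v)\bigr) + \bigl(d_S(u)-d_{\bar{S}}(u)\bigr) - 2\,[uv\in E]$. Hence, whenever some \emph{unsatisfied} vertex $v \in S$ (i.e.\ with $d_{\bar{S}}(v) > d_S(v)$) admits a non-neighbor $u \in \bar{S}\setminus N(v)$ with $d_S(u) \geq d_{\bar{S}}(u)$, swapping $v$ and $u$ strictly increases $\Phi$ by at least $1$. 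Since $0 \leq \Phi \leq |E|$, the procedure terminates.

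The delicate step is the termination analysis. If every $v \in S$ is satisfied then $S$ is the desired PDS of size $k$. Otherwise some $v \in S$ remains unsatisfied but every $u \in \bar{S}$ with $d_S(u) \geq d_{\bar{S}}(u)$ is a neighbor of $v$; the structural constraints forced by the stopping condition should then let me add a carefully chosen $u \in N(v) \cap \bar{S}$ to $S$, producing a set of size $k+1$ whose PDS inequality can be verified directly using the stricter form of the inequality at this size (one can already sanity-check this in the star $K_{1,n-1}$ with $n$ even, where the only PDS of the required size is the center together with any $n/2$ leaves). The step I expect to be the main obstacle is matching the \emph{linear-time} bound, since a naive implementation of the above local search can cost $\Theta(|V| \cdot |E|)$; I plan to maintain $d_S(v)$ and $d_{\bar{S}}(v)$ for every vertex and bucket the vertices by the surplus $d_S(v)-d_{\bar{S}}(v)$, updating only the entries in $N(u) \cup N(v)$ after each swap, so that the total update work telescopes to $O\bigl(\sum_v d(v)\bigr) = O(|V|+|E|)$.
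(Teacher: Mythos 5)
The paper does not prove this statement; it imports it verbatim as \cite[Theorem 4]{MR4023158}, so your argument can only be judged on its own terms. The reduction of the size-$\lceil |V|/2\rceil$ case to the condition $d_S(v)\geq d_{\overline{S}}(v)$ is correct (since $|S|-1\leq|\overline{S}|$ at that size), and so is your formula for the change in $\Phi$, but the step you yourself flag as delicate --- the fallback to size $k+1$ --- is not a missing detail; it is a step that fails. When the procedure halts because $U=\{u\in\overline{S}: d_S(u)\geq d_{\overline{S}}(u)\}$ is empty, every candidate $u$ you could add has $d_S(u)<d_{\overline{S}}(u)$; but at size $k+1$ we have $|\overline{S}|\leq |S|-2$, so the PDS inequality for the added vertex requires $d_S(u)\geq\frac{|S|-1}{|\overline{S}|}\,d_{\overline{S}}(u)>d_{\overline{S}}(u)$, which is impossible. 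Concretely, take $n=7$ with a triangle $abc$, a vertex $w$ adjacent to $a,b,c,x$, and an edge $yz$, and initialize $S=\{w,x,y,z\}$: the only unsatisfied vertex is $w$, all of $\overline{S}=\{a,b,c\}$ lies in $N(w)$ and has $d_S<d_{\overline{S}}$, so no swap is available, and adding any of $a,b,c$ yields a vertex that would have to satisfy $\frac{1}{4}\geq\frac{2}{2}$. A PDS of size $4$ does exist here ($\{w,a,b,c\}$), but it is not reachable by your $\Phi$-increasing moves from this start. Even when $U\neq\emptyset$, enlarging $S$ strictly tightens the inequality for every old member of $S$ not adjacent to the new vertex, and one insertion cannot repair two unsatisfied vertices with disjoint neighbourhoods in $\overline{S}$; none of this is addressed.

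Two smaller problems. The linear-time claim does not follow from your bucketing: the potential only bounds the number of swaps by $|E|$, each swap costs $\Theta(d(u)+d(v))$ update work, and nothing prevents a high-degree vertex from being swapped repeatedly, so the bound you actually get is $O(|E|\cdot\Delta)$ rather than $O(|V|+|E|)$. Also, your sanity check on the star $K_{1,n-1}$ with $n$ even is off: the centre together with $n/2-1$ leaves is already a PDS of size $n/2$ (the inequality holds with equality precisely because $|S|-1<|\overline{S}|$). That example in fact shows that the strong condition $d_S(v)\geq d_{\overline{S}}(v)$ can be unachievable at size $\lceil n/2\rceil$ even when a PDS of that size exists, so your local search is guaranteed to fall through to the broken fallback on some inputs.
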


%The following lemma gives an upper bound on the size of a \PDS{} depending on the maximum degree of a vertex in a given graph. 

%\begin{lemma}\cite[Lemma 8]{MR4023158}
%\label{upperboundS}
%    Let $G=(V,E)$ be a graph and $S\subseteq V$ such that $G[S]$ is a \PDS{}. Then, $|S|\le \lfloor \frac{(\Delta - 1)|V| + 1}{\Delta} \rfloor$.
%\end{lemma}

\subsection{Properties}
\label{subsec:properties}

In this subsection, we derive properties of PDSs that we use later in the paper. 
Let $G=(V,E)$ be a graph and let $S$ be a \PDS{} in $G$. 

Let us note the following. Let $v\in V$ be such that $d(v)=0$. Then, it is easy to see that $S= V\setminus\{v\}$ forms a \PDS{} of a maximum size in $G$. Hence, as of now, we assume that $d(v)\ge 1$, for all $v\in V$. 

%Below, we give an upper bound on the size of $S$ in order to ensure that every vertex of $G$ is satisfied with respect to $S$.

%\begin{lemma}
%    \label{lemma:sizeofS}
%    Let $G=(V,E)$ be a graph and let $S$ be a \PDS{} in $G$. Then, for every $u \in S$ we have
%    $$\vert S \vert \le \frac{d_S(u)}{d(u)} (\vert V \vert -1) +1\,.$$
%\end{lemma}
%\begin{proof}
%    Since $S$ is a \PDS{} in $G$, we know that every vertex of $G$ is satisfied with respect to $S$ and hence for any $u \in V$ we have
%    \begin{align*}
%        \frac{d(u)}{\vert V \vert -1} \le \frac{d_S(u)}{\vert S \vert -1} \,,
%    \end{align*} 
%    which is equivalent to 
%    \begin{align*}
%        d(u)\vert S\vert \le d_S(u) (\vert V \vert -1) + d(u).
%    \end{align*} 
%    Hence, we have $\displaystyle \vert S \vert \le \frac{d_S(u)}{d(u)} (\vert V \vert -1) +1$ and the statement follows.
%\end{proof}

In the lemma below, we show that if a \PDS{} $S$ in $G$ is sufficiently large and the degree of some vertex $u$ is relatively small, then the inclusion of vertex $u$ in $S$ implies the inclusion of its neighborhood $N(u)$ in $S$. 

\begin{lemma}\label{lemma:one_to_all}
Let $G=(V,E)$ be a graph and $S$ a \PDS{} in $G$ and let $u \in S$. Then if $\displaystyle d(u) < \frac{|V|-1}{|V|-|S|}$, or equivalently if $\displaystyle \frac{d(u)-1}{|S|-1} < \frac{1}{|V|-|S|}$, then we have $N(u) \subseteq S$.
\end{lemma}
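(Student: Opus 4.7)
The plan is to argue by contrapositive: assume that $N(u) \not\subseteq S$ and deduce $d(u) \geq \frac{|V|-1}{|V|-|S|}$. The key tool is inequality (c) of the PDS definition, which directly bounds $d(u)$ below in terms of $d_{\overline{S}}(u)$, and therefore becomes tight precisely when $u$ has a single neighbor outside $S$.

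First, I would quickly verify that the two inequalities in the hypothesis are equivalent. Multiplying $\frac{d(u)-1}{|S|-1} < \frac{1}{|V|-|S|}$ by the positive quantity $(|S|-1)(|V|-|S|)$ yields $(d(u)-1)(|V|-|S|) < |S|-1$, i.e., $d(u)(|V|-|S|) < (|S|-1) + (|V|-|S|) = |V|-1$, which is the first condition.

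For the main claim, suppose toward a contradiction that $N(u) \not\subseteq S$, so $d_{\overline{S}}(u) \geq 1$. Since $S$ is a PDS, inequality (c) applied at $u$ gives
$$\frac{d(u)}{|V|-1} \;\geq\; \frac{d_{\overline{S}}(u)}{|\overline{S}|} \;\geq\; \frac{1}{|V|-|S|}\,,$$
where we used $|\overline{S}| = |V|-|S|$. Rearranging, $d(u) \geq \frac{|V|-1}{|V|-|S|}$, contradicting the hypothesis. Therefore $d_{\overline{S}}(u) = 0$ and $N(u) \subseteq S$.

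There is no real obstacle here; the only delicate point is choosing the right formulation of the PDS inequality, namely (c) rather than (a) or (b), since (c) is the one that directly relates $d(u)$ to $d_{\overline{S}}(u)$ without requiring knowledge of $d_S(u)$. The equivalence of the two hypothesis forms is a routine cross-multiplication.
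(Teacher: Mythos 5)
Your proof is correct and uses essentially the same argument as the paper: both apply inequality (c) of the PDS definition to deduce $d_{\overline{S}}(u) < 1$ (you phrase it contrapositively, the paper directly), forcing $d_{\overline{S}}(u)=0$. Your verification of the equivalence of the two hypothesis forms is a small bonus the paper omits, but the core reasoning is identical.
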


\begin{proof}
Since, by assumption, $S$ is \PDS{} in $G$, we know that every $v\in V$ is satisfied with respect to $S$. Let $u\in V$ be such that $ d(u) < \frac{|V|-1}{|V|-|S|}$. Since $u$ is satisfied, we have
\begin{align*}
    & \frac{d(u)}{|V|-1} \geq \frac{d_{\overline{S}}(u)}{|V|-|S|} 
    \iff  d_{\overline{S}}(u) \leq \frac{|V|-|S|}{|V|-1} d(u) < 1. 
\end{align*}
Hence, $d_{\overline{S}}(u)=0$ and the lemma follows.
\end{proof}

The lemma introduced above implies the following.

\begin{corollary}\label{cor:one_to_all}
Let $G=(V,E)$ be a graph, $S$ be a \PDS{} in $G$, $X \subseteq V$ be such that $G[X]$ is connected and for all $u \in X$, $d(u) < \frac{|V|-1}{|V|-|S|}$ for all $u \in X$. Then, either $N[X] \subseteq S$ or $N[X] \subseteq \overline{S}$. %Also, if $X \subseteq S$ and $v \in V \setminus X$ is adjacent to $u \in X$, then $v \in S$.
\end{corollary}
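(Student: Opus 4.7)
The plan is to iterate Lemma~\ref{lemma:one_to_all} along the edges of the connected graph $G[X]$ in order to force every vertex of $X$ onto a single side of the partition $(S,\overline{S})$, and then to extend the membership conclusion from $X$ to its closed neighborhood $N[X]$.

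The first step is to note that the degree hypothesis in the corollary is exactly the hypothesis of Lemma~\ref{lemma:one_to_all}, so at every vertex $u \in X$ that happens to lie in $S$ the lemma yields $N(u) \subseteq S$. Combined with the connectivity of $G[X]$, this rules out $X$ meeting both $S$ and $\overline{S}$: otherwise connectivity produces an edge $uw$ in $G[X]$ with $u \in X \cap S$ and $w \in X \cap \overline{S}$, but then $w \in N(u) \subseteq S$ is a contradiction. Hence either $X \subseteq S$ or $X \subseteq \overline{S}$, and this dichotomy is really a propagation of side-membership along paths of $G[X]$ using Lemma~\ref{lemma:one_to_all} at each step.

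In the case $X \subseteq S$, I would then apply Lemma~\ref{lemma:one_to_all} at every $u \in X$ to get $N(u) \subseteq S$, whence $N[X] = X \cup \bigcup_{u \in X} N(u) \subseteq S$, which is the first conclusion. In the case $X \subseteq \overline{S}$, I would argue analogously that no neighbour of $X$ can lie in $S$: assuming for contradiction that some $v \in N(X) \cap S$ is adjacent to $u \in X \subseteq \overline{S}$, I would combine the PDS satisfaction inequality of $v$ with the low-degree bound on $u$ to push through a contradiction, concluding $N[X] \subseteq \overline{S}$.

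The step I expect to be the main obstacle is this second case. Lemma~\ref{lemma:one_to_all} cleanly propagates membership in $S$ because the satisfaction condition $\frac{d(v)}{|V|-1} \ge \frac{d_{\overline{S}}(v)}{|V|-|S|}$ only constrains vertices of $S$; there is no direct mechanical analogue for $\overline{S}$. Handling the second case therefore cannot proceed by the clean path-tracing induction that settles the first case, and requires a more delicate argument combining the low-degree hypothesis on the offending neighbour $u \in X$ with the PDS inequality applied to a putative $v \in N(X) \cap S$.
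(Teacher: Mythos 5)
Your first case is exactly the paper's proof: propagate \cref{lemma:one_to_all} along the edges of the connected graph $G[X]$ to conclude that whenever $X$ meets $S$ we get $X \subseteq S$ and then $N[X] \subseteq S$. That part is correct and is, in fact, the entirety of the argument the paper gives.

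The second case is a genuine gap, and although you correctly sensed where the trouble lies, you drew the wrong conclusion: the contradiction you promise to ``push through'' does not exist, because the implication $X \subseteq \overline{S} \Rightarrow N[X] \subseteq \overline{S}$ is false. Concretely, let $G$ be a $K_5$ on $\{v_1,\dots,v_5\}$ together with a pendant vertex $u$ adjacent only to $v_1$, and let $S=\{v_1,\dots,v_5\}$. Then $S$ is a \PDS{} (for $v_1$ one checks $\tfrac{4}{4}\ge\tfrac{1}{1}$, and the remaining $v_i$ have no neighbor outside $S$), and $X=\{u\}$ is connected and satisfies the degree hypothesis, since $d(u)=1<\tfrac{|V|-1}{|V|-|S|}=5$; yet $N[X]=\{u,v_1\}$ meets both $S$ and $\overline{S}$. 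The \PDS{} inequality constrains only vertices of $S$, so a low-degree vertex of $\overline{S}$ may freely have neighbors in $S$, and no combination of the satisfaction inequality at $v$ with the degree bound at $u$ can exclude this. The statement of the corollary is therefore an overstatement: the paper's own proof silently establishes only the implication ``if $X$ intersects $S$ then $N[X]\subseteq S$'', and every later application (including the ``contraposition'' invoked in the proof of \cref{lemma:h_22}) uses only this implication or its contrapositive $N[X]\not\subseteq S \Rightarrow X\subseteq\overline{S}$. The right move is to weaken the second disjunct to $X\subseteq\overline{S}$, not to search for a more delicate argument for the disjunct as written.
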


\begin{proof}
If there exists $u \in X$ such that $u \in S$, then using \cref{lemma:one_to_all} it holds that all neighbors of $u$ are also in $S$. Since $G[X]$ is connected we know that by iterating this argument we will obtain $X \subseteq S$ and all neighbors of elements of $X$ are also in $S$.
\end{proof}

As mentioned earlier, a \PDS{} of size equal to or greater than $\frac{|V|}{2}$ can be found in any graph $G=(V,E)$ in linear time (see \cref{lowerboundS}). Hence, in this paper, we focus on finding a \PDS{} of size of at least $\frac{|V|}{2}+1$. Then, \cref{lemma:one_to_all} implies that in this case, vertices of degree $2$ in a \PDS{} are satisfied if and only if both of its neighbors are contained in the \PDS{} as well. 

Furthermore, as noted by Pontoizeau in \cite[Page 102]{pontoizeau}, there are examples of graphs where communities of a maximum size are not connected. More specifically, in Figure 5.3, the maximum community is of size $n-3$, while a connected community of a maximum size contains at most $\frac{n}{2}$ vertices, where $n$ is the number of vertices in the graph. This indicates that the sizes of a \PDS{} and a connected \PDS{} differ significantly in certain graphs, more specifically, by a ratio of nearly $\frac{1}{2}$. Thus, in this paper, we also study \connmaxPDS{} as well and explain how to extend each result presented in this paper when connectivity is required.

\section{Parameterized hardness}
\label{sec:parametrizedcomplex}

In this section, we introduce the following results. In \cref{subsec:d and delta param}, we show that \maxPDS{} is para-\NP{}-hard parameterized by the $\Delta$ and $\degen{}$. More specifically, we show that \maxPDS{} is \NP{}-hard on graphs with $\Delta=4$ and $\degen =2$, which implies \NP{}-hardness result on $h=4$. Further, in \cref{subsec:NPhardondense graphs}, we show that \maxPDS{} is \NP{}-hard when restricted to dense graphs. More specifically, we prove that \maxPDS{} is \NP{}-hard on graphs $G$ such that $\Delta(\overline{G}) = 6$. In the same subsection, we show that on graphs $G$, such that $\degen(\overline{G})=2$ and $\overline{G}$ is bipartite, \maxPDS{} remains \NP{}-hard.
 
\subsection{Maximum degree and degeneracy}
\label{subsec:d and delta param}
In this subsection, we establish the para-\NP{}-hardness results of \maxPDS{}, parameterized by the degeneracy and maximum degree. More specifically, we prove that \maxPDS{} is \NP{}-hard on 2-degenerate graphs where each vertex has a degree at most $4$.  Before we move on to the main results of this subsection, we introduce some preliminary results. 

\begin{lemma}
\label{variablesNKAB}
    Given a graph $G=(V,E)$, with $\vert V \vert =n$ and $\vert E \vert =m$ (notice that since graph is simple, we have $m\le n^2$), without isolated vertices and a positive integer $k$ such that $3\le k <n$, there exist integers $A,B,N$, and $k'$ such that:
  \begin{AutoMultiColItemize}
        \item[1.]\label{item1'} $ A>2m$,
        \item[2.]\label{item2'} $B>n$,
        %\item\label{item3'} $A+B>k'$,
        \item[3.]\label{item4'} $k'+A>N$,
        \item[4.]\label{item5'} $\frac{2}{k'-1} < \frac{1}{N-k'}$,
        \item[5.]\label{item6'} $\frac{3}{k'-1} \ge \frac{1}{N-k'}$,
        \item[6.]\label{item7'} $\frac{3}{k'} < \frac{1}{N-k'-1}$,
        \item[7.]\label{item8'} $k' = A + m + (n-k)$,
        \item[8.]\label{item9'} $N= A + B + m + n$,
        \item[9.]\label{item10'} $N$ polynomial in $n$.
  \end{AutoMultiColItemize}
\end{lemma}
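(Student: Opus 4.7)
The plan is to observe that the nine conditions, although numerous, are linear constraints in just four integer variables, and that conditions 7 and 8 are really definitions: they express $A$ as a function of $k'$, and $N$ as a function of $A$ and $B$. In particular, by substituting condition 7 into condition 8 we obtain
\[
N - k' \;=\; (A+B+m+n) - (A+m+n-k) \;=\; B+k.
\]
Thus once $B$ and $k'$ are chosen, $A$ and $N$ are forced, and the quantity $N-k'$ appearing in conditions 4, 5, 6 is simply $B+k$. I will therefore treat $B$ and $k'$ as the two ``free'' variables and pick them so that all the remaining inequalities go through.

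Next, I would analyse conditions 4, 5, 6. Using $N - k' = B+k$, they become, respectively, $k' > 2(B+k)+1$, $k' \le 3(B+k)+1$, and $k' > 3(B+k) - 3$. So these three together pin $k'$ down to one of the four integers $3(B+k)-2,\, 3(B+k)-1,\, 3(B+k),\, 3(B+k)+1$, provided $B+k$ is not absurdly small (which is free to arrange since $k\ge 3$). I would choose the largest, $k' := 3(B+k)+1 = 3B+3k+1$; condition 5 then holds with equality, and 4 and 6 hold by a comfortable margin.

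It remains to choose $B$. Setting
\[
B := m+n+1,
\]
condition 2 is immediate. Substituting into the above formula for $k'$ gives $k' = 3m+3n+3k+4$, and then condition 7 forces
\[
A \;=\; k' - m - (n-k) \;=\; 2m + 2n + 4k + 4,
\]
which is visibly larger than $2m$, so condition 1 holds. Condition 3 reads $A > B+k$, i.e.\ $2m+2n+4k+4 > m+n+1+k$, which is clear. Finally, condition 8 gives $N = 4m+4n+4k+5$, and since $m \le \binom{n}{2}$ and $k < n$, this is polynomial in $n$, settling condition 9.

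The only genuine tension in the construction is between condition 1 (which demands $A$ large relative to $m$) and conditions 4--6 (which tie $k'$, and hence $A = k' - m - (n-k)$, tightly to $B+k$). I expect this to be the main thing to get right: one has to take $B$ at least on the order of $m$ so that $k' = 3B+3k+1$ is large enough for $A = k' - m - (n-k)$ to exceed $2m$. The choice $B = m+n+1$ above is the cleanest way to achieve this while keeping $N$ polynomial in $n$. Once this balance is struck, the remaining verifications are purely mechanical substitutions.
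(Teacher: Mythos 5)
Your construction is correct: every one of the nine conditions checks out with $B=m+n+1$, $k'=3B+3k+1$, $A=k'-m-(n-k)$, $N=A+B+m+n$, and your choice makes condition 5 hold with equality, which is exactly the relation $k'=\frac{3N+1}{4}$ that the paper's proof is built around. This is essentially the same construction as in the paper, just parameterized from the other end ($B$ and $k'$ chosen freely, $A$ and $N$ derived), which has the minor advantage of avoiding the paper's divisibility condition on $3N+1$.
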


\begin{proof}
    Let $C,N,k',A,B$ be integers such that:
    \begin{itemize}
        \item $C = \max\{4(n+3m), 8n+1, 2n+2m\}$
        \item $N$ is the smallest integer such that $N> C$ and $3N+1$ is a multiple of $4$.
        \item $k' = \frac{3N+1}{4}$,
        \item $A = k' - m - (n-k)$,
        \item $B = \frac{N-1}{4} - k$.
    \end{itemize}

    \begin{enumerate}
        \item[1.] Since $N > 4(n+3m)$, we have $A \ge \frac{3}{4}N -m-n > 2m$.
        
        \item[2.] Since $N > 8n+1$, we have $B = \frac{N-1}{4} - k > 2n-k \geq n$.
        
        \item[3.] Since $N>2n+2m$ we have $k'+A = 2k'-m-(n-k) \geq \frac{3}{2}N-m-n > N$.

        \item[4.] We have $\frac{2}{k'-1} < \frac{1}{N-k'} \iff 2N < 3\frac{3N+1}{4}-1 \iff N > 1$.
        
        \item[5.] and 6. Since $k' = \frac{3N+1}{4}$, we have  $\frac{3}{k'-1} = \frac{1}{N-k'}$ and $\frac{3}{k'} < \frac{1}{N-k'-1}$.
        
        \item[7.] By definition we have $A = k'-m-(n-k)$.
        
        \item[8.] By definition we have $A+B+m+n = \frac{3N+1}{4} - m -(n-k) + \frac{N-1}{4} - k = N$.
        
        \item[9.] Since at least one of the four integers $C + i$, for $1 \leq i \leq 4$ verifies that $3(C+i)+1$ is a multiple of $4$, it holds that $N < C+5$ and $N$ is polynomial in $n$.
    \end{enumerate}
Hence, the statement of the lemma follows.
\end{proof}

Further, we move on to the main result of this section.

\begin{theorem}
    \label{theorem:NPhardDegen2andDelta4}
    \maxPDS{} is \NP{}-hard on graphs with $\degen=2$ and $\Delta =4$.
\end{theorem}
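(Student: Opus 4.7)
The plan is to establish NP-hardness by a polynomial reduction from \textsc{Independent Set} on cubic graphs, a problem well known to be NP-hard. Given an instance $(G,k)$ with $G=(V,E)$ cubic, $n=|V|$, $m=|E|$, and $3\le k<n$, I invoke Lemma~\ref{variablesNKAB} to obtain polynomial-size integers $A,B,N,k'$ satisfying Items~1--9, and then construct $G'=(V',E')$ on $N$ vertices as follows: introduce one \emph{vertex-vertex} $v_i$ for each $v_i\in V$; for each $e\in E$ introduce one \emph{edge-vertex} $x_e$ adjacent to the two endpoints of $e$ and to two private \emph{A-vertices}; the remaining $A-2m$ A-vertices and the $B$ B-vertices are linked via matchings or short paths so that the degree bound is preserved. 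The elimination order that removes A- and B-vertices first (each of degree at most $2$), then edge-vertices (of degree $2$ after losing their A-neighbours), and finally the vertex-vertices (which by then are isolated) witnesses $\degen(G')\le 2$, while $\Delta(G')\le 4$ is attained on the edge-vertices, each adjacent to two vertex-vertices and two A-vertices.

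For the forward direction, given an independent set $T\subseteq V$ of size $k$, I set $S:=V'\setminus(T\cup\{\text{all B-vertices}\})$, so that $|S|=A+m+(n-k)=k'$. Each vertex in $S$ is checked against the PDS inequality using Item~5 of Lemma~\ref{variablesNKAB}: degree-$4$ edge-vertices in $S$ have at most one outside neighbour because $T$ is independent and hence at most one endpoint of any edge lies in $T$; degree-$4$ vertex-vertices have exactly one outside neighbour (their B-partner); and A- and B-vertices of low degree are satisfied because their neighbours lie in $S$.

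For the backward direction, assuming a PDS $S$ with $|S|\ge k'$, I recover an independent set of size $\ge k$ in $G$. Combining Lemma~\ref{lemma:one_to_all} with Items~4 and~6 of Lemma~\ref{variablesNKAB}, every vertex of degree at most $3$ that lies in $S$ must have its full neighbourhood in $S$, and every degree-$4$ vertex in $S$ may have at most one outside neighbour. In particular, for every edge-vertex $x_e\in S$ at most one endpoint of $e$ falls outside $S$, so no edge of $G$ has both endpoints in the set $W$ of vertex-vertices outside $S$ while the corresponding $x_e$ remains in $S$. A double counting, in which the slacks $A>2m$ and $B>n$ from Items~1--3 bound the number of edge-vertices and A-vertices that can be expelled from $S$ without dropping $|S|$ below $k'$, then yields an independent set of size at least $k$ inside $W$.

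\textbf{Main obstacle.} The technical heart of the proof is the backward direction, where one has to rule out cheating configurations in which $|S|$ is inflated by partially swapping A- or B-vertices across the $S/\overline{S}$ boundary so as to absorb the cost of introducing bad edges inside $W$. The tight boundary inequalities of Items~4--6 of Lemma~\ref{variablesNKAB}, together with the slacks of Items~1--3, are precisely what makes every such trade unprofitable, pinning the PDS structure down to the one corresponding to a size-$k$ independent set of $G$; the extension to \connmaxPDS{} will further require gluing the padding gadgets to the main structure in a way that keeps the graph both 2-degenerate and of maximum degree $4$.
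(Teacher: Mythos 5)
Your overall strategy matches the paper's: a reduction from \textsc{Independent Set} on cubic graphs using vertex-gadgets, edge-gadgets, and two padding sets $V_A$, $V_B$ sized via \cref{variablesNKAB}, with $k'=A+m+(n-k)$ and the same forward direction. However, the backward direction has a genuine gap, and it is tied to a construction detail you changed. To extract an independent set from a PDS $S$ with $|S|\ge k'$, one must show that $S$ has \emph{exactly} the intended shape, i.e.\ $V_A\cup V_E\subseteq S$ and $V_B\cap S=\emptyset$: if even one edge-vertex $x_e$ lies outside $S$, the edge $e$ may have both endpoints in your set $W$ and $W$ is not independent, and your own phrasing (``while the corresponding $x_e$ remains in $S$'') concedes this. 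The paper closes this hole by making \emph{all} of $V_A$ a single path (and likewise $V_B$), so that $G'[V_A]$ is connected; since $|\overline{S}|\le N-k'=B+k<A$, some $A$-vertex lies in $S$, and \cref{cor:one_to_all} (which requires connectivity of the low-degree set) propagates membership through the whole path and then to all of $V_E$, while the symmetric argument for $V_B$ forces $V_B\cap S=\emptyset$ on pain of $S=V'$. Your construction instead leaves each edge-vertex with two ``private'' $A$-neighbours and links only ``the remaining $A-2m$ $A$-vertices\ldots via matchings or short paths,'' which destroys the connectivity that the propagation argument needs; and the ``double counting'' you invoke in its place is not specified and, as described, can at best bound how many edge-vertices leave $S$, not force that number to zero.

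Two smaller points. First, you should dispose of the case $|S|>k'$ before using the boundary inequalities of Items~4--6, since those are calibrated to $|S|=k'$ exactly; the paper does this by showing $|S|>k'$ forces $S=V'$. Second, your claim that ``every vertex of degree at most $3$ in $S$ must have its full neighbourhood in $S$'' is correct only because $N-k'=\frac{N-1}{4}$ makes the threshold in \cref{lemma:one_to_all} equal to $4$; this is worth stating, but it is the connectivity of $V_A$, not this pointwise fact alone, that pins down $S$.
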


\begin{proof}
We provide a polynomial-time reduction from the \textsc{Independent Set} on cubic graphs. This problem is known to be \NP{}-hard \cite{indep_cubic}.

\medbreak
\noindent\textbf{Construction: } Given an instance of the \textsc{Independent Set}, that is, a cubic graph $G = (V,E)$ and a positive integer $k$, we construct $G' =(V',E')$, a positive integer $k'$ an instance of \maxPDS{} as follows.
\begin{itemize}
    \item The vertex set $V'$ is constructed as follows.
    \begin{itemize}
        \item We create a copy of vertices in $V$. We denote by $V_G$ the set formed by such vertices.
        
        \item We create a vertex $e_{uv}$ for each $uv$ in $E$. We denote the corresponding set of vertices by $V_E$.
        
        \item We create two sets of $A$ and $B$ new vertices, which we denote by $V_A$ and $V_B$, respectively, where integers $A$ and $B$ satisfy the inequalities from \cref{variablesNKAB}.
    \end{itemize}
    
    \item The edge set $E'$ is constructed as follows.
    \begin{itemize}
        \item Recall, that by \cref{variablesNKAB}, we have that $B>n$. For every $u\in V_G$, there exists $u' \in V_B$ such that $uu'\in E'$.
        
        \item By \cref{variablesNKAB}, we also have that $A > 2m$. Each vertex in $V_E$ is adjacent to two unique vertices in $V_A$, i.e.\ for every $u \in V_E$ there exists two vertices $w,w'\in V_A$ such that $wu, w'u \in E'$.
        
        \item Each vertex $e_{uv} \in V_E$ is adjacent to $u$ and $v$ in the set $V$.
        
        \item The vertices of set $V_A$ form a path.
        
        \item The vertices of set $V_B$ form a path.
    \end{itemize}
\end{itemize}

Finally, let $k' = A + m + (n - k)$. See \cref{fig:construction2} for an illustration of the construction. Also, since $N$ is polynomial in $N$, then $G'$ can obviously be constructed in polynomial time in $n$.

\begin{figure}[ht!]
\begin{center}
\begin{tikzpicture}[scale=0.43]

\node at (-6,4) {\underline{The corresponding graph $G'$}}; 
\node at (-26,-2) {\underline{A graph $G$}}; 
\draw[line width=0.7mm,->] (-22.5,-6) -- (-17.5,-6);

\node at (-26,-6) {
\begin{tikzpicture}[scale=0.40]
\draw[thick] (-1.0,-3.0)--(-1.0,-1.0);
\draw[thick] (-1.0,-3.0)--(2.0,-2.0);
\draw[thick] (-1.0,-3.0)--(-3.0,1.0);
\draw[thick] (-1.0,-1.0)--(-1.0,1.0);
\draw[thick] (-1.0,-1.0)--(2.0,-2.0);
%\draw[thick,blue] (-1.0,-1.0)--(1.0,0.0);
%\draw[thick,blue] (-1.0,-1.0)--(-3.0,1.0);
\draw[thick] (-1.0,1.0)--(1.0,0.0);
\draw[thick] (-1.0,1.0)--(-3.0,1.0);
%\draw[thick,blue] (2.0,-2.0)--(1.0,0.0);
%\draw[thick,blue] (2.0,-2.0)--(2.0,3.0);
\draw[thick] (1.0,0.0)--(2.0,3.0);
\draw[thick] (-3.0,1.0)--(2.0,3.0);
\draw[thick] (2.0,3.0)--(3.0,0.8);%
\draw[thick] (2.0,-2.0)--(3.0,0.8);%
\draw[thick] (1.0,0.0)--(3.0,0.8);%

\filldraw[blue] (-1.0,-3.0) circle (5 pt);
\filldraw[blue] (-1.0,-1.0) circle (5 pt);
\filldraw[blue] (-1.0,1.0) circle (5 pt);
\filldraw[blue] (2.0,-2.0) circle (5 pt);
\filldraw[blue] (1.0,0.0) circle (5 pt);
\filldraw[blue] (-3.0,1.0) circle (5 pt);
\filldraw[blue] (2.0,3.0) circle (5 pt);
\filldraw[blue] (3.0,0.8) circle (5 pt);%

%\draw [-{Stealth[scale=2]}] (3.0,1.0)--(5.0,1.0); 
\end{tikzpicture}};

\draw[thick,violet] (-10.8,0.0)--(-11.1,1.0);
\draw[thick,violet] (-10.8,0.0)--(-11.1,-1.0);
\draw[thick,violet] (-11.1,1.0)--(-11.9,1.8);
\draw[thick,violet] (-11.9,1.8)--(-13.0,2.3);
\draw[thick,violet] (-13.0,2.3)--(-14.3,2.4);
\draw[thick,violet] (-14.3,2.4)--(-15.6,2.1);
\draw[thick,violet] (-15.6,2.1)--(-16.6,1.4);
\draw[thick,violet] (-16.6,1.4)--(-17.1,0.5);
%\draw[thick,violet] (-17.1,0.5)--(-17.1,-0.5);
\draw[thick,violet] (-17.1,-0.5)--(-16.6,-1.4);
\draw[thick,violet] (-16.6,-1.4)--(-15.6,-2.1);
\draw[thick,violet] (-15.6,-2.1)--(-14.3,-2.4);
\draw[thick,violet] (-14.3,-2.4)--(-13.0,-2.3);
\draw[thick,violet] (-13.0,-2.3)--(-11.9,-1.8);
\draw[thick,violet] (-11.9,-1.8)--(-11.1,-1.0);
\draw[thick,vert] (-10.8,-14.0)--(-11.0,-13.3);
\draw[thick,vert] (-10.8,-14.0)--(-11.0,-14.7);
\draw[thick,vert] (-11.0,-13.3)--(-11.4,-12.6);
\draw[thick,vert] (-11.4,-12.6)--(-12.1,-12.1);
\draw[thick,vert] (-12.1,-12.1)--(-13.0,-11.7);
\draw[thick,vert] (-13.0,-11.7)--(-14.0,-11.6);
\draw[thick,vert] (-14.0,-11.6)--(-15.0,-11.7);
\draw[thick,vert] (-15.0,-11.7)--(-15.9,-12.1);
\draw[thick,vert] (-15.9,-12.1)--(-16.6,-12.6);
\draw[thick,vert] (-16.6,-12.6)--(-17.0,-13.3);
%\draw[thick,vert] (-17.0,-13.3)--(-17.2,-14.0);
%\draw[thick,vert] (-17.2,-14.0)--(-17.0,-14.7);
\draw[thick,vert] (-17.0,-14.7)--(-16.6,-15.4);
\draw[thick,vert] (-16.6,-15.4)--(-15.9,-15.9);
\draw[thick,vert] (-15.9,-15.9)--(-15.0,-16.3);
\draw[thick,vert] (-15.0,-16.3)--(-14.0,-16.4);
\draw[thick,vert] (-14.0,-16.4)--(-13.0,-16.3);
\draw[thick,vert] (-13.0,-16.3)--(-12.1,-15.9);
\draw[thick,vert] (-12.1,-15.9)--(-11.4,-15.4);
\draw[thick,vert] (-11.4,-15.4)--(-11.0,-14.7);

\filldraw[red] (1.8,-14.0) circle (3 pt);
\filldraw[red] (1.5,-13.2) circle (3 pt);
\filldraw[red] (1.0,-12.6) circle (3 pt);
\filldraw[red] (0.2,-12.3) circle (3 pt);
\filldraw[red] (-0.6,-12.4) circle (3 pt);
\filldraw[red] (-1.3,-12.8) circle (3 pt);
\filldraw[red] (-1.7,-13.6) circle (3 pt);
\filldraw[red] (-1.7,-14.4) circle (3 pt);
\filldraw[red] (-1.3,-15.2) circle (3 pt);
\filldraw[red] (-0.6,-15.6) circle (3 pt);
\filldraw[red] (0.2,-15.7) circle (3 pt);
\filldraw[red] (1.0,-15.4) circle (3 pt);
\filldraw[red] (1.5,-14.8) circle (3 pt);
\filldraw[violet] (-10.8,0.0) circle (3 pt);
\filldraw[violet] (-11.1,1.0) circle (3 pt);
\filldraw[violet] (-11.9,1.8) circle (3 pt);
\filldraw[violet] (-13.0,2.3) circle (3 pt);
\filldraw[violet] (-14.3,2.4) circle (3 pt);
\filldraw[violet] (-15.6,2.1) circle (3 pt);
\filldraw[violet] (-16.6,1.4) circle (3 pt);
\filldraw[violet] (-17.1,0.5) circle (3 pt);
\filldraw[violet] (-17.1,-0.5) circle (3 pt);
\filldraw[violet] (-16.6,-1.4) circle (3 pt);
\filldraw[violet] (-15.6,-2.1) circle (3 pt);
\filldraw[violet] (-14.3,-2.4) circle (3 pt);
\filldraw[violet] (-13.0,-2.3) circle (3 pt);
\filldraw[violet] (-11.9,-1.8) circle (3 pt);
\filldraw[violet] (-11.1,-1.0) circle (3 pt);
\filldraw[vert] (-10.8,-14.0) circle (3 pt);
\filldraw[vert] (-11.0,-13.3) circle (3 pt);
\filldraw[vert] (-11.4,-12.6) circle (3 pt);
\filldraw[vert] (-12.1,-12.1) circle (3 pt);
\filldraw[vert] (-13.0,-11.7) circle (3 pt);
\filldraw[vert] (-14.0,-11.6) circle (3 pt);
\filldraw[vert] (-15.0,-11.7) circle (3 pt);
\filldraw[vert] (-15.9,-12.1) circle (3 pt);
\filldraw[vert] (-16.6,-12.6) circle (3 pt);
\filldraw[vert] (-17.0,-13.3) circle (3 pt);
%\filldraw[vert] (-17.2,-14.0) circle (3 pt);
\filldraw[vert] (-17.0,-14.7) circle (3 pt);
\filldraw[vert] (-16.6,-15.4) circle (3 pt);
\filldraw[vert] (-15.9,-15.9) circle (3 pt);
\filldraw[vert] (-15.0,-16.3) circle (3 pt);
\filldraw[vert] (-14.0,-16.4) circle (3 pt);
\filldraw[vert] (-13.0,-16.3) circle (3 pt);
\filldraw[vert] (-12.1,-15.9) circle (3 pt);
\filldraw[vert] (-11.4,-15.4) circle (3 pt);
\filldraw[vert] (-11.0,-14.7) circle (3 pt);

\draw[-{Straight Barb[left]},line width=0.5mm,black] (-2.5,-0.25)--(-10.0,-0.25);
\node[rotate=180] at (-6.2,-0.9) {$1$};
\draw[-{Straight Barb[left]},line width=0.5mm,black] (-10.0,0.25)--(-2.5,0.25);
\node[rotate=0] at (-6.2,0.9) {$1$};

\draw[-{Straight Barb[left]},line width=0.5mm,black] (0.25,-2.9)--(0.25,-11.5);
\node[rotate=-90] at (0.9,-7.3) {$3$};
\draw[-{Straight Barb[left]},line width=0.5mm,black] (-0.25,-11.5)--(-0.25,-2.9);
\node[rotate=90] at (-0.9,-7.3) {$2$};

\draw[-{Straight Barb[left]},line width=0.5mm,black] (-2.5,-14.25)--(-10.0,-14.25);
\node[rotate=180] at (-6.2,-14.9) {$2$};
\draw[-{Straight Barb[left]},line width=0.5mm,black] (-10.0,-13.75)--(-2.5,-13.75);
\node[rotate=0] at (-6.2,-13.1) {$1$};

\draw[thick,blue,dotted] (-0.5,-1.5)--(-0.5,-0.5);
\draw[thick,blue,dotted] (-0.5,-1.5)--(1.0,-1.0);
\draw[thick,blue,dotted] (-0.5,-1.5)--(-1.5,0.5);
\draw[thick,blue,dotted] (-0.5,-0.5)--(-0.5,0.5);
\draw[thick,blue,dotted] (-0.5,-0.5)--(1.0,-1.0);
%\draw[thick,blue,dotted] (-0.5,-0.5)--(0.5,0.0);
%\draw[thick,blue,dotted] (-0.5,-0.5)--(-1.5,0.5);
\draw[thick,blue,dotted] (-0.5,0.5)--(0.5,0.0);
\draw[thick,blue,dotted] (-0.5,0.5)--(-1.5,0.5);
%\draw[thick,blue,dotted] (1.0,-1.0)--(0.5,0.0);
%\draw[thick,blue,dotted] (1.0,-1.0)--(1.0,1.5);
\draw[thick,blue,dotted] (0.5,0.0)--(1.0,1.5);
\draw[thick,blue,dotted] (-1.5,0.5)--(1.0,1.5);
\draw[thick,blue,dotted] (1.0,1.5)--(1.5,0.4);%
\draw[thick,blue,dotted] (1.0,-1.0)--(1.5,0.4);%
\draw[thick,blue,dotted] (0.5,0.0)--(1.5,0.4);%

\draw[thick,blue] (0.0,0.0) ellipse (2.5cm and 2.9cm);
\node[blue] at (4,3.8) {$V_G$};
\node[blue] at (4,2.8) {$|V_G|=n$};

\draw[thick,red] (0.0,-14.0) ellipse (2.5cm and 2.5cm);
\node[red] at (3.8,-15.8) {$V_E,$};
\node[red] at (3.8,-16.8) {$|V_E|=m$};

\draw[thick,violet] (-14.0,0.0) ellipse (4.0cm and 3.0cm);
\node[violet] at (-18.5,3.6) {$V_B,$};
\node[violet] at (-18.5,2.6) {$|V_B|=B$};

\draw[thick,vert] (-14.0,-14.0) ellipse (4.0cm and 3.0cm);
\node[vert] at (-18.5,-15.9) {$V_{A},$};
\node[vert] at (-18.5,-16.9) {$|V_{A}|=A$};

\filldraw[blue] (-0.5,-1.5) circle (3 pt);
\filldraw[blue] (-0.5,-0.5) circle (3 pt);
\filldraw[blue] (-0.5,0.5) circle (3 pt);
\filldraw[blue] (1.0,-1.0) circle (3 pt);
\filldraw[blue] (0.5,0.0) circle (3 pt);
\filldraw[blue] (-1.5,0.5) circle (3 pt);
\filldraw[blue] (1.0,1.5) circle (3 pt);
\filldraw[blue] (1.5,0.4) circle (3 pt);

\draw[thick,violet,dashed] (-11.3,2.3) arc [start angle=157.11, end angle=202.89,x radius=5.86cm, y radius=5.86cm];
\draw[thick,vert,dashed] (-11.4,-11.72) arc [start angle=157.11, end angle=202.89,x radius=5.86cm, y radius=5.86cm];
\end{tikzpicture}
\end{center}

\caption{On the left we present an example of a cubic graph $G$ on $8$ vertices. On the right, we present the construction explained above. A double arrow $X \overset{x}{\underset{y}{\rightleftharpoons}} Y$ represents that the vertices in $X$ have at most $x$ neighbors in $Y$ and the vertices in $Y$ have at most $y$ neighbors in $X$.}
\label{fig:construction2}
\end{figure}
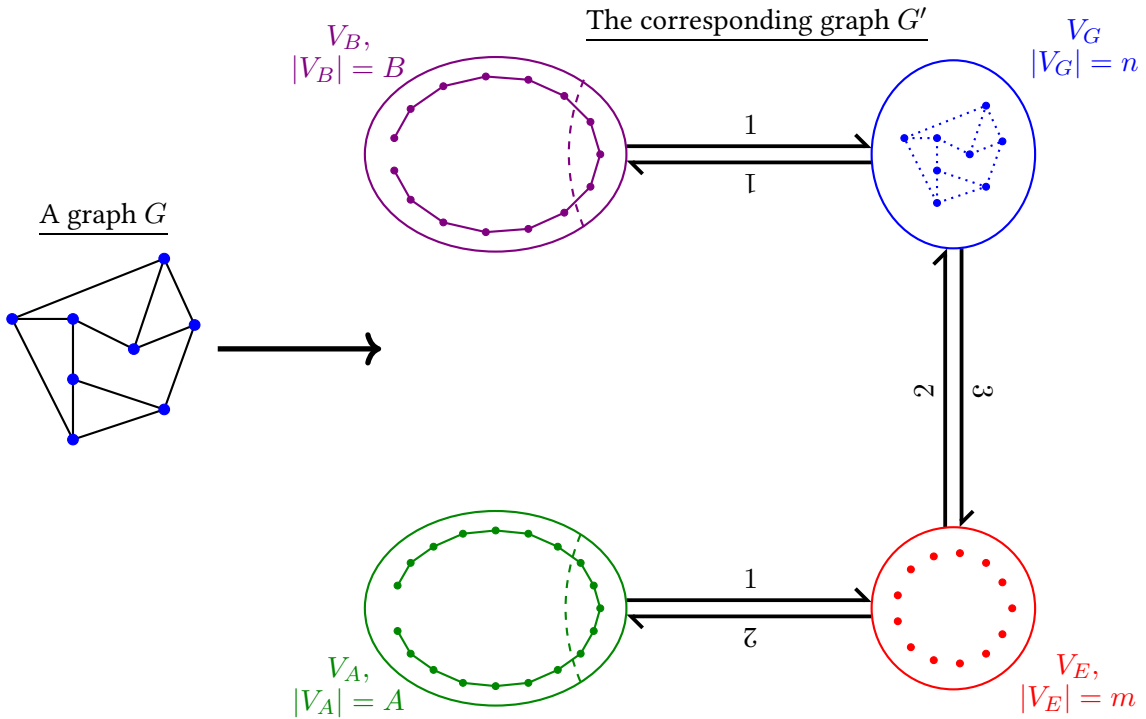

\medbreak

\noindent\textbf{Parameters $\degen$ and $\Delta$: }
Hereafter we justify that $\Delta(G')= 4$ and $\degen(G') =2$. Indeed, notice that since the vertices of the set $V_A$ (resp.\ $V_B$) form a path, each vertex has two neighbors in $V_A$ (resp.\ $V_B$) and at most one neighbor in $V_E$ (resp.\ $V_G$) by construction.
Each vertex of the set $V_G$ has one neighbor in $V_B$, by definition, and since $G$ is a cubic graph, we know that each vertex has three incident edges and hence every $u\in V_G$ has exactly three neighbors in $V_E$. Then, each vertex of $V_G$ has degree $4$.
Finally, each vertex of the set $V_E$ is adjacent to $2$ vertices of $V_G$ and $2$ vertices of $V_A$, by construction and hence each vertex has degree $4$. We conclude, that $\Delta(G') =4$. 

In order to prove that $\degen(G')=2$, we provide an elimination order of the vertices that only eliminates vertices of degree at most 2 at each step.
Since $A>2m$ (resp.\ $B>n$), there exists at least one vertex of degree $2$ in $V_A$ (resp.\ $V_B$).
We start by eliminating these vertices first. As long as there are vertices left in $V_A$ or $V_B$ there is at least one vertex of degree $2$ that we can eliminate. Indeed, any remaining vertex in $V_A$ or $V_B$ adjacent to an eliminated vertex has a degree at most 2.
Thus, we can eliminate all vertices of $V_A$ and $V_B$.
Then, after eliminating vertices in the sets $V_A$ and $V_B$, by construction vertices of $V_E$ have degree 2.
We can eliminate all of them. The remaining vertices are the vertices of $V_G$, which now form an independent set, thus we can eliminate all of them. Hence we conclude that $d(G') =2$.

\medbreak
\noindent\textbf{Soundness: }
We show that a graph $G=(V,E)$ has an independent set of size at least $k$ if and only if the graph $G' = (V',E')$, constructed as explained above, contains a \PDS{} of size at least $k'$.

\begin{itemize}

\item[$\implies$] First, assume that $I$ is an independent set of size $k$ in $G$. Let $I'$ be the set of vertices in $G'$ corresponding to the set $I$. Let $S = V_A \cup V_E \cup (V_G \setminus I')$. Hereafter, we show that $S$ is a \PDS{} of size $k'$ in $G'$. Notice, that $\vert S \vert = \vert V_A \vert + \vert V_E \vert + \vert(V_G \setminus I')\vert = A + m + (n-k) = k'$. Now, we have to show that all the vertices are satisfied with respect to $S$.

\begin{itemize}
    \item Let $u \in V_A$. Notice, that all neighbors of $u$ are in $S$, and hence $u$ is satisfied with respect to $S$.

    \item Let $e_{uv} \in V_E$. Notice, that since $I$ is an independent set and $u,v$ are endpoints of the same edge, we have that if $u\in I$, then $v\notin I$ and vice versa. Since, by the previous argument, one of the neighbors of the vertex $e_{uv}$ is not in $I$, we know that $e_{uv}$ has at most one neighbor outside of $S$. By \cref{item6'}, Item 5 we have the following:
    $$ \frac{d_{S}(e_{uv})}{|S|-1} \geq \frac{3}{k'-1} \geq \frac{1}{N-k'} \geq \frac{d_{\overline{S}}(u)}{|\overline{S}|}. $$
    
    Hence, $e_{uv}$ is satisfied.
    
    \item Let $u \in V \cap S$. We have that $u$ has 3 neighbors in $S$ and $1$ neighbor in $\overline{S}$. Thus, by applying the \cref{item6'}, Item 5,
    $$ \frac{3}{k'-1} \geq \frac{1}{N-k'}. $$
    Hence, $u$ is satisfied.
\end{itemize}
Hence, all vertices are satisfied with respect to $S$ and we conclude that $S$ is indeed a \PDS{} of size $k'$ in $G'$.

\item[$\impliedby$] Now, assume that $S$ is a \PDS{} of size at least $k'$ in $G'$. Let $I$ be a set of vertices in $G$ corresponding to the set $I' = V'\cap (V_G \setminus S)$. Hereafter, we show that $I$ is an independent set in $G$ of size $k$. 

Notice that by \cref{variablesNKAB}, we have that $k\ge 3$ and $k'+A>N$. Moreover, since $k'=A+m+(n-k)$ and $N=A+B+m+n$, we have that $A-k>B$ and hence $A>B$.

\begin{itemize}
    \item First assume that $|S| > k'$. As mentioned above, by \cref{variablesNKAB}, we have that $A>B$ and hence we know that at least one vertex of the set $V_A$ is contained in $S$. By \cref{cor:one_to_all}, we conclude that $V_A \cap V_E \subset S$. Similarly, we conclude that $V_B \cap V_G \subset S$ and $S = V'$, a contradiction. Thus, $|S| \leq k'$ and $|S|=k'$.

    \item Similary, as mentioned above we have that $A>B$ and there exists $u \in V_A$ such that $u \in S$. By applying \cref{item5'} and \cref{cor:one_to_all} since vertices in $V_A$ have degree at most $3$, we conclude that $V_A \cup V_E \subseteq S$.
    
    \item We use a similar argument as above for the set $V_B$. Since vertices of $V_B$ have degree at most $3$, we know that $v \in V_B \cap S$, for some $v\in V_B$, then $V_B \cup V \subset S$ and $S = V'$, a contradiction. Hence, $S \cap V_B = \emptyset$. Then, $S = V_A \cup V_E \cup (V_G \setminus I')$. And since, $|S| = k' = A + m + (n-k)$, it holds that $|I|=k$.
    
    \item Assume that for some $u,v \in I$, we have that $uv \in E$. Then, we know that $e_{uv}\in V_E$ is adjacent to both $u$ and $v$. Then, since $S$ is a \PDS{} by assumption, $e_{uv}$ is satisfied with respect to $S$. Moreover, we have that $|S|=k'$ and $e_{uv}$ has degree 4. If two of its neighbors are in $\overline{S}$ we have:
    $$\frac{2}{k'-1}\ge \frac{2}{N-k'}\,,$$
    a contradiction to \cref{item5'}, Item 4.
\end{itemize}
\end{itemize}
Hence, $I$ is an independent set of size $k$ in $G$ which concludes the proof.
\end{proof}

Then, the theorem above implies the following result.

\begin{corollary}
    \maxPDS{} is para-\NP{}-hard parameterized by $\degen$ and $\Delta$ parameters.
\end{corollary}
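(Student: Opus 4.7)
The plan is to derive the corollary as an immediate consequence of \cref{theorem:NPhardDegen2andDelta4}, without introducing any new reduction. Recall that a parameterized problem is para-\NP{}-hard with respect to a parameter (or a combination of parameters) precisely when it remains \NP{}-hard on instances in which those parameters are fixed to absolute constants. Since \cref{theorem:NPhardDegen2andDelta4} already produces \NP{}-hard instances of \maxPDS{} in which $\degen(G)=2$ and $\Delta(G)=4$ hold simultaneously, this is exactly the witnessing slice of the parameter space that the definition demands, and the proof is essentially a single sentence pointing to the theorem.

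I do not foresee a technical obstacle here; the only thing worth emphasising for the reader is that a single hard slice at $(\degen,\Delta)=(2,4)$ simultaneously establishes para-\NP{}-hardness under three natural parameterisations at once (by $\degen$ alone, by $\Delta$ alone, and by the combined parameter $(\degen,\Delta)$), since bounding both parameters by constants is strictly stronger than bounding either one alone. Consequently no \XP{}, and in particular no \FPT{}, algorithm can exist for \maxPDS{} parameterized by any of these quantities unless $\textsf{P}=\NP{}$.
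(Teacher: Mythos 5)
Your proposal is correct and matches the paper exactly: the paper states this corollary as an immediate consequence of \cref{theorem:NPhardDegen2andDelta4}, with no separate argument, since \NP{}-hardness on the fixed slice $\degen=2$, $\Delta=4$ is precisely the definition of para-\NP{}-hardness for these parameters. Your added remark that one hard slice covers the parameterizations by $\degen$ alone, $\Delta$ alone, and the pair simultaneously is a harmless and accurate elaboration of the same point.
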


Notice, that the \PDS{} in the proof of \cref{theorem:NPhardDegen2andDelta4} induces a connected subgraph of $G'$. This implies the following result.

\begin{corollary}
    \connmaxPDS{} is para-\NP{}-hard parameterized by $\degen$ and $\Delta$ parameters.
\end{corollary}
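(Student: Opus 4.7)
The plan is to reuse the reduction from \cref{theorem:NPhardDegen2andDelta4} verbatim, and argue that the same equivalence holds with \PDS{} replaced by connected \PDS{} throughout. The backward direction requires no new work: a connected \PDS{} of size at least $k'$ in $G'$ is in particular a \PDS{}, so the argument already given in \cref{theorem:NPhardDegen2andDelta4} produces an independent set of size $k$ in $G$. Thus the only addition needed is to check that the \PDS{} exhibited in the forward direction of that proof actually induces a connected subgraph of $G'$.

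Concretely, I would verify that $S = V_A \cup V_E \cup (V_G \setminus I')$ (where $I' \subseteq V_G$ corresponds to an independent set of size $k$ in $G$) induces a connected subgraph. First, by construction $V_A$ forms a path, so $G'[V_A]$ is connected. Second, each vertex $e_{uv} \in V_E$ is adjacent to two vertices of $V_A$, hence every vertex of $V_E$ has a neighbor inside $V_A \subseteq S$, and $G'[V_A \cup V_E]$ is connected. Third, for every $u \in V_G \setminus I'$, since $G$ is cubic, $u$ is adjacent in $G'$ to the three vertices of $V_E$ corresponding to its incident edges in $G$, all of which belong to $S$; so each such $u$ is attached to the connected subgraph $G'[V_A \cup V_E]$. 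Consequently $G'[S]$ is connected.

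Since the instance size, the degeneracy bound $\degen(G')=2$, and the maximum degree bound $\Delta(G')=4$ are unchanged, the reduction witnesses para-\NP{}-hardness of \connmaxPDS{} under the same parameters. The trivial edge case $V_G \setminus I' = \emptyset$, which would force $k=n$ and leave $G$ edgeless, can be excluded beforehand from the \textsc{Independent Set} instance. I do not anticipate a real obstacle: the main content of the corollary is the one-paragraph connectivity check above, and the rest is inherited directly from \cref{theorem:NPhardDegen2andDelta4}.
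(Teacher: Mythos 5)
Your proposal is correct and matches the paper's approach: the paper proves this corollary simply by observing that the \PDS{} $S = V_A \cup V_E \cup (V_G \setminus I')$ constructed in the forward direction of \cref{theorem:NPhardDegen2andDelta4} induces a connected subgraph of $G'$, with the backward direction inherited because a connected \PDS{} is in particular a \PDS{}. Your explicit connectivity check ($V_A$ is a path, every vertex of $V_E$ has neighbors in $V_A$, every vertex of $V_G \setminus I'$ has neighbors in $V_E$) is exactly the verification the paper leaves implicit.
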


%Note that under the classical complexity assumptions, this result implies that the \FPT{} algorithm parameterized by $\Delta + \tw$ (presented in \cref{subsec:treewidthPDS}) cannot be improved to either an \FPT{} or an \XP{} algorithm parameterized by $\Delta$ solely. 
Notice that since $h \le \Delta$, the \NP{}-hardness on graphs with $\Delta \leq 4$ implies the \NP{}-hardness on graphs with $h \leq 4$. 

Notice, that in \cite{MR4023158} authors showed that \maxPDS{} is \NP{}-hard on bipartite graphs. However, next we highlight that we can obtain the same result by using a similar construction to the one introduced above. 

More specifically, in order to ensure that $G'$ is bipartite when constructing the edges between $V_A$ and $V_E$ (resp.\ $V_B$ and $V_G$) we create edges between the sets, connecting every second vertex from $V_A$ (resp.\ $V_B$) to $V_E$ (resp.\ $V_G$) as shown in \cref{fig:bipartite}. Since $G'$ is bipartite, we know that $G'$ is $2$-colorable. We present a $2$-coloring on the figure as well.

\begin{figure}[!h]
\centering
\begin{tikzpicture}[scale=0.5]
\draw[thick,dashed,vert] (1.2,-4.0)--(1.5,-3.2);
\draw[thick,vert] (1.5,-3.2)--(1.8,-2.2);
\draw[thick,black] (1.5,-3.2)--(8.1,-1.3);
\draw[thick,vert] (1.8,-2.2)--(1.9,-1.2);
\draw[thick,vert] (1.9,-1.2)--(2.0,0.0);
\draw[thick,black] (1.9,-1.2)--(8.1,-1.3);
\draw[thick,vert] (2.0,0.0)--(1.9,1.2);
\draw[thick,vert] (1.9,1.2)--(1.8,2.2);
\draw[thick,black] (1.9,1.2)--(8.1,1.3);
\draw[thick,vert] (1.8,2.2)--(1.5,3.2);
\draw[thick,dashed,vert] (1.5,3.2)--(1.2,4.0);
\draw[thick,black] (1.5,3.2)--(8.1,1.3);
%\draw[thick,red,dotted] (8.1,-1.3)--(8.6,-3.5);
%\draw[thick,red,dotted] (8.1,1.3)--(8.6,3.5);

%\filldraw[vert] (1.2,-4.0) circle (4pt) node[anchor=south] {\color{black}1};
\filldraw[vert] (1.5,-3.2) circle (4pt) node[anchor=east] {\color{black}0};
\filldraw[vert] (1.8,-2.2) circle (4pt) node[anchor=east] {\color{black}1};
\filldraw[vert] (1.9,-1.2) circle (4pt) node[anchor=east] {\color{black}0};
\filldraw[vert] (2.0,0.0) circle (4pt) node[anchor=east] {\color{black}1};
\filldraw[vert] (1.9,1.2) circle (4pt) node[anchor=east] {\color{black}0};
\filldraw[vert] (1.8,2.2) circle (4pt) node[anchor=east] {\color{black}1};
\filldraw[vert] (1.5,3.2) circle (4pt) node[anchor=east] {\color{black}0};

\filldraw[red] (10,-2.3) circle (5pt) node[anchor=south] {\color{black} 1};
\filldraw[red] (10,2.3) circle (5pt) node[anchor=north] {\color{black} 1};
\filldraw[red] (8.1,-1.3) circle (5pt) node[anchor=south] {\color{black} 1};
\filldraw[red] (8.1,1.3) circle (5pt) node[anchor=north] {\color{black} 1};
\filldraw[red] (11.9,1.3) circle (5pt) node[anchor=north] {\color{black} 1};
\filldraw[red] (11.9,-1.3) circle (5pt) node[anchor=south] {\color{black} 1};

\draw[red,thick] (10,0) ellipse (2.6cm and 2.6cm);
\node[red] () at (10,-3.3) {$V_E$};

%\draw[vert,thick] (0,0) ellipse (2.5cm and 6.5cm);
\draw[thick,vert] (2.5,0) arc [start angle=0, end angle=50, x radius=2.6, y radius=6.5];
\draw[thick,vert] (2.5,0) arc [start angle=0, end angle=-50, x radius=2.6, y radius=6.5];
\node[vert] () at (3,-4) {$V_A$};

\draw[-{Straight Barb[left]},line width=0.5mm,black] (12.65,0.2)--(19,0.2);
%\node[rotate=180] at (-6.2,-14.9) {$2$};
\draw[-{Straight Barb[left]},line width=0.5mm,black] (19,-0.2)--(12.65,-0.2);
%\node[rotate=0] at (-6.2,-13.1) {$1$};

\node () at (25,0) {
\begin{tikzpicture}[scale=0.5,rotate=180]
\draw[thick,dashed,violet] (1.2,-4.0)--(1.5,-3.2);
\draw[thick,violet] (1.5,-3.2)--(1.8,-2.2);
\draw[thick,violet] (1.8,-2.2)--(1.9,-1.2);
\draw[thick,violet] (1.9,-1.2)--(2.0,0.0);
\draw[thick,black] (1.9,-1.2)--(8.1,-1.3);
\draw[thick,violet] (2.0,0.0)--(1.9,1.2);
\draw[thick,violet] (1.9,1.2)--(1.8,2.2);
\draw[thick,black] (1.9,1.2)--(8.1,1.3);
\draw[thick,violet] (1.8,2.2)--(1.5,3.2);
\draw[thick,dashed,violet] (1.5,3.2)--(1.2,4.0);

\draw[thick,black] (1.5,3.2) .. controls (7,4) .. (11.9,1.3);
\draw[thick,black] (1.5,-3.2) .. controls (7,-4) .. (11.9,-1.3);
%\draw[thick,red,dotted] (8.1,-1.3)--(8.6,-3.5);
%\draw[thick,red,dotted] (8.1,1.3)--(8.6,3.5);

%\filldraw[vert] (1.2,-4.0) circle (3 pt) node[anchor=south] {\color{black}1};
\filldraw[violet] (1.5,-3.2) circle (4pt) node[anchor=west] {\color{black}1};
\filldraw[violet] (1.8,-2.2) circle (4pt) node[anchor=west] {\color{black}0};
\filldraw[violet] (1.9,-1.2) circle (4pt) node[anchor=west] {\color{black}1};
\filldraw[violet] (2.0,0.0) circle (4pt) node[anchor=west] {\color{black}0};
\filldraw[violet] (1.9,1.2) circle (4pt) node[anchor=west] {\color{black}1};
\filldraw[violet] (1.8,2.2) circle (4pt) node[anchor=west] {\color{black}0};
\filldraw[violet] (1.5,3.2) circle (4pt) node[anchor=west] {\color{black}1};
%\filldraw[vert] (1.2,4.0) circle (3 pt) node[anchor=south] {\color{black}1};

%\filldraw[red] (10,-2.3) circle (5pt) node[anchor=south] {\color{black} 1};
%\filldraw[red] (10,2.3) circle (5pt) node[anchor=north] {\color{black} 1};
\filldraw[blue] (8.1,-1.3) circle (5pt) node[anchor=east] {\color{black} 0};
\filldraw[blue] (8.1,1.3) circle (5pt) node[anchor=east] {\color{black} 0};
\filldraw[blue] (11.9,1.3) circle (5pt) node[anchor=west] {\color{black} 0};
\filldraw[blue] (11.9,-1.3) circle (5pt) node[anchor=west] {\color{black} 0};

\draw[blue,thick] (10,0) ellipse (2.6cm and 2.6cm);

\draw[thick,violet] (2.5,0) arc [start angle=0, end angle=50, x radius=2.6, y radius=6.5];
\draw[thick,violet] (2.5,0) arc [start angle=0, end angle=-50, x radius=2.6, y radius=6.5];

\node[blue] () at (10,3.5) {$V_G$};
\node[violet] () at (3,4.5) {$V_B$};

\end{tikzpicture}
};

\end{tikzpicture}
\caption{The adjacency between the sets $V_A$ and $V_E$ (resp.\ $V_B$ and $V_G$). The $\{0,1\}$-labeling represent a 2-coloring of $G'$.}
\label{fig:bipartite}
\end{figure}
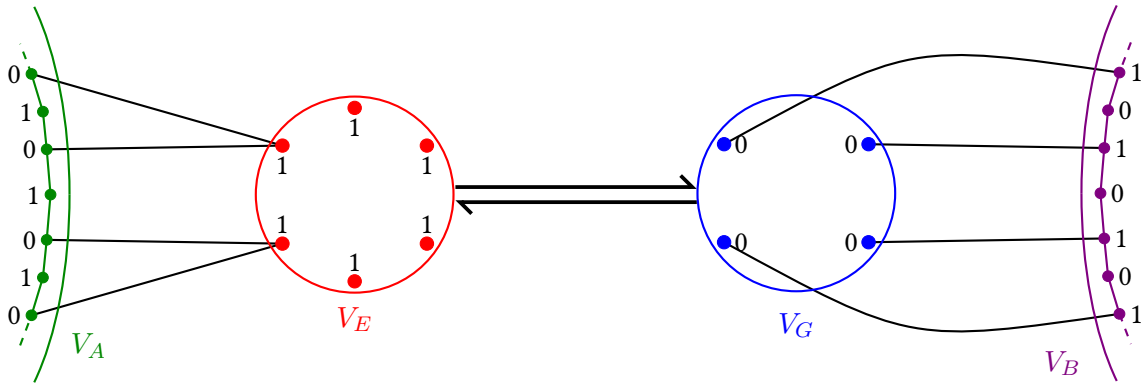

%\textcolor{red}{explain why it does not work for $\Delta=3$.}
Next, we explain why employing a similar approach encounters limitations when using the same methodology on graphs with $\Delta=3$. Let $G'=(V',E')$ be a graph constructed as explained above. Let $S$ be a \PDS{} in $G'$. Intuitively, the reduction provided above relies on the following idea. We choose values of $A$ and $B$ such that $A, B>>n$ and $A \approx 3B$. We also choose the size of $S$ to be of the following value: $k' = A + m + (n-k)$. In this way, we ensure that at least one vertex of the set $V_A$ will be contained in $S$. Furthermore, $A \approx \frac{3}{4}|V'|$ and vertices contained in $V_A$ have degree at most $3$. Intuitively, considering the chosen size of $S$, we know that at least one vertex from $V_A$ is included in $S$, and \cref{lemma:one_to_all} implies that vertices of $V_A$ and $V_E$ are contained in $S$. Consequently, $V_A$ serves as a ``magnet" to ensure that $V_E\subset S$. A similar observation applies to the set $V_B$.

However, we could not extend this approach on graphs with $\Delta = 3$. If we require sets $V_A$ and $V_B$, respectively, to induce connected subgraphs such that the vertices have a degree at most $2$ (to ensure that \cref{lemma:one_to_all} implies that vertices of $V_A$ and $V_E$ are contained in $S$), then these subgraphs must be either paths or cycles. However, since degree is bounded by $2$ those subgraphs can not be connected to the rest of the graph. 

\subsection{Parameterization of the complement graph}
\label{subsec:NPhardondense graphs}
In this subsection, we prove that \maxPDS{} is \NP{}-hard when restricted to dense graphs, i.e. graphs whose complement is sparse. More specifically, we prove that \maxPDS{} is \NP{}-hard on graphs $G$ such that $\Delta(\overline{G}) = 6$. Moreover, we show that \maxPDS{} is \NP{}-hard on graphs $G$ such that $\degen(\overline{G})=2$ and $\overline{G}$ is bipartite. Before we move on to the main results of this subsection, we introduce some preliminary results.

\begin{lemma}
\label{lemma:xgeqY}
    Let $x,y >0$ be integers. Then, $\frac{x-1}{x} \geq \frac{y-1}{y}$ if and only if and $x \geq y$.
\end{lemma}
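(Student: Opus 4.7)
The plan is to rewrite both sides in the equivalent form $1 - \tfrac{1}{x}$ and $1 - \tfrac{1}{y}$, which immediately converts the stated inequality into a comparison between $\tfrac{1}{x}$ and $\tfrac{1}{y}$. Concretely, I would observe that
\[
\frac{x-1}{x} \ge \frac{y-1}{y}
\iff 1 - \frac{1}{x} \ge 1 - \frac{1}{y}
\iff \frac{1}{y} \ge \frac{1}{x}.
\]
Since $x,y > 0$, the function $t \mapsto 1/t$ is strictly decreasing on the positive reals, so $\tfrac{1}{y} \ge \tfrac{1}{x}$ is equivalent to $x \ge y$. This chain of equivalences establishes both directions at once.

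There is really no obstacle here: the argument is a one-line algebraic manipulation, and the assumption $x, y > 0$ is used exactly once, to justify monotonicity of the reciprocal (equivalently, to ensure that multiplying the cross-products $xy > 0$ preserves inequality direction). An alternative, equally short route is cross-multiplication: $\frac{x-1}{x} \ge \frac{y-1}{y} \iff y(x-1) \ge x(y-1) \iff xy - y \ge xy - x \iff x \ge y$, again valid because $xy > 0$. I would present whichever of the two is stylistically cleaner in context; both yield the iff in a couple of lines without any case analysis or additional machinery.
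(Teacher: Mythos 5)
Your proof is correct and takes essentially the same route as the paper, which also rewrites $\frac{x-1}{x}$ as $1-\frac{1}{x}$ and invokes the strict monotonicity of that function on the positive reals. Nothing further is needed.
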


\begin{proof}
The function $x \in \ ]0,+\infty[ \ \mapsto \frac{x-1}{x}=1-\frac{1}{x}$ is strictly increasing and hence the statement follows.
\end{proof}

In the lemma below, we show that if a \PDS{} $S$ in $G$ is sufficiently large, then a vertex in $S$, which is not adjacent to exactly one vertex in $S$ and exactly one vertex in $\overline{S}$, is satisfied.

\begin{lemma}
    Let $G=(V,E)$ be a graph and $S \subseteq V$ such that $|S| > \frac{|V|}{2}$. Let $u \in S$ such that $u$ has two non-neighbors, one in $S$, the other in $\overline{S}$. Then, $u$ is satisfied with respect to $S$.
\end{lemma}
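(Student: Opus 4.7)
The plan is a direct computation that reduces the claim to the previous \cref{lemma:xgeqY}. The key observation is to rewrite the quantities $d_S(u)$ and $d_{\overline{S}}(u)$ in terms of $|S|$ and $|\overline{S}|$ using the assumption that $u$ has exactly one non-neighbor on each side (the assumption ``two non-neighbors, one in $S$, the other in $\overline{S}$'' is naturally read as: at least one non-neighbor in $S$ beyond $u$ itself, and at least one non-neighbor in $\overline{S}$; but the cleanest reading that makes the statement tight is to use the maximal values $d_S(u) \ge |S|-2$ and $d_{\overline S}(u) \le |\overline S|-1$, which is all we need to verify form (a) of \eqref{prop:PDS}).

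First, I would write $d_S(u) \ge (|S|-1) - 1 = |S|-2$, since the $|S|-1$ vertices of $S\setminus\{u\}$ contain at most one non-neighbor of $u$. Similarly, $d_{\overline S}(u) \le |\overline S| - 1$. Plugging these bounds into form (a) of the PDS inequality, it suffices to prove
\begin{equation*}
\frac{|S|-2}{|S|-1} \;\ge\; \frac{|\overline S|-1}{|\overline S|}.
\end{equation*}

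Next, I would set $x = |S|-1$ and $y = |\overline S|$ and apply \cref{lemma:xgeqY}, which tells us this inequality holds if and only if $x \ge y$, i.e.\ $|S|-1 \ge |\overline S|$, equivalently $|S| \ge |V|-|S|+1$, or $|S| \ge (|V|+1)/2$. The hypothesis $|S| > |V|/2$ together with integrality of $|S|$ gives exactly $|S| \ge \lceil (|V|+1)/2 \rceil \ge (|V|+1)/2$, so the required inequality holds. Combining the two chains of inequalities yields $\frac{d_S(u)}{|S|-1} \ge \frac{d_{\overline S}(u)}{|\overline S|}$, so $u$ is satisfied with respect to $S$.

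No real obstacle arises: the entire argument is two inequalities and an appeal to \cref{lemma:xgeqY}. The only mild subtlety is to be a little careful when $|V|$ is even versus odd (so that $|S|>|V|/2$ really does force $|S|-1 \ge |\overline S|$); this is handled by the integrality remark above.
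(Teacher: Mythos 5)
Your proof is correct and follows essentially the same route as the paper's: both reduce the claim to the inequality $\frac{|S|-2}{|S|-1} \ge \frac{|\overline{S}|-1}{|\overline{S}|}$ and settle it via \cref{lemma:xgeqY} using $|S| > \frac{|V|}{2} > |\overline{S}|$. Your version merely spells out the integrality step ($|S| > |\overline{S}|$ implies $|S|-1 \ge |\overline{S}|$) that the paper leaves implicit.
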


\begin{proof}
Since $|S| > \frac{|V|}{2} > |\overline{S}|$ and by \cref{lemma:xgeqY}, we have $\frac{|S|-2}{|S|-1} \geq \frac{|\overline{S}|-1}{|\overline{S}|}$.
\end{proof}

%In the proof of \cref{thm:MaxPDS_NPh_Delta_complement}, we start with a graph $G$ and its subdivision $G^*$. We then transform $G$ into a graph $G'=(V', E')$, which is the complement of $G^*$. See \cref{fig:subdivided_graph} for an illustration of a graph subdivision. We do so, since a vertex $e_{uv}$ is then satisfied with respect to the set $S$ if and only if at least one of the vertices $u$ or $v$ is not in $S$, implying that $S$ contains an independent set of $G$. Ses the proof of the \cref{thm:MaxPDS_NPh_Delta_complement} for more details.
In the proof of \cref{thm:MaxPDS_NPh_Delta_complement}, we start with a graph $G$ and transform it into a graph $G'=(V', E')$. See \cref{fig:delta_compl_1,fig:delta_compl_2} for an example of such construction. The construction ensures that a vertex $e_{uv}$ is satisfied with respect to the set $S$ if and only if at least one of the vertices $u$ or $v$ is not in $S$, implying that $S$ contains an independent set of $G$. See the proof of the \cref{thm:MaxPDS_NPh_Delta_complement} for more details.

%Then, a vertex $e_{uv}$ is satisfied with respect to a set $S$ of size $|S| > \frac{1}{2}|V'|$ if and only if at least one of the vertices $u$ or $v$ (the two non-neighbors of $e_{uv}$) is not in $S$, i.e. if and only if $S$ contains an independent set of $G$.

%\begin{figure}[ht]
%    \centering
%    \textcolor{red}{figure}
%    \caption{On the left we present an example of a cubic graph $G$ on $k$ vertices. On the right, we present a subdivision of $G$.}
%    \label{fig:subdivided_graph}
%\end{figure}

\begin{theorem}\label{thm:MaxPDS_NPh_Delta_complement}
\maxPDS{} is \NP{}-hard on graphs $G$ such that $\Delta(\overline{G})=6$. 
\end{theorem}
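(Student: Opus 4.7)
The plan is to reduce from \textsc{Independent Set} on cubic graphs, known to be \NP{}-hard \cite{indep_cubic}, mirroring the reduction of \cref{theorem:NPhardDegen2andDelta4} but dualized so that the constructed graph is \emph{dense}. Given a cubic graph $G=(V,E)$ with $|V|=n$, $|E|=m=3n/2$ and a target $k$, I would build $G'=(V',E')$ with $V'=V_G\cup V_E\cup V_A$, where $V_G$ is a copy of $V$, $V_E=\{e_{uv}:uv\in E\}$, and $V_A$ is a polynomial-size auxiliary ``magnet'' set. The cleanest way to specify $G'$ is via its non-edges: each $e_{uv}$ is non-adjacent in $G'$ to exactly $u$ and $v$; each $u\in V_G$ is non-adjacent to its three edge-vertices $e_{uv_1},e_{uv_2},e_{uv_3}$; and the auxiliary vertices are wired so that $\Delta(\overline{G'})=6$. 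Because $G$ is cubic the first two rules give each vertex at most three non-neighbors in $G'$, leaving a budget of three more non-neighbors per vertex for the auxiliary gadget.

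The key observation is that, after subtracting both sides of \eqref{prop:PDS} from $1$, the PDS condition for a vertex $u\in S$ reads
$$ \frac{\bar d_S(u)}{|S|-1}\le \frac{\bar d_{\overline S}(u)}{|\overline S|}, $$
where $\bar d_X(u)=|X|-[u\in X]-d_X(u)$ counts non-neighbors of $u$ in $X$. Since each vertex of $G'$ has only $O(1)$ non-neighbors, this non-neighbor formulation is what makes the whole analysis tractable. In particular, using \cref{lemma:xgeqY}, whenever $|S|>|\overline S|$ a vertex $e_{uv}\in S$ whose only non-neighbors are $u,v$ is satisfied if and only if at most one of $u,v$ lies in $S$. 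If one can force $V_E\subseteq S$ for every \PDS{} of size at least $k'$, then $V_G\setminus S$ must be an independent set of $G$, which gives the desired equivalence. I would pick $k'=m+(n-k)+|V_A^{\text{in}}|$ for a suitable $V_A^{\text{in}}\subseteq V_A$, so that the intended witness $S=V_E\cup(V_G\setminus I)\cup V_A^{\text{in}}$ has size exactly $k'$; the forward direction is then a routine check of \eqref{prop:PDS} for the three vertex types.

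The main obstacle is designing $V_A$ to act as a magnet while respecting $\Delta(\overline{G'})\le 6$. In the sparse reduction of \cref{subsec:d and delta param}, the magnets were long paths of low-degree vertices in $G'$, which is no longer permitted here since each auxiliary vertex may have only six non-neighbors in $G'$. The magnet effect must therefore come from the non-neighbor analogue of \cref{cor:one_to_all}: a connected set of vertices with sufficiently few non-neighbors must lie entirely on the majority side of any \PDS{} once $|\overline S|$ is sufficiently small. The combinatorial work is to balance $|V_A|$, the distribution of its six-per-vertex non-neighbor budget (some edges of $\overline{G'}$ going inside $V_A$, others linking $V_A$ to $V_E$ so that $V_E$ is dragged along), and the target $k'$ so that the three requirements hold simultaneously: forcing $V_E\cup V_A^{\text{in}}\subseteq S$, keeping $|S|>|\overline S|$ (needed for the $e_{uv}$-analysis), and preserving $\Delta(\overline{G'})=6$. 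Once this gadget is in place, the backward direction proceeds by first using the magnet property to deduce $V_E\cup V_A^{\text{in}}\subseteq S$, and then reading off $I=V_G\setminus S$ as an independent set of $G$ of size at least $k$ from the satisfaction of the $e_{uv}$ vertices. The extension to \connmaxPDS{} should follow because the intended witness $S$ is naturally connected in $G'$ (almost every pair of vertices in $V'$ is adjacent).
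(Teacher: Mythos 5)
Your proposal has the right reduction source and the right key observation (rewriting \eqref{prop:PDS} in terms of non-neighbors, and the fact that for $|S|>|\overline{S}|$ a vertex $e_{uv}\in S$ with non-neighbors exactly $u,v$ is satisfied iff at most one of $u,v$ lies in $S$), but the correspondence between the independent set and $S$ is oriented the wrong way, and this breaks the forward direction. Your own criterion forces $S\cap V_G$ to be an \emph{independent} set of $G$ whenever $V_E\subseteq S$, yet your intended witness is $S=V_E\cup(V_G\setminus I)\cup V_A^{\text{in}}$, which places the vertex cover $V_G\setminus I$ inside $S$. Any edge $uv$ of $G$ with both endpoints outside $I$ (which exists for essentially every cubic instance) then gives an $e_{uv}\in S$ with both non-neighbors in $S$ and none in $\overline{S}$, hence unsatisfied. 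The paper's witness is $S=I'\cup V_E$ with $k'=m+k$: the independent set goes \emph{into} $S$. A second concrete problem: your non-edge rules make $V_G$ a clique in $G'$, so a vertex $u\in V_G\cap S$ has all three of its non-neighbors ($e_{uv_1},e_{uv_2},e_{uv_3}$) inside $S$ and none outside, which already violates the PDS inequality. The paper additionally makes $u$ non-adjacent to its three neighbors in $G$ (so $\overline{G'}$ is $G$ with every edge turned into a triangle, giving $d_{\overline{G'}}(u)=6$); then for $u\in I'$ the six non-neighbors split three in $S$ and three in $\overline{S}$, and \cref{lemma:xgeqY} gives $\frac{|S|-4}{|S|-1}\ge\frac{|\overline{S}|-3}{|\overline{S}|}$.

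The third gap is that the entire auxiliary ``magnet'' set $V_A$ — the part you identify as the main combinatorial work — is never constructed, and the ``non-neighbor analogue of \cref{cor:one\_to\_all}'' it would rest on is not proved anywhere. The paper shows no such gadget is needed: $V'=V_G\cup V_E$ suffices. The backward direction is handled not by forcing $V_E\subseteq S$ via a magnet but by an exchange argument: since $N_{G'}(u)\subseteq N_{G'}(e_{uv})$ for every $uv\in E$, one may repeatedly replace $u\in S$ by $e_{uv}\notin S$ without destroying the PDS property, normalizing $S$ so that $u'\in S$ implies $e_{uv}\in S$ for all incident edges; satisfaction of the $e_{uv}$'s then makes $S\cap V_G$ independent, and the count $|S|\ge m+k$, $|S\cap V_E|\le m$ yields $|S\cap V_G|\ge k$. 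If you want to salvage your plan, drop $V_A$, flip the orientation of the witness, and add the $V_G$--$V_G$ non-edges.
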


\begin{proof}
We provide a polynomial-time reduction from the \textsc{Independent Set} on cubic graphs. This problem is known to be \NP{}-hard \cite{indep_cubic}.

\paragraph{Construction:} Given an instance of \textsc{Independent Set}, that is, a cubic graph $G = (V,E)$, with $|V| =n$ and $|E| = m$, and a positive integer $k$, we construct $G' =(V',E')$, a positive integer $k'$ an instance of \maxPDS{} as follows. 

\begin{itemize}
    \item The vertex set $V'$ is constructed as follows.
    \begin{itemize}
        \item We create a copy of vertices in $V$. We denote by $V_G$ the set formed by such vertices. 

        \item We create a vertex $e_{uv}$ for each $uv \in E$. We denote the corresponding set of vertices by $V_E$.
    \end{itemize}

    \item The edge set $E'$ is constructed as follows.
    
 \begin{itemize}
        \item Let $v\in V$ and $w,u,x$ be its neighbors in $G$. Let $v',w',u',x'$ be the corresponding vertices in $G'$. Then, $v'$ is adjacent to all vertices other than $w',u',x',e_{vw},e_{vu}$, and $e_{vx}$ in $G'$. Notice, that since $G$ is a cubic graph, $v$ is not adjacent to four vertices in $V_G$ and $3$ vertices in $V_E$.
        
        \item The vertices of $V_E$ form a clique.
    \end{itemize}
\end{itemize}

The complement of $G'$ is the graph obtained by replacing each edge $uv$ of $G$ with a triangle between vertices $u$, $v$ and $e_{uv}$. See \Cref{fig:delta_compl_1} for the illustration of the complement of $G'$ and \Cref{fig:delta_compl_2} for an illustration of the graph $G'$. Finally, let $k'=m+k$. Notice that $|V'|=m+n = \frac{5}{2}n$ since $G$ is a cubic graph and $k' \geq \frac{3}{2}n > \frac{|V'|}{2}$.

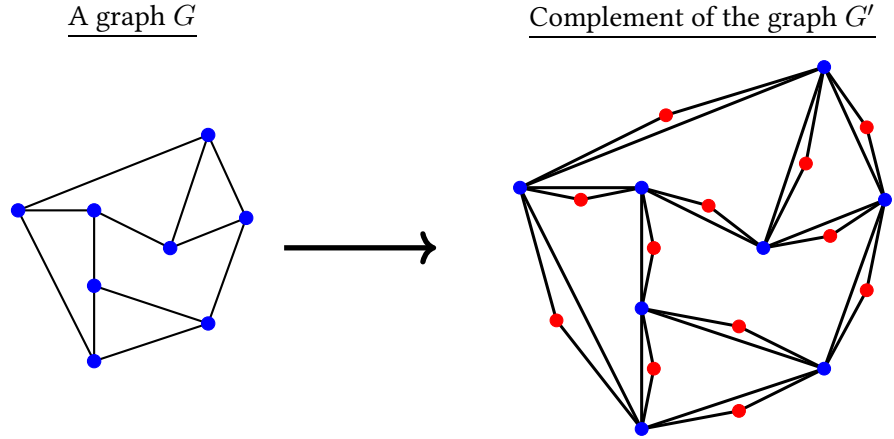
\begin{figure}[ht!]
\begin{center}
\begin{tikzpicture}[scale=0.5]

\node at (0,6) {\underline{Complement of the graph $G'$}}; 
\node at (-15,6) {\underline{A graph $G$}}; 
\draw[line width=0.7mm,->] (-11,0) -- (-7,0);

\node at (-15,0) {
\begin{tikzpicture}[scale=0.5]
\draw[thick] (-1.0,-3.0)--(-1.0,-1.0);
\draw[thick] (-1.0,-3.0)--(2.0,-2.0);
\draw[thick] (-1.0,-3.0)--(-3.0,1.0);
\draw[thick] (-1.0,-1.0)--(-1.0,1.0);
\draw[thick] (-1.0,-1.0)--(2.0,-2.0);
%\draw[thick,blue] (-1.0,-1.0)--(1.0,0.0);
%\draw[thick,blue] (-1.0,-1.0)--(-3.0,1.0);
\draw[thick] (-1.0,1.0)--(1.0,0.0);
\draw[thick] (-1.0,1.0)--(-3.0,1.0);
%\draw[thick,blue] (2.0,-2.0)--(1.0,0.0);
%\draw[thick,blue] (2.0,-2.0)--(2.0,3.0);
\draw[thick] (1.0,0.0)--(2.0,3.0);
\draw[thick] (-3.0,1.0)--(2.0,3.0);
\draw[thick] (2.0,3.0)--(3.0,0.8);%
\draw[thick] (2.0,-2.0)--(3.0,0.8);%
\draw[thick] (1.0,0.0)--(3.0,0.8);%

\filldraw[blue] (-1.0,-3.0) circle (5 pt);
\filldraw[blue] (-1.0,-1.0) circle (5 pt);
\filldraw[blue] (-1.0,1.0) circle (5 pt);
\filldraw[blue] (2.0,-2.0) circle (5 pt);
\filldraw[blue] (1.0,0.0) circle (5 pt);
\filldraw[blue] (-3.0,1.0) circle (5 pt);
\filldraw[blue] (2.0,3.0) circle (5 pt);
\filldraw[blue] (3.0,0.8) circle (5 pt);%
\end{tikzpicture}};

\node at (0,0) {
\begin{tikzpicture}[scale=0.8]
\draw[line width=0.4mm, black] (-1.0,-3.0) -- (-1.0, -1.0);
\draw[line width=0.4mm, black] (-1.0,-3.0) -- (2.0, -2.0);
\draw[line width=0.4mm, black] (-1.0,-3.0) -- (-3.0, 1.0);
\draw[line width=0.4mm, black] (-1.0,-3.0) -- (-0.8, -2.0);
\draw[line width=0.4mm, black] (-1.0,-3.0) -- (0.6, -2.7);
\draw[line width=0.4mm, black] (-1.0,-3.0) -- (-2.4, -1.2);
\draw[line width=0.4mm, black] (-1.0,-1.0) -- (-1.0, 1.0);
\draw[line width=0.4mm, black] (-1.0,-1.0) -- (2.0, -2.0);
\draw[line width=0.4mm, black] (-1.0,-1.0) -- (-0.8, -2.0);
\draw[line width=0.4mm, black] (-1.0,-1.0) -- (-0.8, 0.0);
\draw[line width=0.4mm, black] (-1.0,-1.0) -- (0.6, -1.3);
\draw[line width=0.4mm, black] (-1.0,1.0) -- (1.0, 0.0);
\draw[line width=0.4mm, black] (-1.0,1.0) -- (-3.0, 1.0);
\draw[line width=0.4mm, black] (-1.0,1.0) -- (-0.8, 0.0);
\draw[line width=0.4mm, black] (-1.0,1.0) -- (0.1, 0.7);
\draw[line width=0.4mm, black] (-1.0,1.0) -- (-2.0, 0.8);
\draw[line width=0.4mm, black] (2.0,-2.0) -- (3.0, 0.8);
\draw[line width=0.4mm, black] (2.0,-2.0) -- (0.6, -2.7);
\draw[line width=0.4mm, black] (2.0,-2.0) -- (0.6, -1.3);
\draw[line width=0.4mm, black] (2.0,-2.0) -- (2.7, -0.7);
\draw[line width=0.4mm, black] (1.0,0.0) -- (2.0, 3.0);
\draw[line width=0.4mm, black] (1.0,0.0) -- (3.0, 0.8);
\draw[line width=0.4mm, black] (1.0,0.0) -- (0.1, 0.7);
\draw[line width=0.4mm, black] (1.0,0.0) -- (1.7, 1.4);
\draw[line width=0.4mm, black] (1.0,0.0) -- (2.1, 0.2);
\draw[line width=0.4mm, black] (-3.0,1.0) -- (2.0, 3.0);
\draw[line width=0.4mm, black] (-3.0,1.0) -- (-2.4, -1.2);
\draw[line width=0.4mm, black] (-3.0,1.0) -- (-2.0, 0.8);
\draw[line width=0.4mm, black] (-3.0,1.0) -- (-0.6, 2.2);
\draw[line width=0.4mm, black] (2.0,3.0) -- (3.0, 0.8);
\draw[line width=0.4mm, black] (2.0,3.0) -- (1.7, 1.4);
\draw[line width=0.4mm, black] (2.0,3.0) -- (-0.6, 2.2);
\draw[line width=0.4mm, black] (2.0,3.0) -- (2.7, 2.0);
\draw[line width=0.4mm, black] (3.0,0.8) -- (2.7, -0.7);
\draw[line width=0.4mm, black] (3.0,0.8) -- (2.1, 0.2);
\draw[line width=0.4mm, black] (3.0,0.8) -- (2.7, 2.0);
\filldraw[blue] (-1.0,-3.0) circle (3 pt);
\filldraw[blue] (-1.0,-1.0) circle (3 pt);
\filldraw[blue] (-1.0,1.0) circle (3 pt);
\filldraw[blue] (2.0,-2.0) circle (3 pt);
\filldraw[blue] (1.0,0.0) circle (3 pt);
\filldraw[blue] (-3.0,1.0) circle (3 pt);
\filldraw[blue] (2.0,3.0) circle (3 pt);
\filldraw[blue] (3.0,0.8) circle (3 pt);
\filldraw[red] (-0.8,-2.0) circle (3 pt);
\filldraw[red] (0.6,-2.7) circle (3 pt);
\filldraw[red] (-2.4,-1.2) circle (3 pt);
\filldraw[red] (-0.8,0.0) circle (3 pt);
\filldraw[red] (0.6,-1.3) circle (3 pt);
\filldraw[red] (0.1,0.7) circle (3 pt);
\filldraw[red] (-2.0,0.8) circle (3 pt);
\filldraw[red] (2.7,-0.7) circle (3 pt);
\filldraw[red] (1.7,1.4) circle (3 pt);
\filldraw[red] (2.1,0.2) circle (3 pt);
\filldraw[red] (-0.6,2.2) circle (3 pt);
\filldraw[red] (2.7,2.0) circle (3 pt);
\end{tikzpicture}
};

\end{tikzpicture}
\end{center}

\caption{On the left we present an example of a cubic graph $G$ on $8$ vertices. On the right, we present the complement of the graph $G'$ constructed as explained above.}
\label{fig:delta_compl_1}
\end{figure}

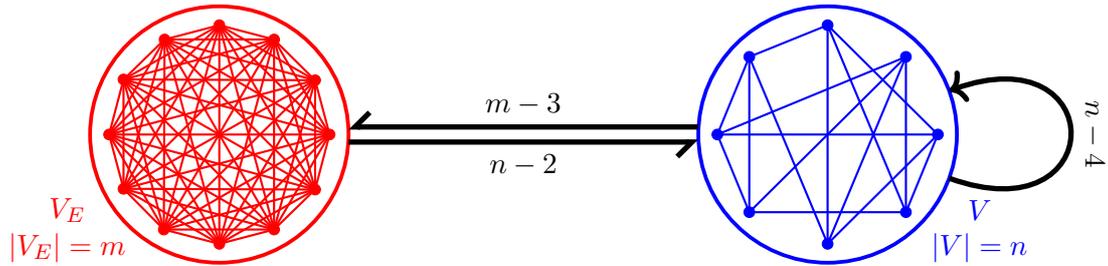
\begin{figure}[ht!]
\begin{center}
\begin{tikzpicture}[scale=1]
\draw[blue, thick] (1.45,0.00) -- (0.00, 1.45);
\draw[blue, thick] (1.45,0.00) -- (-1.45, 0.00);
\draw[blue, thick] (1.45,0.00) -- (-0.00, -1.45);
\draw[blue, thick] (1.45,0.00) -- (1.03, -1.03);
\draw[blue, thick] (1.03,1.03) -- (-1.45, 0.00);
\draw[blue, thick] (1.03,1.03) -- (-1.03, -1.03);
\draw[blue, thick] (1.03,1.03) -- (-0.00, -1.45);
\draw[blue, thick] (1.03,1.03) -- (1.03, -1.03);
\draw[blue, thick] (0.00,1.45) -- (-1.03, 1.03);
\draw[blue, thick] (0.00,1.45) -- (-0.00, -1.45);
\draw[blue, thick] (0.00,1.45) -- (1.03, -1.03);
\draw[blue, thick] (-1.03,1.03) -- (-1.45, 0.00);
\draw[blue, thick] (-1.03,1.03) -- (-1.03, -1.03);
\draw[blue, thick] (-1.03,1.03) -- (-0.00, -1.45);
\draw[blue, thick] (-1.45,0.00) -- (-1.03, -1.03);
\draw[blue, thick] (-1.03,-1.03) -- (1.03, -1.03);
\draw[red, thick] (-6.55,0.00) -- (-6.74, 0.72);
\draw[red, thick] (-6.55,0.00) -- (-7.28, 1.26);
\draw[red, thick] (-6.55,0.00) -- (-8.00, 1.45);
\draw[red, thick] (-6.55,0.00) -- (-8.72, 1.26);
\draw[red, thick] (-6.55,0.00) -- (-9.26, 0.73);
\draw[red, thick] (-6.55,0.00) -- (-9.45, 0.00);
\draw[red, thick] (-6.55,0.00) -- (-9.26, -0.72);
\draw[red, thick] (-6.55,0.00) -- (-8.73, -1.26);
\draw[red, thick] (-6.55,0.00) -- (-8.00, -1.45);
\draw[red, thick] (-6.55,0.00) -- (-7.28, -1.26);
\draw[red, thick] (-6.55,0.00) -- (-6.74, -0.73);
\draw[red, thick] (-6.74,0.72) -- (-7.28, 1.26);
\draw[red, thick] (-6.74,0.72) -- (-8.00, 1.45);
\draw[red, thick] (-6.74,0.72) -- (-8.72, 1.26);
\draw[red, thick] (-6.74,0.72) -- (-9.26, 0.73);
\draw[red, thick] (-6.74,0.72) -- (-9.45, 0.00);
\draw[red, thick] (-6.74,0.72) -- (-9.26, -0.72);
\draw[red, thick] (-6.74,0.72) -- (-8.73, -1.26);
\draw[red, thick] (-6.74,0.72) -- (-8.00, -1.45);
\draw[red, thick] (-6.74,0.72) -- (-7.28, -1.26);
\draw[red, thick] (-6.74,0.72) -- (-6.74, -0.73);
\draw[red, thick] (-7.28,1.26) -- (-8.00, 1.45);
\draw[red, thick] (-7.28,1.26) -- (-8.72, 1.26);
\draw[red, thick] (-7.28,1.26) -- (-9.26, 0.73);
\draw[red, thick] (-7.28,1.26) -- (-9.45, 0.00);
\draw[red, thick] (-7.28,1.26) -- (-9.26, -0.72);
\draw[red, thick] (-7.28,1.26) -- (-8.73, -1.26);
\draw[red, thick] (-7.28,1.26) -- (-8.00, -1.45);
\draw[red, thick] (-7.28,1.26) -- (-7.28, -1.26);
\draw[red, thick] (-7.28,1.26) -- (-6.74, -0.73);
\draw[red, thick] (-8.00,1.45) -- (-8.72, 1.26);
\draw[red, thick] (-8.00,1.45) -- (-9.26, 0.73);
\draw[red, thick] (-8.00,1.45) -- (-9.45, 0.00);
\draw[red, thick] (-8.00,1.45) -- (-9.26, -0.72);
\draw[red, thick] (-8.00,1.45) -- (-8.73, -1.26);
\draw[red, thick] (-8.00,1.45) -- (-8.00, -1.45);
\draw[red, thick] (-8.00,1.45) -- (-7.28, -1.26);
\draw[red, thick] (-8.00,1.45) -- (-6.74, -0.73);
\draw[red, thick] (-8.72,1.26) -- (-9.26, 0.73);
\draw[red, thick] (-8.72,1.26) -- (-9.45, 0.00);
\draw[red, thick] (-8.72,1.26) -- (-9.26, -0.72);
\draw[red, thick] (-8.72,1.26) -- (-8.73, -1.26);
\draw[red, thick] (-8.72,1.26) -- (-8.00, -1.45);
\draw[red, thick] (-8.72,1.26) -- (-7.28, -1.26);
\draw[red, thick] (-8.72,1.26) -- (-6.74, -0.73);
\draw[red, thick] (-9.26,0.73) -- (-9.45, 0.00);
\draw[red, thick] (-9.26,0.73) -- (-9.26, -0.72);
\draw[red, thick] (-9.26,0.73) -- (-8.73, -1.26);
\draw[red, thick] (-9.26,0.73) -- (-8.00, -1.45);
\draw[red, thick] (-9.26,0.73) -- (-7.28, -1.26);
\draw[red, thick] (-9.26,0.73) -- (-6.74, -0.73);
\draw[red, thick] (-9.45,0.00) -- (-9.26, -0.72);
\draw[red, thick] (-9.45,0.00) -- (-8.73, -1.26);
\draw[red, thick] (-9.45,0.00) -- (-8.00, -1.45);
\draw[red, thick] (-9.45,0.00) -- (-7.28, -1.26);
\draw[red, thick] (-9.45,0.00) -- (-6.74, -0.73);
\draw[red, thick] (-9.26,-0.72) -- (-8.73, -1.26);
\draw[red, thick] (-9.26,-0.72) -- (-8.00, -1.45);
\draw[red, thick] (-9.26,-0.72) -- (-7.28, -1.26);
\draw[red, thick] (-9.26,-0.72) -- (-6.74, -0.73);
\draw[red, thick] (-8.73,-1.26) -- (-8.00, -1.45);
\draw[red, thick] (-8.73,-1.26) -- (-7.28, -1.26);
\draw[red, thick] (-8.73,-1.26) -- (-6.74, -0.73);
\draw[red, thick] (-8.00,-1.45) -- (-7.28, -1.26);
\draw[red, thick] (-8.00,-1.45) -- (-6.74, -0.73);
\draw[red, thick] (-7.28,-1.26) -- (-6.74, -0.73);
\filldraw[blue] (1.45,0.00) circle (2 pt);
\filldraw[blue] (1.03,1.03) circle (2 pt);
\filldraw[blue] (0.00,1.45) circle (2 pt);
\filldraw[blue] (-1.03,1.03) circle (2 pt);
\filldraw[blue] (-1.45,0.00) circle (2 pt);
\filldraw[blue] (-1.03,-1.03) circle (2 pt);
\filldraw[blue] (-0.00,-1.45) circle (2 pt);
\filldraw[blue] (1.03,-1.03) circle (2 pt);
\filldraw[red] (-6.55,0.00) circle (2 pt);
\filldraw[red] (-6.74,0.72) circle (2 pt);
\filldraw[red] (-7.28,1.26) circle (2 pt);
\filldraw[red] (-8.00,1.45) circle (2 pt);
\filldraw[red] (-8.72,1.26) circle (2 pt);
\filldraw[red] (-9.26,0.73) circle (2 pt);
\filldraw[red] (-9.45,0.00) circle (2 pt);
\filldraw[red] (-9.26,-0.72) circle (2 pt);
\filldraw[red] (-8.73,-1.26) circle (2 pt);
\filldraw[red] (-8.00,-1.45) circle (2 pt);
\filldraw[red] (-7.28,-1.26) circle (2 pt);
\filldraw[red] (-6.74,-0.73) circle (2 pt);

\draw[line width=0.5mm, blue] (0.0,0.0) ellipse (1.7 cm and 1.7 cm);
\node[blue] () at (2, -1.0) {$V$};
\node[blue] () at (2, -1.5) {$|V|=n$};

\draw[line width=0.5mm, red] (-8.0,0.0) ellipse (1.7 cm and 1.7 cm);
\node[red] () at (-10, -1.0) {$V_E$};
\node[red] () at (-10, -1.5) {$|V_E|=m$};

\draw[-{Straight Barb[right]}, line width=0.7mm, black] (-6.3,-0.1) -- (-1.7,-0.1);
\draw[-{Straight Barb[right]}, line width=0.7mm, black] (-1.7,0.1) -- (-6.3,0.1);
\node[rotate=0] at (-4.0,0.4) {$m-3$};
\node[rotate=0] at (-4.0,-0.4) {$n-2$};

\draw[line width = 0.7mm,->] (1.6,-0.58) to [out=-20,in=20,looseness=5] (1.6,0.58);
\node[rotate=-90] () at (3.5,0) {$n-4$}; 
\end{tikzpicture}
\end{center}

\caption{The graph $G'$.}
\label{fig:delta_compl_2}
\end{figure}

\paragraph{Parameter $\Delta(\overline{G'})$:} Hereafter we justify that $\Delta(\overline{G'}) = 6$.
\begin{itemize}
    \item Since $G$ is cubic, every vertex $u\in V_G$ is adjacent to exactly three other vertices, say $v,w,x$. Let $u',v',w',x'$ be the corresponding vertices in $G'$. Consequently, $u'$ is adjacent to all vertices in $G'$ except $v',w',x',e_{uv},e_{uw}$ and $e_{ux}$. Thus, $d_{\overline{G'}}(u) = 6$.

    \item Every $e_{uv} \in V_E$ is adjacent to all vertices in $G'$ except $u'$ and $v'$, where $u',v'$ are the vertices corresponding to $u,v$ in $G$. Thus, $d_{\overline{G'}}(e_{uv}) = 2$.
\end{itemize} 

\paragraph{Soundness:} We show that a graph $G$ has an independent set of size $k$ if and only if $G' = (V',E')$, constructed as explained above, contains a \PDS{} of size $k'=m+k$.

\begin{itemize}
    \item[$\implies$] First, let $I$ be an independent set of size $k$ in $G$. Let $I'$ be the set of vertices in $G'$ corresponding to the set $I$. Let $S = I' \cup V_E$. First, notice that $|S| = |I'|+|V_E| = m+k = k'$. Now, we have to show that all the vertices are satisfied with respect to $S$.
    \begin{itemize}
        \item Let $u \in I$ and $v,w,x$ be the neighbors of $u$ in $G$. Since $I$ is an independent set and $u \in I$, we know that $v,w,x \notin I$. Let $u',v',w'$ and $x'$ be the corresponding vertices in $G'$.
        Then, $u'$ has exactly three non-neighbors in $S$, namely $e_{uv},e_{uw},e_{ux}$, and three non-neighbors in $V'\setminus S$, namely $v',w',x'$. We recall that $|S|=k' > |V'|/2$ and thus, $|S| > |V'\setminus S|$. Then, $u'$ is satisfied with respect to $S$ if and only if 
        $$ \frac{d_{S}(u')}{|S|-1} = \frac{|S|-4}{|S|-1} \ge \frac{|\overline{S}|-3}{|\overline{S}|} = \frac{d_{\overline{S}}(u')}{|\overline{S}|}. $$
        By \cref{lemma:xgeqY}, the inequality above holds. We conclude that vertices $V_G$ in $S$ are satisfied.
        
        \item Let $e_{uv} \in V_E$. Then, $e_{uv}$ has two non-neighbors in $V'$, namely $u'$ and $v'$. Since $uv \in E$, and $I$ is an independent set, we know that at most one of them is in $S = V_E \cup I'$ and at least one of them is in $V'\setminus S$. Suppose for a contradiction that $e_{uv}$ is not satisfied with respect to $S$. Then
        $$ \frac{d_{S}(e_{uv})}{|S|-1} = \frac{|S|-2}{|S|-1} < \frac{|\overline{S}|-1}{|\overline{S}|} \geq \frac{d_{\overline{S'}}(e_{uv})}{|\overline{S'}|}\,, $$
        a contradiction. Hence, $e_{uv}$ is satisfied.
    \end{itemize}
    Thus, $S$ is a \PDS{} of size $k'$ in $G'$.

    \item[$\impliedby$] Let $S$ is a \PDS{} of size at least $k'$ in $G'$. Notice that for all $uv \in E$ it holds that $N_{G'}(u) \subseteq N_{G'}(e_{uv})$. Thus, for all $uv \in E$ such that $u \in S$ and $e_{uv} \notin S$, we have that $S' = (S\setminus \{u\}) \cup \{e_{uv}\}$ is a \PDS{} of size at least $k'$ in $G'$. Below, we show that the vertices remain satisfied with respect to $S'$.
    \begin{itemize}
        \item Let $x \in S \setminus \{u\}$. Since $S$ is a \PDS{} by assumption, we know that $x$ is satisfied with respect to $S$. Notice that $|S| = |S'|$, $N_{G'}(u) \subseteq N_{G'}(e_{uv})$ and $d_{S'}(x) \geq d_{S}(x)$. Hence, $x$ is satisfied with respect to $S'$.

        \item Also, $e_{uv}$ has at least one non-neighbor in $V'\setminus S'$, namely $u$ and at most one non-neighbor in $S'$ (it has two non-neighbors in total). Thus, 
        $$ \frac{d_{S}(e_{uv})}{|S'|-1} = \frac{|S'|-2}{|S'|-1} = 1 - \frac{1}{|S'|} > 1-\frac{1}{|V'\setminus S'|} = \frac{|V'\setminus S'|-1}{|V'\setminus S'|} = \frac{d_{V'\setminus S'}(e_{uv})}{|V'\setminus S'|}. $$
        Hence, $e_{uv}$ is satisfied with respect to $S'$.
    \end{itemize}
    By iterating the process of swapping $u'$ with $e_{vu}$, as explained above, we obtain a \PDS{} $S^*$ in $G'$ of size $\geq k'$ such that for all $uv\in E$, it holds that $u' \in S^*$ only if $e_{uv} \in S^*$. For the sake of simplicity, we refer to $S^*$ as $S$.

    Let $I$ be the set of vertices in $G$ corresponding to the set $I'=S \cap V_G$ in $G'$. Further, we show that $I$ is an independent set of $G$. Indeed, for $uv \in E$ if either $u' \in S$ or $v' \in S$ then $e_{uv} \in S$. Also, if $e_{uv} \in S$ then both $u'$ and $v'$ cannot be in $S$ otherwise $e_{uv}$ have two non-neighbors in $S$ and no non-neighbor in $V'\setminus S$, i.e.
    $$ \frac{d_{S}(e_{uv})}{|S|-1} = \frac{|S|-3}{|S|-1} < \frac{|V'\setminus S|}{|V'\setminus S|} = \frac{d_{V'\setminus S}(e_{uv})}{|V'\setminus S|}. $$
    Hence, for $u,v \in I$, it holds that $uv \notin E$ and $I$ is an independent set of $G$. Moreover, since $S = I' \cup (S \cap V_E)$, $|S| \geq k'=m+k$ and $|S \cap V_E| \leq |V_E| = m$, it holds that $|I'|= |I|\geq k$.
\end{itemize}
This concludes the proof of \cref{thm:MaxPDS_NPh_Delta_complement}.
\end{proof}

Notice, that the \PDS{} considered in the proof of \cref{thm:MaxPDS_NPh_Delta_complement} induces a connected subgraph of $G'$. This implies the following result.

\begin{corollary}
    \connmaxPDS{} is \NP{}-hard on graphs $G$ such that $\Delta(\overline{G}) = 6$.
\end{corollary}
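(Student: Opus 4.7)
The plan is to re-use verbatim the construction of \cref{thm:MaxPDS_NPh_Delta_complement}, and argue that the equivalence between independent sets of size $k$ in $G$ and \PDS{s} of size $k'=m+k$ in $G'$ in fact gives an equivalence with \emph{connected} \PDS{s} of the same size. This requires only two things: first, that the \PDS{} $S = I' \cup V_E$ produced in the forward direction is connected in $G'$; second, that the reverse direction needs no modification, since a connected \PDS{} of size at least $k'$ is in particular a \PDS{} of size at least $k'$, so the argument of \cref{thm:MaxPDS_NPh_Delta_complement} applies unchanged and yields the desired independent set.

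The only thing to verify carefully is connectivity of $G'[S]$. I would first recall that by construction $V_E$ is a clique of $G'$, hence $G'[V_E]$ is connected. Then I would show that every vertex of $I'$ has at least one neighbor in $V_E$. Indeed, a vertex $u' \in I'$ is non-adjacent in $G'$ only to its three $V_G$-neighbors and to the three edge-vertices $e_{uv},e_{uw},e_{ux}$ corresponding to edges incident to $u$ in the original cubic graph $G$. Thus $u'$ has exactly $|V_E|-3 = m-3$ neighbors in $V_E$. Since $G$ is a cubic graph on $n \geq 4$ vertices (the case $n=3$ is trivial and can be discarded), we have $m = 3n/2 \geq 6 > 3$, so $u'$ has at least one neighbor in $V_E \subseteq S$. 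Consequently every vertex of $I'$ is connected to the clique $V_E$ in $G'[S]$, and the whole induced subgraph $G'[S]$ is connected.

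Combining the two directions, the reduction of \cref{thm:MaxPDS_NPh_Delta_complement} is already a valid reduction from \textsc{Independent Set} on cubic graphs to \connmaxPDS{} on graphs $G$ with $\Delta(\overline{G})=6$. The only step that deserves a sentence of its own is the size bound $m \geq 4$ that ensures each $u' \in I'$ has a neighbor in $V_E$; I do not anticipate any real obstacle here, as the previous theorem has already established both the parameter bound $\Delta(\overline{G'})=6$ and the soundness of the equivalence.
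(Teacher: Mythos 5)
Your proposal is correct and follows the same route as the paper, which simply observes that the \PDS{} $S = I' \cup V_E$ used in the proof of \cref{thm:MaxPDS_NPh_Delta_complement} induces a connected subgraph of $G'$ and that the reverse direction carries over since any connected \PDS{} is a \PDS{}. Your explicit verification that each $u' \in I'$ has $m-3 \geq 1$ neighbors in the clique $V_E$ is exactly the detail the paper leaves implicit.
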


Note that we only use the fact that $G$ is cubic to obtain $\Delta(\overline{G'})=6$. Thus, the reduction works for any graph $G$.
Also, the \textsc{Independent Set} is \NP{}-hard on planar graphs (see \cite{indep_cubic}). If $G$ is a planar graph, then the complement of the graph $G'$ constructed as explained above is planar as well. Thus, using the exact same reduction, we can show that \maxPDS{} is \NP{}-hard when restricted to the complement of planar graphs.

\begin{corollary}
\maxPDS{} is \NP{}-hard when restricted to the complement of planar graphs.
\end{corollary}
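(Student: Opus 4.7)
The plan is to reuse the polynomial-time reduction from Theorem \ref{thm:MaxPDS_NPh_Delta_complement} verbatim, but starting from \textsc{Independent Set} on planar cubic graphs, which is also \NP{}-hard \cite{indep_cubic}. The correctness of the reduction does not need to be re-established: as the authors note just above the corollary, cubicity of the source graph was used only to bound $\Delta(\overline{G'})$, so the equivalence ``$G$ has an independent set of size $k$ iff $G'$ has a \PDS{} of size $k'=m+k$'' transfers without change. The only fresh ingredient is to check that $\overline{G'}$ is planar whenever $G$ is planar.

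For the planarity step, I would use the description of $\overline{G'}$ extracted from the proof of Theorem \ref{thm:MaxPDS_NPh_Delta_complement}: $\overline{G'}$ is obtained from $G$ by keeping all edges of $G$ and, for every edge $uv\in E(G)$, adding a fresh vertex $e_{uv}$ adjacent to both $u$ and $v$. In other words, each edge $uv$ of $G$ is fattened into a triangle $u,v,e_{uv}$, with the new vertex being pendant-like on the edge. Given any planar embedding of $G$, I would extend it to an embedding of $\overline{G'}$ by processing edges one by one: for each edge $uv$, pick either of the two faces incident to its drawing, place $e_{uv}$ inside that face in a tiny disk around the midpoint of $uv$, and draw the two new arcs $e_{uv}u$ and $e_{uv}v$ inside that disk so that they hug the curve representing $uv$. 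Choosing the disks pairwise disjoint (and each small enough to avoid every other edge of $G$) ensures that no new arc crosses any edge of $G$ or any arc added for a different edge, so the resulting drawing is planar.

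The main ``obstacle'' is really nonexistent here: the only nontrivial content is the embedding argument above, and it is a standard planarity-preservation fact (attaching a degree-two vertex in a face adjacent to the edge joining its two neighbors). Once this is in place, combining it with the reduction of Theorem \ref{thm:MaxPDS_NPh_Delta_complement} and the \NP{}-hardness of \textsc{Independent Set} on planar (cubic) graphs immediately gives the \NP{}-hardness of \maxPDS{} restricted to complements of planar graphs, as claimed.
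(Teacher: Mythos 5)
Your proposal is correct and follows essentially the same route as the paper: the authors likewise observe that cubicity is only used to bound $\Delta(\overline{G'})$, invoke \NP{}-hardness of \textsc{Independent Set} on planar graphs, and note that $\overline{G'}$ is obtained from $G$ by replacing each edge with a triangle, hence stays planar. Your explicit embedding argument, and your choice to start from planar \emph{cubic} instances (which keeps the bound $k'>|V'|/2$ used in the soundness proof), only make the same argument slightly more careful than the paper's one-line justification.
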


Further, we present \NP{}-harness results on graphs which are complements of $2$-degenerate bipartite graphs. Before we move on to the main theorem, we introduce some preliminary results.  

\begin{lemma}\label{rk:tech_degen_compl}
Let $n,k,m$ be positive integers such that $n$ is even, $m=\frac{3}{2}n$ and $2 \leq k \leq n$. Let $x = n-k+3$. Then, 
$$\frac{xm+k-1-3x}{xm+k-1} \geq \frac{n-k}{n-k+2}.$$
\end{lemma}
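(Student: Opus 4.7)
The plan is to reduce the stated inequality to an essentially trivial one by writing both sides in the form $1 - (\text{something})$ and clearing denominators. Explicitly, since the inequality is of the shape $\frac{A-3x}{A} \ge \frac{n-k}{n-k+2}$ with $A = xm+k-1$, subtracting each side from $1$ turns it into
\[
\frac{3x}{xm+k-1} \;\le\; \frac{2}{n-k+2},
\]
and this reformulation is equivalent because both denominators are positive (note $xm + k - 1 \ge 3x > 0$ and $n - k + 2 \ge 2$ under the hypotheses).

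Next I would cross-multiply and substitute $x = n-k+3$ together with $m = \tfrac{3}{2}n$. The left-hand side becomes $3(n-k+3)(n-k+2)$ and the right-hand side becomes $2(n-k+3)\cdot\tfrac{3n}{2} + 2(k-1) = 3n(n-k+3) + 2(k-1)$. Thus the desired inequality is equivalent to
\[
3(n-k+3)(n-k+2) \;\le\; 3n(n-k+3) + 2(k-1).
\]
Moving everything to one side and factoring out $3(n-k+3)$ from the terms on the left gives
\[
3(n-k+3)\bigl[(n-k+2) - n\bigr] \;\le\; 2(k-1), \qquad \text{i.e.,} \qquad 3(n-k+3)(2-k) \;\le\; 2(k-1).
\]

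At this point the proof is immediate: the hypothesis $k \ge 2$ guarantees $2-k \le 0$ while $n - k + 3 > 0$, so the left-hand side is non-positive; meanwhile $k-1 \ge 1$, so the right-hand side is strictly positive. Hence the inequality holds (in fact strictly, unless $k=2$, in which case both sides are ordered as $0 \le 2$). There is no real obstacle here; the only thing to watch is that the direction of the inequality is preserved at the cross-multiplication step, which is guaranteed by the positivity of $xm+k-1$ and $n-k+2$ noted at the outset.
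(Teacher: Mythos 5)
Your proof is correct and is essentially the same argument as the paper's: both cross-multiply (justified by positivity of the denominators), substitute $m=\tfrac{3}{2}n$ and $x=n-k+3$, and reduce to the inequality $3x(2-k)\le 2(k-1)$, which holds because $k\ge 2$. The only difference is that the paper phrases it as a proof by contradiction while you argue directly; the content is identical.
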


\begin{proof}
Suppose for a contradiction that $\frac{xm+k-1-3x}{xm+k-1} < \frac{n-k}{n-k+2}$.
\begin{align*}
& \frac{xm+k-1-3x}{xm+k-1} < \frac{n-k}{n-k+2} \\
\iff \quad & (xm+k-1-3x)(n-k+2) < (xm+k-1)(n-k) \\
\iff \quad & (xm+k-1)(n-k+2) < (xm+k-1)(n-k) + 3x(n-k+2) \\
\iff \quad & 2xm+2k-2 < 3x(n-k+2) \\
\iff \quad & 3xn + 2k-2 < 3xn-3xk+6x \\
\iff \quad & 3xk + 2k-2 < 6x,
\end{align*}
which contradicts the fact that $k \geq 2$.
\end{proof}

Hereafter, we move on to the main theorem.

\begin{theorem}\label{thm:maxPDS_NPh_degen_complement}
\maxPDS{} is \NP{}-hard when restricted to graphs $G$ such that $\degen(\overline{G})=2$ and $\overline{G}$ is bipartite.
\end{theorem}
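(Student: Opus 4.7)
The plan is to adapt the reduction from \cref{thm:MaxPDS_NPh_Delta_complement}, again from \textsc{Independent Set} on cubic graphs, but modifying the edge-gadget so that $\overline{G'}$ has no triangles (hence can be made bipartite) and so that each edge-gadget vertex has only two non-neighbors in $G'$ (hence has degree $2$ in $\overline{G'}$, giving $2$-degeneracy). The key idea is to ``parallelize'' the edge-vertices. Given $G = (V,E)$ cubic with $|V|=n$ and $m = \tfrac{3}{2}n$, and a target independent-set size $k$, I set $x = n-k+3$ as in \cref{rk:tech_degen_compl} and build $G'$ on vertex set $V_G \cup V_E'$, where $V_G$ is a copy of $V$ and $V_E' = \{e_{uv}^i : uv \in E,\ i \in \intInterval{1,x}\}$ contains $x$ parallel copies per edge (possibly with two additional auxiliary vertices used only to calibrate sizes so that the denominator $n-k+2$ in \cref{rk:tech_degen_compl} appears). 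Both $V_G$ and $V_E'$ induce cliques in $G'$, and the only missing edges between the two sides are $u\, e_{uv}^i$ and $v\, e_{uv}^i$ for each $uv \in E$ and each $i \in \intInterval{1,x}$. Then $\overline{G'}$ is bipartite between $V_G$ and $V_E'$, and the elimination order that removes all vertices of $V_E'$ (each of degree $2$ in $\overline{G'}$) before the now-isolated vertices of $V_G$ witnesses $\degen(\overline{G'}) = 2$.

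Set $k' = xm + k$. For the forward direction, given an independent set $I$ of size $k$ in $G$ with copy $I' \subseteq V_G$, I would take $S = I' \cup V_E'$. Each $u \in I'$ has exactly $3x$ non-neighbors in $G'$ (the $x$ copies of each of its three incident edges), all lying in $S$ because $I$ is independent, so its satisfaction reduces to the inequality $\tfrac{xm+k-1-3x}{xm+k-1} \ge \tfrac{n-k}{n-k+2}$ supplied by \cref{rk:tech_degen_compl}. Each $e_{uv}^i \in V_E'$ has only two non-neighbors $u,v$, and independence of $I$ puts at most one of them in $S$; combined with $|S| = k' > |V'|/2$, the observation that a vertex with at most one non-neighbor on each side is satisfied immediately gives the \PDS{} inequality.

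For the backward direction, given a \PDS{} $S$ with $|S| \ge k'$, the plan is to use a swap argument in the spirit of \cref{thm:MaxPDS_NPh_Delta_complement}: since $N_{G'}(u) \subseteq N_{G'}(e_{uv}^i)$ for every $u \in V_G$ and every copy $e_{uv}^i$ of an incident edge, replacing $u \in V_G \cap S$ by a missing $e_{uv}^i \notin S$ cannot hurt any other vertex's satisfaction and still satisfies $e_{uv}^i$ itself. Iterating, I reach a \PDS{} $S^*$ of at least the same size with $V_E' \subseteq S^*$, forcing $I' := S^* \cap V_G$ to have size at least $k$. Finally, if some edge $uv \in E$ had both endpoints in $I'$, every copy $e_{uv}^i$ would have two non-neighbors in $S^*$ and none in $\overline{S^*}$, violating the \PDS{} inequality $\tfrac{d_{S^*}(e_{uv}^i)}{|S^*|-1} \ge \tfrac{d_{\overline{S^*}}(e_{uv}^i)}{|\overline{S^*}|}$; hence $I'$ is an independent set of $G$.

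The main obstacle is the precise calibration of $x$, and possibly of one or two bookkeeping vertices, so that the two directions both work: $x$ must be large enough that the forced inclusion $V_E' \subseteq S^*$ in the backward direction is tight, yet small enough that the forward direction's \PDS{} inequality for vertices of $I'$ still holds. \cref{rk:tech_degen_compl} is exactly the algebraic identity that makes the choice $x = n-k+3$ do both simultaneously; the rest is careful verification of the two-sided satisfaction conditions and of the swap argument. Since $S = I' \cup V_E'$ induces a connected subgraph of $G'$ (all of $V_E'$ is a clique and every vertex of $I'$ has neighbors in $V_E'$ corresponding to edges not incident to it, which exist because $G$ is cubic with $n \geq 5$), the same reduction simultaneously yields \NP{}-hardness of \connmaxPDS{} on the same class.
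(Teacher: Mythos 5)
Your overall plan matches the paper's (multiply each edge vertex into $x=n-k+3$ copies, make $V_G$ and the edge side cliques, invoke \cref{rk:tech_degen_compl}), but two steps fail as written. First, the construction without the auxiliary vertices breaks the forward direction: for $u\in I'$, \emph{all} $3x$ of its non-neighbors (the copies of its three incident edges) lie inside $S=I'\cup V_E'$, so $d_{\overline{S}}(u)=|\overline{S}|$ and the \PDS{} condition demands $\frac{|S|-1-3x}{|S|-1}\geq 1$, which is false. The two extra vertices you mention in passing are not optional bookkeeping: they must be non-adjacent in $G'$ to every vertex of $V_G$ and adjacent to everything in $V_E'$, and they must end up in $\overline{S}$. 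Only then does $u\in I'$ have exactly two non-neighbors in $\overline{S}$, turning the right-hand side into $\frac{n-k}{n-k+2}$ so that \cref{rk:tech_degen_compl} applies; this placement is also what keeps $\overline{G'}$ bipartite (they join the $V_E'$ clique) and is needed in the degeneracy elimination order. The paper's $a^*,b^*$ play exactly this role.

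Second, your backward direction rests on $N_{G'}(u)\subseteq N_{G'}(e_{uv}^i)$, which is false here: since $V_G$ is a clique in $G'$, $v\in N_{G'}(u)$ but $v\notin N_{G'}(e_{uv}^i)$. Concretely, swapping $u$ out for $e_{uv}^i$ decreases $d_S(v)$ and increases $d_{\overline{S}}(v)$ when $v\in S$, so the swap can destroy $v$'s satisfaction and the iteration is not justified. The swap lemma belongs to the $\Delta(\overline{G})=6$ construction, where $V_G$ is \emph{not} a clique. The correct (and simpler) argument — and the real reason for taking $x=n-k+3$ copies — is pigeonhole: $|\overline{S}|\leq |V'|-k'=n-k+2<x$, so for every edge $uv$ at least one copy $e_{uv}^i$ already lies in $S$, and its satisfaction forces at most one of $u,v$ into $S$, making $S\cap V_G$ independent with no swapping at all. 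You then still need the small counting argument (handling how many of $a^*,b^*$ and how many edge copies lie in $S$) to conclude $|S\cap V_G|\geq k$; the paper does this by a short case analysis that your sketch omits.
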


\begin{proof}
Similarly, as above, we provide a polynomial-time reduction from the \textsc{Independent Set} on cubic graphs.

\paragraph{Construction:} Given an instance of the \textsc{Independent Set}, that is, a cubic graph $G = (V,E)$ and a positive integer $k$, we construct $G' =(V',E')$, a positive integer $k'$, an instance of \maxPDS{} as follows. Let $|V| =  n$, $ |E| = m$, $x=n-k+3$, and $k \geq 4$.

\begin{itemize}
    \item The vertex set $V'$ is constructed as follows.
    \begin{itemize}
        \item We create a copy of vertices in $V$. We denote by $V_G$ the set formed by such vertices.

        \item We create $x$ vertices $e_{uv}^{1},...,e_{uv}^{x}$ for each $uv \in E$. We denote the corresponding set of vertices by $V_E$.

        \item We create two new vertices $a^*,b^*$.
    \end{itemize}

    \item The edge set $E'$ is constructed as follows.
    \begin{itemize}
        \item $V_G$ forms a clique in $G'$,

        \item $V_E \cup \{a^*,b^*\}$ forms a clique in $G'$.

        \item For every $uv \in E$ and every $i \in \intInterval{1,x}$, $e_{uv}^{i}$ is adjacent to all vertices in $V_G$ other than $u'$ and $v'$, where $u'$ and $v'$ are vertices of $G'$ corresponding to the vertices $u,v$ in $G$.
    \end{itemize}
\end{itemize}

Note that intuitively, the complement of $G'$ is a graph obtained by subdividing each edge $uv \in E$ into a path of length 2. Then, we have $x$ copies of the vertices $e_{uv}$. Finally, two vertices $a^*,b^*$ are added along with all possible edges between them and $V_E$. See \Cref{fig:degen_complement} for an illustration of $G'$.

\begin{figure}[!ht]
\centering
\begin{tikzpicture}[scale=1]
\draw[thick,blue] (7.30,0.00) -- (6.92,0.92);
\draw[thick,blue] (7.30,0.00) -- (6.00,1.30);
\draw[thick,blue] (7.30,0.00) -- (5.08,0.92);
\draw[thick,blue] (7.30,0.00) -- (4.70,0.00);
\draw[thick,blue] (7.30,0.00) -- (5.08,-0.92);
\draw[thick,blue] (7.30,0.00) -- (6.00,-1.30);
\draw[thick,blue] (7.30,0.00) -- (6.92,-0.92);
\draw[thick,blue] (6.92,0.92) -- (6.00,1.30);
\draw[thick,blue] (6.92,0.92) -- (5.08,0.92);
\draw[thick,blue] (6.92,0.92) -- (4.70,0.00);
\draw[thick,blue] (6.92,0.92) -- (5.08,-0.92);
\draw[thick,blue] (6.92,0.92) -- (6.00,-1.30);
\draw[thick,blue] (6.92,0.92) -- (6.92,-0.92);
\draw[thick,blue] (6.00,1.30) -- (5.08,0.92);
\draw[thick,blue] (6.00,1.30) -- (4.70,0.00);
\draw[thick,blue] (6.00,1.30) -- (5.08,-0.92);
\draw[thick,blue] (6.00,1.30) -- (6.00,-1.30);
\draw[thick,blue] (6.00,1.30) -- (6.92,-0.92);
\draw[thick,blue] (5.08,0.92) -- (4.70,0.00);
\draw[thick,blue] (5.08,0.92) -- (5.08,-0.92);
\draw[thick,blue] (5.08,0.92) -- (6.00,-1.30);
\draw[thick,blue] (5.08,0.92) -- (6.92,-0.92);
\draw[thick,blue] (4.70,0.00) -- (5.08,-0.92);
\draw[thick,blue] (4.70,0.00) -- (6.00,-1.30);
\draw[thick,blue] (4.70,0.00) -- (6.92,-0.92);
\draw[thick,blue] (5.08,-0.92) -- (6.00,-1.30);
\draw[thick,blue] (5.08,-0.92) -- (6.92,-0.92);
\draw[thick,blue] (6.00,-1.30) -- (6.92,-0.92);
\draw[thick,red] (1.30,0.00) -- (1.13,0.65);
\draw[thick,red] (1.30,0.00) -- (0.65,1.13);
\draw[thick,red] (1.30,0.00) -- (0.00,1.30);
\draw[thick,red] (1.30,0.00) -- (-0.65,1.13);
\draw[thick,red] (1.30,0.00) -- (-1.13,0.65);
\draw[thick,red] (1.30,0.00) -- (-1.30,0.00);
\draw[thick,red] (1.30,0.00) -- (-1.13,-0.65);
\draw[thick,red] (1.30,0.00) -- (-0.65,-1.13);
\draw[thick,red] (1.30,0.00) -- (-0.00,-1.30);
\draw[thick,red] (1.30,0.00) -- (0.65,-1.13);
\draw[thick,red] (1.30,0.00) -- (1.13,-0.65);
\draw[thick,red] (1.13,0.65) -- (0.65,1.13);
\draw[thick,red] (1.13,0.65) -- (0.00,1.30);
\draw[thick,red] (1.13,0.65) -- (-0.65,1.13);
\draw[thick,red] (1.13,0.65) -- (-1.13,0.65);
\draw[thick,red] (1.13,0.65) -- (-1.30,0.00);
\draw[thick,red] (1.13,0.65) -- (-1.13,-0.65);
\draw[thick,red] (1.13,0.65) -- (-0.65,-1.13);
\draw[thick,red] (1.13,0.65) -- (-0.00,-1.30);
\draw[thick,red] (1.13,0.65) -- (0.65,-1.13);
\draw[thick,red] (1.13,0.65) -- (1.13,-0.65);
\draw[thick,red] (0.65,1.13) -- (0.00,1.30);
\draw[thick,red] (0.65,1.13) -- (-0.65,1.13);
\draw[thick,red] (0.65,1.13) -- (-1.13,0.65);
\draw[thick,red] (0.65,1.13) -- (-1.30,0.00);
\draw[thick,red] (0.65,1.13) -- (-1.13,-0.65);
\draw[thick,red] (0.65,1.13) -- (-0.65,-1.13);
\draw[thick,red] (0.65,1.13) -- (-0.00,-1.30);
\draw[thick,red] (0.65,1.13) -- (0.65,-1.13);
\draw[thick,red] (0.65,1.13) -- (1.13,-0.65);
\draw[thick,red] (0.00,1.30) -- (-0.65,1.13);
\draw[thick,red] (0.00,1.30) -- (-1.13,0.65);
\draw[thick,red] (0.00,1.30) -- (-1.30,0.00);
\draw[thick,red] (0.00,1.30) -- (-1.13,-0.65);
\draw[thick,red] (0.00,1.30) -- (-0.65,-1.13);
\draw[thick,red] (0.00,1.30) -- (-0.00,-1.30);
\draw[thick,red] (0.00,1.30) -- (0.65,-1.13);
\draw[thick,red] (0.00,1.30) -- (1.13,-0.65);
\draw[thick,red] (-0.65,1.13) -- (-1.13,0.65);
\draw[thick,red] (-0.65,1.13) -- (-1.30,0.00);
\draw[thick,red] (-0.65,1.13) -- (-1.13,-0.65);
\draw[thick,red] (-0.65,1.13) -- (-0.65,-1.13);
\draw[thick,red] (-0.65,1.13) -- (-0.00,-1.30);
\draw[thick,red] (-0.65,1.13) -- (0.65,-1.13);
\draw[thick,red] (-0.65,1.13) -- (1.13,-0.65);
\draw[thick,red] (-1.13,0.65) -- (-1.30,0.00);
\draw[thick,red] (-1.13,0.65) -- (-1.13,-0.65);
\draw[thick,red] (-1.13,0.65) -- (-0.65,-1.13);
\draw[thick,red] (-1.13,0.65) -- (-0.00,-1.30);
\draw[thick,red] (-1.13,0.65) -- (0.65,-1.13);
\draw[thick,red] (-1.13,0.65) -- (1.13,-0.65);
\draw[thick,red] (-1.30,0.00) -- (-1.13,-0.65);
\draw[thick,red] (-1.30,0.00) -- (-0.65,-1.13);
\draw[thick,red] (-1.30,0.00) -- (-0.00,-1.30);
\draw[thick,red] (-1.30,0.00) -- (0.65,-1.13);
\draw[thick,red] (-1.30,0.00) -- (1.13,-0.65);
\draw[thick,red] (-1.13,-0.65) -- (-0.65,-1.13);
\draw[thick,red] (-1.13,-0.65) -- (-0.00,-1.30);
\draw[thick,red] (-1.13,-0.65) -- (0.65,-1.13);
\draw[thick,red] (-1.13,-0.65) -- (1.13,-0.65);
\draw[thick,red] (-0.65,-1.13) -- (-0.00,-1.30);
\draw[thick,red] (-0.65,-1.13) -- (0.65,-1.13);
\draw[thick,red] (-0.65,-1.13) -- (1.13,-0.65);
\draw[thick,red] (-0.00,-1.30) -- (0.65,-1.13);
\draw[thick,red] (-0.00,-1.30) -- (1.13,-0.65);
\draw[thick,red] (0.65,-1.13) -- (1.13,-0.65);
\draw[thick,vert] (-4.7,1.1) -- (-4.7,-1.1);

\filldraw[blue] (7.30,0.00) circle (2pt);
\filldraw[blue] (6.92,0.92) circle (2pt);
\filldraw[blue] (6.00,1.30) circle (2pt);
\filldraw[blue] (5.08,0.92) circle (2pt);
\filldraw[blue] (4.70,0.00) circle (2pt);
\filldraw[blue] (5.08,-0.92) circle (2pt);
\filldraw[blue] (6.00,-1.30) circle (2pt);
\filldraw[blue] (6.92,-0.92) circle (2pt);
\filldraw[red] (1.30,0.00) circle (2pt);
\filldraw[red] (1.13,0.65) circle (2pt);
\filldraw[red] (0.65,1.13) circle (2pt);
\filldraw[red] (0.00,1.30) circle (2pt);
\filldraw[red] (-0.65,1.13) circle (2pt);
\filldraw[red] (-1.13,0.65) circle (2pt);
\filldraw[red] (-1.30,0.00) circle (2pt);
\filldraw[red] (-1.13,-0.65) circle (2pt);
\filldraw[red] (-0.65,-1.13) circle (2pt);
\filldraw[red] (-0.00,-1.30) circle (2pt);
\filldraw[red] (0.65,-1.13) circle (2pt);
\filldraw[red] (1.13,-0.65) circle (2pt);
\filldraw[vert] (-4.7,1.1) circle (2pt);
\filldraw[vert] (-4.7,-1.1) circle (2pt);

\node[vert] () at (-5,1.3) {$a^*$};
\node[vert] () at (-5,-1.3) {$b^*$};

\draw[black, line width=0.3mm] (-1.3,0.7) -- (-4.6,-1.1);
\draw[black, line width=0.3mm] (-1.3,0.7) -- (-4.6,1.1);
\draw[black, line width=0.3mm] (-1.5,0.0) -- (-4.6,-1.1);
\draw[black, line width=0.3mm] (-1.5,0.0) -- (-4.6,1.1);
\draw[black, line width=0.3mm] (-1.3,-0.7) -- (-4.6,-1.1);
\draw[black, line width=0.3mm] (-1.3,-0.7) -- (-4.6,1.1);
\draw[blue, line width=0.5mm] (6.0,0.0) ellipse (1.5 cm and 1.5 cm);
\node[blue] () at (7.6, 1.9) {$V$};
\node[blue] () at (7.9, 1.4) {$|V|=n$};
\draw[red, line width=0.5mm] (0.0,0.0) ellipse (1.5 cm and 1.5 cm);
\node[red] () at (1.3, 2.1) {$V_E$};
\node[red] () at (1.7, 1.6) {$|V_E|=xm$};
%\draw[vert, line width=0.5mm] (-5.0,0.0) ellipse (0.5 cm and 1.5 cm);
\draw[-{Straight Barb[right]}, black, line width=0.7mm] (1.5,-0.1) -- (4.5,-0.1);
\draw[-{Straight Barb[right]}, black, line width=0.7mm] (4.5,0.1) -- (1.5,0.1);
\node[rotate=0] at (3.0,0.4) {$x(m-3)$};
\node[rotate=0] at (3.0,-0.4) {$n-2$};
\end{tikzpicture}
\caption{The graph $G'$. Note that the green vertices correspond to the vertices $a^*$ and $b^*$.}
\label{fig:degen_complement}
\end{figure}
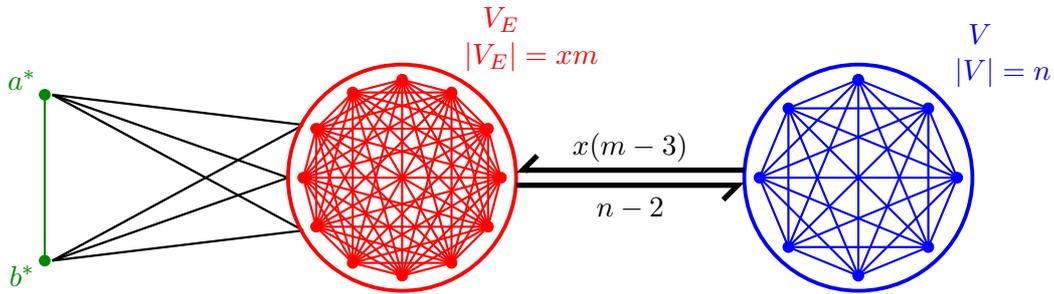

Finally, let $k'=xm+k$. Note that $|V'| = xm+n+2$, $x \geq 2$ and $m = \frac{3}{2}n$. Thus, $xm > \frac{|V'|}{2}$.

\paragraph{Parameter $\degen$ and bipartition of $\overline{G'}$:} Hereafter we justify that $\degen(\overline{G'})=2$ and that $\overline{G'}$ is bipartite.
\begin{itemize}
    %\item \underline{$h$-index:} Note that in $\overline{G'}$, $a^*$ and $b^*$ have degree $n$. For $u \in V$, $u$ is adjacent (in $\overline{G'}$) to $a^*,b^*,e_{uv},e_{uw}$ and $e_{ux}$ with $v,w,x$ being the neighbors of $u$ in $G$. Finally, for $uv \in E$, $e_{uv}$ is adjacent (in $\overline{G'}$) to only $u$ and $v$. Thus, $\overline{G'}$ contains $2$ vertices of degree $n$, $n$ vertices of degree $5$ and $m$ vertices of degree 2. Hence, $h(\overline{G'})=5$.

    \item \underline{Degeneracy:} We provide an elimination order of vertices of $\overline{G'}$ that only eliminates vertices of degree at most $2$ at each step. Notice that all vertices in $V_E$ have degree $2$ in $\overline{G'}$. We start by removing these vertices. Then, in $\overline{G'}[V_G \cup \{a^*,b^*\}]$, the vertices of $V_G$ have degree $2$. More specifically, their neighbors are $a^*$ and $b^*$. Thus, we proceed by removing the vertices of $V_G$. Finally, vertices $a^*$ and $b^*$, are of degree $0$. Thus, $\degen(\overline{G'})=2$.

    \item \underline{Bipartition:} Note that $G'$ can be partitioned into two cliques: $V_G$ and $V_{e} \cup \{a^*,b^*\}$. Thus $\overline{G'}$ can be partitioned into two independent sets and $\overline{G'}$ is bipartite.
\end{itemize}

\paragraph{Soundness:} We show that $G$ contains an independent set of size $k$ if and only if $G'$ contains a \PDS{} of size at least $k'=xm+k$.

\begin{itemize}
    \item[$\implies$] Let $I$ be an independent set of size $k$ in $G$. Let $I'$ be the set of vertices in $G'$ corresponding to the set $I$. Let $S=I' \cup V_E$. It holds that $|S|=xm+k$. Then, we show that each vertex is satisfied with respect to $S$.
    \begin{itemize}
        \item Let $u \in I'$. Notice that $u$ has $3x$ non-neighbors in $S$, namely $e_{uv}^{i}$, for every $v$ such that $uv \in E$ and every $i \in \intInterval{1,x}$. Also, $u$ has 2 non-neighbor in $\overline{S}$, namely $a^*,b^*$. Then, $u$ is satisfied with respect to $S$ if and only if the following equation is satisfied:
        $$ \frac{d_{S}(u)}{|S|-1} = \frac{|S|-1-3x}{|S|-1} \geq \frac{|\overline{S}|-2}{|\overline{S}|} = \frac{d_{\overline{S}}(u)}{|\overline{S}|}. $$
        \Cref{rk:tech_degen_compl} implies that the equation above holds. Thus, $u$ is satisfied with respect to $S$.

        \item Let $e_{uv}^{i} \in V_E$ for some $uv \in E$ and some $i \in [\![1,x]\!]$. Since $uv \in E$, we know that at most one of the vertices $u,v$ is in $I$. Thus, $e_{uv}^i$ has at most one non-neighbor in $S$ and at least one non-neighbor in $\overline{S}$. We recall that $|S| > |\overline{S}|$. Then, we have 
        $$ \frac{d_{S}(e_{uv}^i)}{|S|-1} \geq \frac{|S|-2}{|S|-1} \geq \frac{|\overline{S}|-1}{|\overline{S}|} \geq \frac{d_{\overline{S}}(e_{uv}^i)}{|\overline{S}|}, $$
        and $e_{uv}^i$ is satisfied with respect to $S$.
    \end{itemize}
    Thus, $S$ is a \PDS{} of size $k'$ in $G'$.

    \item[$\impliedby$] Let $S$ be a \PDS{} of size at least $k'$ in $G'$. Recall that for every $u,v \in E$, there are $x=n-k+3$ copies of $e_{uv}$ in $G'$. Also, $|\overline{S}| \leq n-k+2$, thus $S$ contains at least one copy of $e_{uv}$. Finally, $e_{uv}^{i} \in S$ is satisfied if and only if $S$ contains at most one of the vertices $u$ or $v$. 
    
     Further, let us consider the size of $S$. Since $S$ contains at least one copy of $e_{uv}$, for every $uv \in E$, we have that the set of vertices $I$ in $G$ corresponding to the set $I' = S \cap V_G$ is an independent set of $G$. Finally, $S$ contains at least $k'=xm+k$ vertices, thus, $|I|= |I'| \geq = k+(xm-|S\cap V_E|)-2$. 
    Suppose for a contradiction that $|I| < k$. Then $|S\cap V_E| = xm-1$ and $a^*,b^* \in S$. Let $e_{uv}^{i}$ be the vertex in $V_E$ which is not contained in $S$. Since $I$ contains at least $k-1 \geq 3$ vertices, it contains at least one vertex which is not equal to $u$ or $v$. Let $w \in I$ be this vertex. It holds that all non-neighbors of $w$, namely $e_{w\_}^i$ and $a^*,b^*$, are contained in $S$. Then, $w \in S$ is not satisfied with respect to $S$, a contradiction. Thus $|I| \geq k$ and $I$ is an independent set of size at least $k$ in $G$. \qedhere
\end{itemize}
\end{proof}

Similarly as before, the \PDS{} considered in the proof of \cref{thm:maxPDS_NPh_degen_complement} induces a connected subgraph of $G'$. This implies the following result.

\begin{corollary}
    \connmaxPDS{} is \NP{}-hard on graphs $G$ such that $\degen(\overline{G}) = 2$ and $\overline{G}$ is bipartite.
\end{corollary}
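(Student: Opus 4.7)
The plan is to reuse the reduction from \cref{thm:maxPDS_NPh_degen_complement} verbatim and argue that the equivalence it establishes already preserves connectivity. Concretely, the graph $G'$ constructed there already satisfies $\degen(\overline{G'}) = 2$ and $\overline{G'}$ bipartite, so the only thing that needs to be verified is that the PDS exhibited in the forward direction of that reduction induces a connected subgraph of $G'$, and that the backward direction remains valid when we restrict to connected PDSs.

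For the forward direction, I would take the set $S = I' \cup V_E$ used in the proof of \cref{thm:maxPDS_NPh_degen_complement} and show that $G'[S]$ is connected. Since $V_E \cup \{a^*, b^*\}$ forms a clique in $G'$, the induced subgraph $G'[V_E]$ is already connected. It then suffices to show that every vertex $u \in I' \subseteq V_G$ has at least one neighbor in $V_E$. By construction, $u$ is adjacent to $e_{vw}^i$ for every edge $vw \in E$ not incident to $u$ in $G$ and every $i \in \intInterval{1, x}$; since $G$ is cubic, at most three edges of $G$ are incident to $u$, so $u$ has exactly $x(m-3)$ neighbors in $V_E$. As the smallest cubic graph is $K_4$ (so $m \geq 6$) and $x \geq 2$, this quantity is strictly positive, and hence $G'[S]$ is connected.

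For the backward direction, a connected PDS of size at least $k'$ in $G'$ is in particular a PDS of size at least $k'$, so the argument already carried out in the proof of \cref{thm:maxPDS_NPh_degen_complement} applies unchanged and produces an independent set of size at least $k$ in $G$. Combining both directions, the same reduction from \textsc{Independent Set} on cubic graphs shows that \connmaxPDS{} is \NP{}-hard on graphs whose complement is $2$-degenerate and bipartite.

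There is no real obstacle here: the only point to check is the connectivity of $S = I' \cup V_E$ in $G'$, and this follows immediately from the fact that $V_E$ induces a clique of $G'$ to which every vertex of $I'$ is attached through many edges, a consequence of the cubicity of $G$ and the multiplicity $x$ of the gadget vertices $e_{uv}^i$.
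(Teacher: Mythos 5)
Your proposal is correct and matches the paper's approach: the paper likewise derives the corollary by observing that the PDS $S = I' \cup V_E$ from the proof of \cref{thm:maxPDS_NPh_degen_complement} induces a connected subgraph of $G'$ (you merely spell out the connectivity via the clique on $V_E$ and the $x(m-3)>0$ edges from each vertex of $I'$ into $V_E$), and that the reverse direction is immediate since a connected PDS is in particular a PDS.
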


A \emph{clique cover} of a given graph is a partition of the vertices into cliques. The minimum $k$ for which a clique cover exists is called the \emph{clique cover number}. Notice that if the complement $\overline{G}$ of a graph $G$ is bipartite, then $G$ has a clique cover number equal to $2$. This implies the following result.

\begin{corollary}
    \maxPDS{} and \connmaxPDS{} are para-\NP{}-hard parameterized by clique cover number.
\end{corollary}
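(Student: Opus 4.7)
The plan is to combine \Cref{thm:maxPDS_NPh_degen_complement} with the elementary observation that a graph $G$ whose complement is bipartite admits a clique cover of size at most $2$. Para-\NP{}-hardness parameterized by a parameter $p$ means \NP{}-hardness already holds when $p$ is fixed to some constant, so it suffices to exhibit a fixed value of the clique cover number for which the problem is \NP{}-hard.

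First, I would recall that \Cref{thm:maxPDS_NPh_degen_complement} establishes \NP{}-hardness of \maxPDS{} on graphs $G$ such that $\overline{G}$ is bipartite (the additional $2$-degeneracy of $\overline{G}$ is not required for the present corollary). I would then fix such a graph $G$ from the reduction and let $(A,B)$ be a bipartition of $\overline{G}$. Since $A$ and $B$ are independent sets in $\overline{G}$, they are cliques in $G$, so $\{A,B\}$ is a clique cover of $G$ of size at most $2$. Hence every graph $G$ produced by the reduction of \Cref{thm:maxPDS_NPh_degen_complement} has clique cover number at most $2$.

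Consequently, the same reduction witnesses \NP{}-hardness of \maxPDS{} on the class of graphs of clique cover number at most $2$, which is precisely para-\NP{}-hardness parameterized by the clique cover number. For \connmaxPDS{}, I would invoke the connected version of the reduction (already noted in the corollary following \Cref{thm:maxPDS_NPh_degen_complement}, since the \PDS{} produced there induces a connected subgraph) and apply the same argument verbatim. There is no real obstacle here: the corollary is essentially a rephrasing of the previous hardness result under the lens of a different graph parameter, made possible by the fact that ``complement bipartite'' is a strictly stronger condition than ``clique cover number at most $2$''.
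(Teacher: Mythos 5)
Your proposal is correct and matches the paper's own argument: the paper likewise observes that a graph whose complement is bipartite has clique cover number at most $2$ and then cites the hardness of \cref{thm:maxPDS_NPh_degen_complement} and its connected counterpart. The only (immaterial) difference is that you correctly say ``at most $2$'' where the paper writes ``equal to $2$''.
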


\section{Polynomial-time solvable cases}
\label{sec:polynsolvablecases}

In this section, we describe polynomial-time algorithms for {\maxPDS} and \connmaxPDS{} on specific classes of graphs.
In \cref{subsec:h=2}, we present a polynomial-time algorithm for solving \maxPDS{} on graphs with $h \leq 2$.
Then, in \cref{subsec:graphsofdegreeatleastN-3}, we show how to find a (connected) \PDS{} of maximum size in graphs $G$ where $\Delta(\overline{G}) \leq 2$, where $\overline{G}$ represents the complement of $G$. This result implies that \maxPDS{} can be solved in polynomial time on graphs with $n$ vertices with minimum degree at least $(n-3)$ in polynomial time, i.e.\ graphs $G$ with $\Delta(\overline{G}) \leq 2$.

\subsection{Graphs with $h(G)\le2$}
\label{subsec:h=2}
In this subsection, we show how to find a \PDS{} of maximum size in graphs with an $h\le 2$ in polynomial time. Notice that since $h\le \Delta$, the result implies polynomial time complexity for solving \maxPDS{} on a graph with $\Delta \leq 2$.

A \emph{matching} in a graph $G$ is a set of pairwise non-adjacent edges. To avoid unnecessary complications and trivial cases in the proof of the main result of this section, we explain how to solve \maxPDS{} on graphs with $h\le 1$. First, in the lemma introduced below, we explain the structural properties of graphs with $h\le 1$. 

\begin{lemma}\label{lem:structure_h_1}
Let $G=(V,E)$ be a graph with $h(G) \leq 1$. Then the vertex set $V$ can be partitioned into sets $V_I, V_M$ and $V_*$, where $V_I$ is an independent set, the edges of $G[V_M]$ form a matching and $G[V_*]$ is a star.
\end{lemma}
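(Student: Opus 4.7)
The plan is to directly unpack the definition of the $h$-index and turn it into a degree statement. Recall that $h(G)\le 1$ means we cannot find two vertices that both have degree at least $2$; equivalently, \emph{at most one} vertex of $G$ has degree $\ge 2$. I would then split the proof into two cases depending on whether such a vertex exists.

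First I would dispose of the degenerate case in which every vertex of $G$ has degree at most $1$. In that situation $G$ is already a disjoint union of isolated vertices and independent edges, so setting $V_I$ to be the isolated vertices and $V_M$ to be the endpoints of the matching edges immediately yields a decomposition; a single vertex (or a single matching edge) can be peeled off to form a trivial star $V_*$ if one insists on non-empty parts. No real work is needed here.

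The main case is when a (necessarily unique) vertex $v$ has degree $k\ge 2$. The key observation is that every $u\in N(v)$ must have degree exactly $1$ in $G$, because $u$ already has $v$ as a neighbor and the $h$-index bound prevents it from having more. Hence $N_G(u)=\{v\}$ for every $u\in N(v)$, which means $G[N[v]]$ is precisely the star $S_k$ with center $v$. I would take $V_\star := N[v]$. The remaining vertices $V\setminus V_\star$ all have degree $\le 1$ in $G$ (again by the $h$-index bound), and none of them is adjacent to any vertex of $V_\star$: it cannot be adjacent to $v$ (its neighbors already all lie in $V_\star$) and it cannot be adjacent to any $u\in N(v)$ since $u$'s only neighbor is $v$. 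Therefore $G[V\setminus V_\star]$ is a disjoint union of isolated vertices and independent edges, which I split into $V_I$ (the isolated vertices) and $V_M$ (the endpoints of the matching edges).

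I do not foresee a substantive obstacle: the lemma is essentially a restatement of the degree condition $h(G)\le 1$. The only mild bookkeeping point is to reconcile the statement with the paper's convention that parts of a partition are non-empty; this is handled either by observing that in each case at least one of the three sets can be made non-empty by moving a single isolated vertex or a single matching edge between $V_\star$ and $V_I\cup V_M$, or by reading the present decomposition in the looser sense that some of $V_I,V_M,V_\star$ are allowed to be empty, which is the standard interpretation in such structural lemmas.
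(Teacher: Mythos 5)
Your proof is correct and follows essentially the same route as the paper: both arguments rest on the observation that $h(G)\le 1$ forces at most one vertex of degree $\ge 2$, take $V_*=N[v]$ for that vertex $v$ (whose neighbors must all have degree exactly $1$, making $G[N[v]]$ a star), and split the remaining degree-at-most-$1$ vertices into a matching part and an independent set. The only cosmetic difference is order of operations (the paper deletes $v$ first and then carves the star out of the leftover independent set), and your explicit handling of the no-high-degree-vertex case and the non-emptiness convention is harmless extra bookkeeping.
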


\begin{proof}
By the definition of the $h-$index,  all vertices in $G$ have a degree of at most $1$, except for one, which we denote by $u$. Then, in $G[V \setminus \{u\}]$ all vertices are of degree at most $1$ and the subgraph consists of a matching $M$ and an independent set $I'$. Let $V_*' = N(u)$ in $G$ and let $V_* = V_*' \cup \{u\}$. Clearly, $V_*$ is a star in $G$, $M$ is a matching in $G$ and $I = I' \setminus V_*'$ is an independent set in $G$. Then, by definition there are no edges between $I$ and $V_*$, and all vertices of $M$ have one neighbor which is contained in $M$. Thus, are no edges between $M$ and $I$, nor between $M$ and $V_*$.
\end{proof}

In the following lemma, we show that there always exists a \PDS{} of size either $|V|-1$ or $|V|-2$ in a graph $G=(V, E)$ with $h = 1$.

\begin{lemma}\label{lem:h_1_poly}
Let $G=(V,E)$ be a graph with $h(G)=1$. Then, a PDS $S$ of maximum size can be computed in $\mathcal{O}(|V|)$. Moreover, $|V|-2\le |S| \le |V|$.
\end{lemma}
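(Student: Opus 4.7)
The plan is to exploit the structural decomposition provided by \cref{lem:structure_h_1}, which writes $V = V_I \cup V_M \cup V_*$ where $V_I$ is an independent set, $V_M$ induces a matching, and $V_*$ induces a star centered at some vertex $u^*$ of degree $\Delta := d(u^*)$ (the set $V_*$ may be taken empty when no vertex of $G$ has degree at least $2$, and in that case $V_M\neq\emptyset$ since $h(G)\geq 1$). A maximum \PDS{} is then produced by direct inspection depending on which of the three parts are non-empty. The key general observation, which follows from \cref{lemma:one_to_all}, is that a vertex $v$ of degree $1$ lying in a \PDS{} $S$ with $|S|<|V|$ is satisfied only if its unique neighbor is also in $S$, while a vertex of degree $0$ is trivially satisfied whenever included in $S$.

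If $V_I \neq \emptyset$, I take $S = V \setminus \{v^*\}$ for an arbitrary $v^* \in V_I$, giving $|S|=|V|-1$. Every matching pair remains together in $S$, every leaf of $V_*$ keeps $u^*$ in $S$, and $u^*$ keeps all its $\Delta$ leaves in $S$; the required satisfaction inequality then reduces in each case to $\frac{k}{|V|-2}\geq\frac{k}{|V|-1}$ for some $k\geq 1$, which holds. If instead $V_I=\emptyset$ and $V_M=\emptyset$, then $G$ itself is a star and I remove a single leaf: the center has $\Delta-1 = |V|-2$ neighbors left in $S$, so the defining inequality for $u^*$ becomes $\frac{\Delta-1}{|V|-2}\geq\frac{\Delta}{|V|-1}$, i.e.\ $1\geq 1$. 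In both of these subcases $|S|=|V|-1$ is optimal since a \PDS{} cannot equal $V$.

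The remaining case is $V_I=\emptyset$ and $V_M\neq\emptyset$. I pick a matching edge $\{x,y\}$ and take $S = V\setminus\{x,y\}$; the same checks as above show that $S$ is a \PDS{} of size $|V|-2$. The main obstacle is to rule out a \PDS{} of size $|V|-1$, which I handle by enumerating the excluded vertex $v$: if $v\in V_M$, its matching partner becomes unsatisfied; if $v=u^*$, every leaf becomes unsatisfied; and if $v$ is a leaf of $u^*$, the satisfaction inequality for $u^*$ rearranges to $\Delta\geq|V|-1$, contradicting $|V|\geq|V_*|+|V_M|\geq\Delta+3$ (since $V_M\neq\emptyset$ forces $|V_M|\geq 2$). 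This establishes $|V|-2\leq|S|\leq|V|-1$ and, since the decomposition of \cref{lem:structure_h_1} is produced in linear time, yields the claimed $\mathcal{O}(|V|)$ bound.
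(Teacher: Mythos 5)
Your proof is correct and follows essentially the same route as the paper: both apply the decomposition $V = V_I \cup V_M \cup V_*$ from \cref{lem:structure_h_1} and perform the same three-way case analysis, choosing $S = V\setminus\{v\}$ for $v\in V_I$, $S=V\setminus\{x,y\}$ for a matching edge, or $S$ equal to the star minus a leaf. Your explicit enumeration of the excluded vertex when ruling out a \PDS{} of size $|V|-1$ in the matching case is somewhat more detailed than the paper's one-line argument, but it is the same idea.
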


\begin{proof}
    By the \cref{lem:structure_h_1}, we know that $V = V_I \cup V_M \cup V_*$, where $V_I, V_M$ and $V_*$ represent an independent set, matching and a star, respectively. We distinguish three cases.
    \begin{itemize}
    \item If $V_I$ and $V_M$ are empty, then for any $v \in V_*$ with $d(v)=1$, $S = V(G) \setminus \{v\}$ is a maximum \PDS{} in $G$. Let $u\in V_*$, with $d(u) > 1$. Then, $u$ is the only vertex of $G$ with a neighbor in $\overline{S}$ and the following inequality holds for $u$:
    $$\frac{d(u)}{|V|-1} = 1 \geq \frac{1}{|\{v\}|}\,.$$ Hence, all vertices are satisfied with respect to $S$ and $S$ is a \PDS{} of size $\vert V\vert -1$.
    
    \item If $V_I = \emptyset$, let $uv$ be an edge of $V_M$. Then, we claim that $S=V \setminus \{u,v\}$ is a \PDS{} of maximum size in $G$. 
    Indeed, $u$ and $v$ have no neighbors in $S$ and hence every vertex in $S$ is satisfied. Moreover, for any subset $S' \subseteq V$ of size $|V|-1$, there is a vertex $u \in S'$ with one neighbor in $\overline{S'}$. Since in this case, $G$ contains no universal vertices, $u$ is not adjacent to one vertex in $S'$ but is adjacent to the vertex in $\overline{S}$ and thus, $u$ is not satisfied with respect to $S'$ and the statement follows.

    \item Finally, if $V_I\neq \emptyset$, let $u \in I$. We claim that $S = V \setminus \{u\}$ is a \PDS{} of maximum size in $G$. Indeed, since $u$ has no neighbors in $S$, all vertices of $S$ are satisfied.
\end{itemize}
Hence, the statement follows. Moreover, notice that the complexity of determining a \PDS{} of the maximum size is $\mathcal{O}(|V|)$. The complexity follows from the complexity of determining the sets $V_I, V_M$ and $V_*$. 
\end{proof}

By \cref{lem:h_1_poly}, we know that a \PDS{} of maximum size in graphs with $h=1$ can be computed in time $\mathcal{O}(n)$. Now, we move on to graphs with $h= 2$. To show that a \PDS{} of maximum size can be found in polynomial time in such graphs, we use 2-dimensional \textsc{Knapsack Problem}, or shortly \textsc{2d Knapsack Problem}. To this end, we will first a definition of a \textsc{Knapsack Problem}.

\noindent\rule[0.5ex]{\linewidth}{1pt}
{Knapsack Problem:}\\
\noindent\rule[0.5ex]{\linewidth}{0.5pt}
{\em Input:}\hspace{0.47cm} Values $(v_i)_{1\leq i \leq n}$ with weights $(w_i)_{1 \leq i \leq n}$, target value $T_V$ and weight constraint $W$.\\
{\em Output:}\hspace{0.2cm} A subset of indices $I \subseteq [\![1,n]\!]$ such that $\sum_{i \in I} v_i \geq T_V$ and $\sum_{i \in I} w_i \leq W$.\\
\noindent\rule[0.5ex]{\linewidth}{1pt}\\
Then, in the \textsc{2d Knapsack Problem}, the weights are two dimensional, i.e.\ $w_1,...,w_n, W \in \Z^2$ (with $(a,b) \leq (c,d) \iff a \leq c$ and $b \leq d$). Note that this problem is known to be weakly \NP{}-complete (see \cite{garey_johnson}). More specifically, in the general case, the complexity of the \textsc{2d Knapsack Problem} is $\mathcal{O}(n \cdot V \cdot W_1 \cdot W_2)$ where $(W_1,W_2)=W$. Below, we use the \textsc{exact 2d Knapsack Problem} where we search for a subset $I$ such that $\sum_{i \in I} v_i = T_V$ instead. This problem can be solved with the same time complexity.

Further, in the lemma below, we explain the structural properties of graphs with $h=2$. To this end, we introduce relevant notation. Let $G=(V,E)$ be a graph with $h(G)=2$. Let $u^*,v^* \in V$ be the vertices of maximum degree in $G$. Moreover, let $P_1,...,P_p$ and $C_1,...,C_{c}$ be the paths and cycles of $G[V \setminus \{u^*,v^*\}]$, respectively. Note, that we classify the isolated vertices as paths of length $0$.

\begin{lemma}\label{lemma:h_21}
Let $G=(V,E)$ be a graph with $h(G)=2$. We have that $G[V \setminus \{u^*,v^*\}]$ is a disjoint union of paths and cycles. Moreover, the vertices of cycles in $G[V \setminus \{u^*,v^*\}]$ are not adjacent to either $u^*$ or $v^*$. Let $P=u_1,...,u_{n_i}$ be a path in $G[V \setminus \{u^*,v^*\}]$, for some $n_i < n$.  Then, exactly one (or two, if $n_i=1$) of the following holds,
\begin{enumerate}
    \item[$i$.] vertices of $P$ are not adjacent to either $u^*$ or $v^*$;
    \item[$ii$.] $u_1u^* \in E$;
    \item[$iii$.] $u_1u^*,u_{n_i}u^* \in E$;
    \item[$iv$.] $u_1v^* \in E$;
    \item[$v$.] $u_1v^*,u_{n_i}v^* \in E$;
    \item[$vi$.]\label{vi} $u_1u^*, u_{n_i}v^* \in E$;
    \item[$vii$.]\label{vii} $u_{n_i}u^*, u_1v^* \in E$;
\end{enumerate}
\end{lemma}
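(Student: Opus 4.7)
The plan is to do a case analysis driven entirely by the constraint that $h(G)=2$ forces every vertex other than $u^*$ and $v^*$ to have degree at most $2$ in $G$ (otherwise we would exhibit at least three vertices of degree $\geq 3$, contradicting the definition of $h$). First I would argue that $G[V\setminus\{u^*,v^*\}]$ is a disjoint union of paths and cycles: every vertex in that subgraph has degree at most $2$ (being bounded by its degree in $G$), and a graph of maximum degree $2$ decomposes in exactly this way. Isolated vertices are treated as paths of length $0$ as in the statement.

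Next, for a cycle $C$ in $G[V\setminus\{u^*,v^*\}]$ I would note that every vertex of $C$ already saturates the degree budget of $2$ inside the cycle, so an additional edge to $u^*$ or $v^*$ would force its degree in $G$ to be at least $3$, contradicting $h(G)=2$. This establishes the claim about cycle vertices.

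For a path $P=u_1,\ldots,u_{n_i}$ I would separate endpoints from interior vertices. Any interior vertex $u_j$ with $1<j<n_i$ already has degree $2$ in $G[V\setminus\{u^*,v^*\}]$, and the same saturation argument as for cycles rules out any edge from $u_j$ to $\{u^*,v^*\}$. Each endpoint has degree $1$ in $P$ (if $n_i\geq 2$) or $0$ (if $n_i=1$), so by the degree bound it may pick up at most one further neighbor in $\{u^*,v^*\}$ when $n_i\geq 2$, and at most two when $n_i=1$.

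Finally I would enumerate the configurations of edges from $\{u_1,u_{n_i}\}$ to $\{u^*,v^*\}$. When $n_i\geq 2$, each endpoint is in one of three states (no external edge, one edge to $u^*$, or one edge to $v^*$), giving nine configurations. Quotienting by the symmetry $u_1\leftrightarrow u_{n_i}$ — which is legitimate since the labeling of the two endpoints of a path is arbitrary — collapses these nine possibilities to the seven cases $(i)$--$(vii)$ listed in the statement. In the degenerate case $n_i=1$ the single vertex $u_1$ plays both endpoint roles and may be adjacent to both $u^*$ and $v^*$ at once, which corresponds exactly to the parenthetical ``or two, if $n_i=1$'' clause (for instance, simultaneously satisfying $(ii)$ and $(iv)$). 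The only subtle point — and the main thing to be careful about — is verifying that the stated seven cases are genuinely exhaustive modulo this endpoint-swap symmetry; beyond that, the lemma is pure degree-counting bookkeeping.
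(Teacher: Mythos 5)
Your proposal is correct and follows essentially the same route as the paper: bound the degree of every vertex outside $\{u^*,v^*\}$ by $2$ via the definition of the $h$-index, conclude that $G[V\setminus\{u^*,v^*\}]$ is a disjoint union of paths and cycles, observe that cycle vertices and interior path vertices are degree-saturated, and enumerate the possible attachments of the (at most two) path endpoints to $\{u^*,v^*\}$. Your explicit count of nine ordered endpoint configurations collapsing to the seven listed cases (with cases $ii$ and $iv$ each absorbing a reversal) is a slightly more careful bookkeeping than the paper gives, but it is the same argument.
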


\begin{proof}
Let $G' = G[V \setminus \{u^*,v^*\}]$. By definition of the $h$-index, we have $d(u) \leq 2$ for every $u \in V \setminus \{u^*,v^*\}$. Hence $\Delta(G') \leq 2$ and $G'$ is a disjoint union of paths and cycles. Let $C_i$ be a cycle in $G'$, for some $i\le c$, then the vertices of $C_i$ have degree $2$ in $G'$ and hence are not adjacent to either $u^*$ or $v^*$. For a path $P = u_1,...,u_{n_i}$, for ${n_i} < n$, we have that, for $2 \leq j \leq {n_i}-1$, $u_j$ has degree 2 in $G'$ and degree at most $2$ in $G$ and hence $u_j$ is not adjacent to either $u^*$ or $v^*$. Finally, if ${n_i}>1$, $u_1$ and $u_{n_i}$ have degree 1 in $G'$, thus they are adjacent to at most one of the vertices $\{u^*,v^*\}$. If ${n_i}=1$, then $P$ is an isolated vertex and $u_1=u_{n_i}$ could either be adjacent to $u^*$, $v^*$, both or neither of them.
\end{proof}

Let $S$ be a \PDS{} in $G$, where $G$ is a graph with $h= 2$. By the \cref{lemma:h_21}, we know that $G[V \setminus \{u^*,v^*\}]$ is a disjoint union of paths and cycles. In the lemma below we show that the paths and cycles in $G[V \setminus \{u^*,v^*\}]$ are either entirely contained in $S$ or $\overline{S}$.

\begin{lemma}\label{lemma:h_22}
Let $G=(V,E)$ be a graph with $h(G)=2$. Let $S$ be a \PDS{} in $G$ of size at least $\frac{|V|}{2}+1$. For every $1\leq i \leq p$ (resp.\ $1 \leq j \leq c$), either $P_i \subseteq S$ or $P_i \subseteq \overline{S}$ (resp.\ $C_j \subseteq S$ or $C_j \subseteq \overline{S}$). Moreover, exactly one of the following holds,
\begin{itemize}
    \item[$i.$] $u^*,v^* \notin S$
    \item[$ii.$] $u^* \in S$, $v^* \notin S$,
    \item[$iii.$] $u^* \notin S$, $v^* \in S$,
    \item[$iv.$] $u^*,v^* \in S$.
\end{itemize}
Finally, if $u^* \notin S$ (resp.\ $v^* \notin S$) and an endpoint of a path $P_i$ is adjacent to $u^*$ (resp.\ $v^*$), then $P_i \subseteq \overline{S}$.
\end{lemma}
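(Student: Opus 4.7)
The plan is to apply \cref{cor:one_to_all} to the connected components of $G[V \setminus \{u^*, v^*\}]$. The only preliminary work is a degree-threshold estimate: since $|S| \geq |V|/2 + 1$ gives $|V| - |S| \leq |V|/2 - 1$, we have
\[
\frac{|V|-1}{|V|-|S|} \;\geq\; \frac{2(|V|-1)}{|V|-2} \;=\; 2 + \frac{2}{|V|-2} \;>\; 2.
\]
By \cref{lemma:h_21}, every vertex of $V \setminus \{u^*, v^*\}$ has degree at most $2$ in $G$, and therefore satisfies the hypothesis of \cref{cor:one_to_all}.

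For the first claim, I would fix a path $P_i$ (or a cycle $C_j$) and view it as a connected subgraph $X \subseteq V \setminus \{u^*, v^*\}$. All vertices of $X$ have degree at most $2$, so \cref{cor:one_to_all} yields $N[X] \subseteq S$ or $N[X] \subseteq \overline{S}$, and in particular $X \subseteq S$ or $X \subseteq \overline{S}$. The second claim is immediate, since the four listed cases simply exhaust $\{S, \overline{S}\}^2$ and are pairwise disjoint.

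For the third claim, suppose for contradiction that $u^* \notin S$ and a path $P_i$ with an endpoint $u_1$ adjacent to $u^*$ satisfies $P_i \subseteq S$. Then $u_1 \in S \cap P_i$, so \cref{cor:one_to_all} applied to $X = P_i$ forces $N[P_i] \subseteq S$, whence $u^* \in N(u_1) \subseteq N[P_i] \subseteq S$, a contradiction. Hence $P_i \subseteq \overline{S}$, and the same argument applies with $v^*$ in place of $u^*$. The only nontrivial point of the proof is the threshold inequality above; everything else is a direct invocation of \cref{cor:one_to_all}.
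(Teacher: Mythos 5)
Your proposal is correct and follows essentially the same route as the paper: both arguments reduce everything to \cref{cor:one_to_all} applied to the paths and cycles of $G[V\setminus\{u^*,v^*\}]$, with the last claim obtained by contraposition. The only difference is that you explicitly verify the degree threshold $\frac{|V|-1}{|V|-|S|}>2$, a detail the paper leaves implicit.
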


\begin{proof}
The vertices of any path or cycle have degree at most 2. Hence, \cref{cor:one_to_all} implies that every path or cycle is entirely contained in $S$ or in $\overline{S}$. If $u^* \notin S$, then by the contraposition of \cref{cor:one_to_all} it follows that if one of the endpoints of a path $P_i$ is adjacent to $u^*$, then $P_i \subseteq \overline{S}$.
\end{proof}

Then, given a graph $G$ with $h=2$, we define an equivalent instance of the \textsc{exact 2d Knapsack Problem}.
\vspace{-0.3cm}
\paragraph{Construction:} Let $G=(V,E)$ be a graph with $h(G) = 2$ and $S^* \subseteq \{u^*,v^*\}$. Let $k$ be an integer such that $\frac{\vert V \vert}{2} \leq k \leq \vert V \vert$. Hereafter we show how to construct an equivalent instance of the \textsc{exact 2d Knapsack Problem}.

Let $\mathcal{P}$ be the union of all paths where at least one endpoint is adjacent to a vertex in $\{u^*,v^*\} \setminus S^*$. Let $P_1,..., P_p$ be the paths in $G[V \setminus \{u^*,v^*\}]$ that are not contained in $\mathcal{P}$. 

\begin{itemize}
    \item For each path $P_i$, we create an item with value $v_i = |P_i|$ and weight $w_i = (d_{P_i}(u^*), d_{P_i}(v^*))$.

    \item For each cycle $C_j$, we create an item with value $v_{p+j} = |C_j|$ and weight $w_{p+j} = (0,0)$.

    \item The target value is $T_V = |V|-k-|\mathcal{P}|$.

    \item The weight constraint is $W=(W_1, W_2)$, where:
    \begin{itemize}
        \item $W_1 = d(u^*)\frac{|V|-k}{|V|-1}$ and $W_2 = d(v^*)\frac{|V|-k}{|V|-1}$ if $u^*,v^* \in S^*$,
        \item $W_1 = (d(u^*)-1)\frac{|V|-k}{|V|-1}$ and $W_2 = \infty$ if $u^* \in S^*$ and $v^* \notin S^*$ and $u^*v^* \in E$,
        \item $W_1 = d(u^*)\frac{|V|-k}{|V|-1}$ and $W_2 = \infty$ if $u^* \in S^*$ and $v^* \notin S^*$ and $u^*v^* \notin E$,
        \item $W_1 = \infty$ and $W_2 = (d(v^*)-1)\frac{|V|-k}{|V|-1}$ if $u^* \notin S^*$ and $v^* \in S^*$ and $u^*v^* \in E$,
        \item $W_1 = \infty$ and $W_2 = d(v^*)\frac{|V|-k}{|V|-1}$ if $u^* \notin S^*$ and $v^* \in S^*$ and $u^*v^* \notin E$,
        \item $W_1 = \infty$ and $W_2 = \infty$ if $u^*,v^* \notin S^*$.
    \end{itemize}
\end{itemize}

Next, we prove the equivalence between a graph $G$ with $h=2$, an instance of \maxPDS{}, and the instance of the \textsc{exact 2d Knapsack Problem} constructed above.

\begin{lemma}
\label{lemma:h=2reduction}
Let $G=(V,E)$ be a graph with $h(G) = 2$. Let $S^* \subseteq \{u^*,v^*\}$. Let $k$ be an integer such that $\frac{\vert V \vert}{2} + 1 \leq k \leq \vert V \vert$. Let $\mathcal{I}$ be the instance of the \textsc{exact 2d Knapsack Problem} constructed above. Then there exists a \PDS{} $S$ in $G$ of size $k$ such that $S \cap \{u^*,v^*\} = S^*$ if and only if $\mathcal{I}$ is a yes-instance of the \textsc{exact 2d Knapsack Problem}.
\end{lemma}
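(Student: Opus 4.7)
The plan is to exploit the structural Lemmas about graphs with $h=2$ that have just been established. Given a \PDS{} $S$ in $G$ with $|S|=k \ge \frac{|V|}{2}+1$ and $S\cap\{u^*,v^*\}=S^*$, the proof of the forward direction will proceed as follows. First, I will invoke the structural Lemma (\cref{lemma:h_22}) to conclude that each path $P_i$ and each cycle $C_j$ of $G[V\setminus\{u^*,v^*\}]$ is either entirely contained in $S$ or entirely contained in $\overline{S}$, and moreover that every path in $\mathcal{P}$ must lie in $\overline{S}$ (since by definition these paths have an endpoint adjacent to a vertex of $\{u^*,v^*\}\setminus S^*\subseteq\overline{S}$). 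Define $I$ to be the index set of those items (paths outside $\mathcal{P}$ and cycles) that lie in $\overline{S}$; a simple counting of $|\overline{S}|=|V|-k$, where $\overline{S}$ decomposes as $\mathcal{P}\,\cup\,(\{u^*,v^*\}\setminus S^*)\,\cup\,(\text{chosen items})$, yields the exact value constraint $\sum_{i\in I}v_i=T_V$.

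The weight constraints encode the \PDS{} inequalities at the two high-degree vertices. Writing the \PDS{} condition $(c)$ from \eqref{prop:PDS} for $u^*\in S^*$ gives $d_{\overline{S}}(u^*) \le d(u^*)\frac{|V|-k}{|V|-1}$, and I will split $d_{\overline{S}}(u^*)$ as the contribution from $v^*$ (present iff $u^*v^*\in E$ and $v^*\notin S^*$) plus the contribution from the chosen items, which is exactly $\sum_{i\in I}d_{P_i}(u^*)$ (paths in $\mathcal{P}$ adjacent to $u^*$ contribute $0$ since they are only forced into $\overline{S}$ when they are adjacent to a vertex of $\{u^*,v^*\}\setminus S^*$, i.e.\ not adjacent to $u^*$ when $u^*\in S^*$). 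This matches the case distinction in the definition of $W_1$ precisely; the symmetric argument handles $W_2$. When $u^*\notin S^*$ (resp.\ $v^*\notin S^*$), no constraint at $u^*$ (resp.\ $v^*$) is needed and the corresponding weight is set to $\infty$.

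For the backward direction, given a feasible solution $I$ to the \textsc{exact 2d Knapsack} instance, I will define $S := S^* \,\cup\, \bigcup_{i\notin I,\,P_i\notin\mathcal{P}} P_i \,\cup\, \bigcup_{j\notin I} C_j$, so that $\overline{S}$ consists of $\{u^*,v^*\}\setminus S^*$, the forced paths of $\mathcal{P}$, and the items indexed by $I$. The exact value equation $\sum_{i\in I}v_i = T_V$ will give $|S|=k$. It remains to check that every vertex is satisfied: vertices of paths and cycles that lie entirely in $S$ have all their neighbors (of which there are at most two) inside $S$, so they are trivially satisfied, and this is the reason the construction uses \cref{lemma:one_to_all}/\cref{cor:one_to_all} via \cref{lemma:h_22}. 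For $u^*$ and $v^*$, satisfaction follows by reversing the algebra of the previous paragraph: the weight constraint $\sum_{i\in I}d_{P_i}(u^*) \le W_1$ together with the possible contribution from $v^*$ exactly recovers inequality $(c)$ for $u^*$, and similarly for $v^*$.

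The main technical obstacle will be the careful bookkeeping in the weight constraints, in particular handling the edge $u^*v^*$ when only one of the two lies in $S^*$: this is what forces the split into the six subcases in the definition of $(W_1,W_2)$, and one must verify in each subcase that the contribution of $v^*$ (resp.\ $u^*$) to $d_{\overline{S}}(u^*)$ (resp.\ $d_{\overline{S}}(v^*)$) is absorbed into the constant subtracted from $d(u^*)$ (resp.\ $d(v^*)$). Beyond this, one also needs to double-check the trivial case where an item $P_i$ has both endpoints adjacent to $u^*$ (case $iii$ of \cref{lemma:h_21}): then $d_{P_i}(u^*)=2$ and this is exactly the weight recorded by the item, so the correspondence remains tight.
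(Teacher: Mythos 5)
Your proposal follows essentially the same route as the paper's proof: the forward direction uses \cref{lemma:h_22} to place each path and cycle entirely inside $S$ or $\overline{S}$, reads the exact value constraint off $|\overline{S}| = |\mathcal{P}| \cup (\{u^*,v^*\}\setminus S^*) \cup (\text{chosen items})$, and converts the \PDS{} inequality at $u^*$ and $v^*$ into the weight constraints, while the backward direction rebuilds $S$ from the complement of the knapsack solution exactly as the paper does (your definition of $S$ even fixes a small sign slip in the paper, which writes $S = S^*\cup\bigcup_{i\in I}P_i\cup\cdots$ but then computes $|S| = |V|-|\mathcal{P}|-\sum_{i\in I}v_i$). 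The one point where you are more explicit than the paper --- the parenthetical claim that paths of $\mathcal{P}$ contribute nothing to $d_{\overline{S}}(u^*)$ when $u^*\in S^*$ because such paths cannot be adjacent to $u^*$ --- is not actually justified for paths of type $vi$ or $vii$ of \cref{lemma:h_21} (one endpoint adjacent to $u^*$, the other to $v^*$, so the path lies in $\mathcal{P}$ when $v^*\notin S^*$ yet still contains a neighbor of $u^*$); the paper's own proof silently makes the same assumption when it asserts $\sum_{i\in I}w_{i,1}=d_{\overline{S}\setminus\{v^*\}}(u^*)$, so this is a shared subtlety of the construction rather than a defect specific to your argument.
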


\begin{proof}
\begin{itemize}
    \item[$\implies$] Let $S$ be a \PDS{} of size $k$ such that $S \cap \{u^*,v^*\} = S^*$. By the \cref{lemma:h_21,lemma:h_22} it follows that all paths and cycles are either entirely contained in $S$ or in $\overline{S}$, and the paths in $\mathcal{P}$ are not contained in $S$.
    Let $I \subseteq [\![1,p+c]\!]$ be such that $I$ contains all $i$ such that $1 \leq i \leq p$ and $P_i \subseteq \overline{S}$ and all $p+j$ such that $1 \leq j \leq c$ and $C_j \subseteq \overline{S}$.

    It holds that $ \displaystyle \sum_{i \in I} v_i = \sum_{P_i \subseteq \overline{S}} |P_i| + \sum_{C_j \subseteq \overline{S}} |C_j| = |V|-|S| - |\mathcal{P}| = T_V$.

    Also, we know that all vertices in $S^*$ are satisfied with respect to $S$ and hence if $u^* \in S$, the following inequality holds for $u^*$:
    $$ \frac{d(u^*)}{|V|-1} \geq \frac{d_{\overline{S}}(u^*)}{|V|-|S|}. $$
    Recall, that $w_i = (d_{P_i}(u^*), d_{P_i}(v^*))$. Let $w_{i,1}$ be the first element of $w_i$, i.e.\ $w_{i,1} = d_{P_i}(u^*)$. Then, $\displaystyle\sum_{i \in I} w_{i,1} = \sum_{P_i \subseteq \overline{S}} d_{P_i}(u^*) = d_{\overline{S} \setminus \{v^*\}}(u^*)$. Then, $W_1 \geq d_{\overline{S} \setminus \{v^*\}}(u^*)$. A similar argument holds for the vertex $v^*$ if $v^* \in S$. Hence, $\sum_{i \in I} w_i \le W$ and $\mathcal{I}$ is a yes-instance of the \textsc{exact 2d Knapsack Problem}.

    \item[$\impliedby$] Let $I \subseteq [\![1,p+c]\!]$ and let $S = S^* \cup \bigcup_{1\leq i \leq p, i \in I} P_i \cup \bigcup_{1 \leq j \leq c, p+j \in I} C_j$. By the definition of $S$, all vertices in paths and cycles in $S$ are satisfied. Indeed for any given vertex $v$ in $S \cap P_i$ (resp.\ $S \cap C_j$) all neighbors of $v$ are also in $S$. The vertices of $S^*$ are also satisfied with respect to $S$ since $\sum_{i \in I} w_i \leq W$. Finally, $S = |V| - |\mathcal{P}| - \sum_{i \in I} v_i = k$ and hence we conclude that $S$ is indeed a \PDS{} of $G$ of size $k$. This concludes the proof.
\end{itemize}
\end{proof}

Finally, in the theorem below we show that \maxPDS{} can be solved in linear time in graphs with $h\le 2$.

\begin{theorem}
\label{trm:h=2}
\maxPDS{} can be solved in time $\mathcal{O}(n^5)$ in graphs with $h(G)\le 2$.
\end{theorem}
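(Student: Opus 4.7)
The plan is to handle $h(G) \le 1$ separately and then reduce the $h(G) = 2$ case to a polynomial number of instances of the \textsc{exact 2d Knapsack Problem}, using \cref{lemma:h=2reduction}.

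\textbf{Small $h$.} If $h(G) = 0$ then $G$ is edgeless and any set of size $|V|-1$ is trivially a \PDS{}; if $h(G) = 1$ then \cref{lem:h_1_poly} produces a \PDS{} of maximum size in $\mathcal{O}(n)$ time, so we may assume $h(G) = 2$. In this case, we first invoke \cref{lowerboundS} to obtain in linear time a \PDS{} $S_0$ of size $\lceil |V|/2 \rceil$ or $\lceil |V|/2 \rceil + 1$, which guarantees that the output is at least this large.

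\textbf{The case $h(G) = 2$.} We then loop over all sizes $k \in \{\lceil |V|/2 \rceil + 1, \ldots, |V|-1\}$ and all four subsets $S^{*} \subseteq \{u^{*},v^{*}\}$ (with $u^{*},v^{*}$ the two vertices of $G$ of maximum degree). For each pair $(k, S^{*})$, we build the instance of the \textsc{exact 2d Knapsack Problem} described just before \cref{lemma:h=2reduction}. By that lemma, $G$ has a \PDS{} $S$ of size $k$ with $S \cap \{u^{*},v^{*}\} = S^{*}$ if and only if the corresponding knapsack instance is a yes-instance. We output the largest $k$ (together with the corresponding witness) for which at least one of the four instances is a yes-instance, and fall back on $S_0$ if none succeeds.

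\textbf{Complexity.} For the knapsack instances arising in our reduction there are at most $p + c \le n$ items; values are path/cycle sizes and so satisfy $v_i \le n$ and the target $T_V \le n$; each finite weight $W_\ell$ equals $d(\cdot) \tfrac{|V|-k}{|V|-1}$ or $(d(\cdot)-1)\tfrac{|V|-k}{|V|-1}$ and the weight coordinates of items are degrees in $\{0,1,2\}$, so after clearing denominators by $|V|-1$ (equivalently, replacing each inequality $\sum w_{i,\ell} \le W_\ell$ by its scaled integer version) every weight coordinate is bounded by $\mathcal{O}(n^2)$ with integer item weights in $\{0,|V|-1,2(|V|-1)\}$; by reindexing the DP on the number of chosen endpoints (at most $d(u^{*}), d(v^{*}) \le n$) rather than on raw scaled weight, each dimension becomes $\mathcal{O}(n)$. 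A coordinate set to $\infty$ is simply dropped. Hence the standard DP solves each instance in $\mathcal{O}(n \cdot n \cdot n \cdot n) = \mathcal{O}(n^4)$ time, and the outer loop over $\mathcal{O}(n)$ values of $k$ and the four choices of $S^{*}$ yields total time $\mathcal{O}(n^5)$.

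The main obstacle, which the preparatory lemmas already absorb, is that the weight constraints in the reduction are not a priori integers; the careful point in the complexity argument is therefore to verify that the natural DP parameterised by the number of endpoints of chosen paths incident to $u^{*}$ and to $v^{*}$ (both bounded by $n$) faithfully implements the scaled integer knapsack, so that no factor depending on the denominator $|V|-1$ appears.
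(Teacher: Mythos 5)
Your proposal is correct and follows essentially the same route as the paper: dispose of $h\le 1$ via \cref{lem:h_1_poly}, then for $h=2$ iterate over the $\mathcal{O}(n)$ sizes $k$ and the four choices of $S^*\subseteq\{u^*,v^*\}$, solve each resulting \textsc{exact 2d Knapsack} instance from \cref{lemma:h=2reduction} in $\mathcal{O}(n^4)$, and fall back on the \cref{lowerboundS} solution if no larger \PDS{} exists. Your extra care about the non-integer weight bounds $W_1,W_2$ (reindexing the DP by the number of chosen endpoints adjacent to $u^*$ and $v^*$) is a detail the paper glosses over with $\min(W_i,n)$, but it does not change the approach or the bound.
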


\begin{proof}
Let $G=(V,E)$ be a graph with $h(G)\le 2$. Note that $|E| \leq \mathcal{O}(n)$. If $h(G) \leq 1$ then by \cref{lem:h_1_poly}, we can find a \PDS{} of maximum size in $G$ in time $\mathcal{O}(n)$. If $h(G)=2$,
by using BFS algorithm we can compute the vertices  $u^*,v^*$, paths and cycles of $G[V\setminus\{u^*,v^*\}]$ in time $\mathcal{O}(n^2)$.
Then, there are four possible cases for the set $S^* \subseteq \{u^*,v^*\}$ and $\mathcal{O}(n)$ possible values of $k$. Constructing the equivalent instance of the \textsc{exact 2d Knapsack Problem} can be done in $\mathcal{O}(n)$. Solving this instance can be done in $\mathcal{O}( n \cdot T_V \cdot \min(W_1,n) \cdot \min(W_2,n)) = \mathcal{O}(n^4)$. Thus, the total complexity is $\mathcal{O}(n^5)$.
\end{proof}

Notice, that if there does not exist a \PDS{} of size $k$ with $\frac{\vert V \vert}{2} + 1 \leq k \leq \vert V \vert$, then by \cref{lowerboundS}, we know that the size of a maximum \PDS{} in a given graph equals $\lceil \frac{|V|}{2} \rceil$ if the parity of $|V|$ is even and $\lceil \frac{|V|}{2} \rceil +1$ if the parity of $|V|$ is odd. Moreover, such \PDS{} can be computed in linear time by using the algorithm introduced in \cite{MR4023158}.

Hereafter we explain how to use a similar approach to solve \connmaxPDS{}. Let $G=(V,E)$ be a graph with $|V|=n$. Let $C$ be a connected component of $G$ which forms a cycle, let $v\in V[C]$ and $u,w$ be its neighbors. Let $S$ be a \PDS{} in $G$. Assume that $v\in S$. Then, if $u,w \in \overline{S}$, $|N_S(v)| = 0$ and $v$ is not satisfied with respect to $S$. On the other hand, if $u,w \in S$, then $|N_{\overline{S}}(v)| = 0$ and $v$ is satisfied with respect to $S$. Let us consider the case when $|N_S(v)| = |N_{\overline{S}}(v)|= 1$. Suppose for a contradiction that $|S| > \lceil \frac{n}{2} \rceil \ge \lceil \frac{n}{2} \rceil+1 $ and $v$ is satisfied with respect to $S$. Then, the following inequality holds for the vertex $v$:
\[  \frac{1}{|S|-1}\ge \frac{1}{\lceil \frac{n}{2} \rceil}\ge\frac{1}{|\overline{S}|}\ge \frac{1}{n-(\lceil \frac{n}{2} \rceil+1)}\,,\]
a contradiction. Hence, $|S| \le \lceil\frac{n}{2} \rceil$. Finally, notice that if we take any $\lceil\frac{n}{2} \rceil$ consecutive vertices of a cycle, it forms a \PDS{} of maximum size. A similar argument holds for a connected component $P$ of $G$ which forms a path. This implies the following polynomial-time algorithm for solving \connmaxPDS{}.

Let $G=(V,E)$ be a graph. The case when $h(G)= 1$ is trivial. Assume that $h(G)= 2$. First, we use the BFS algorithm to determine the connected components of a graph. \Cref{lemma:h_21} implies that $G$ is a disjoint union of paths $P_1,\ldots,P_k$, cycles $C_1,\ldots,C_\ell$, for some $k,\ell \ge 0$, and the connected component containing the vertices $u^*,v^*$. Let us denote the connected component containing the vertices $u^*,v^*$ by $C^*$. Let $G^* = G[C^*]$. We use the reduction from \cref{lemma:h=2reduction} to determine the maximum size of a connected \PDS{} in $G^*$. Then, by the argument above, we know that in $C_i$, for $i \in \{1,\ldots,\ell\}$, and in $P_j$, for $j \in \{1,\ldots,k\}$, the maximum size of a connected \PDS{} equals to $\lceil\frac{|C_i|}{2}\rceil$ and $\lceil\frac{|P_j|}{2}\rceil$, respectively. We compare the values and determine the maximum size of a connected \PDS{} in $G$. 

\subsection{Finding \PDS{} in graphs $G$ such that $h(\overline{G}) \leq 2$}
\label{subsec:graphsofdegreeatleastN-3}
In this subsection, we introduce a polynomial-time algorithm for finding a \PDS{} of maximum size in graphs $G$ such that $h(\overline{G})\le 2$. 
%whose complements have $h$-index at most 2. The general idea comes from the structure of $h$-index 2 graphs seen in \Cref{lemma:h_21} and the fact that finding a PDS in $G$ with $h(\overline{G}) \leq 2$ actually comes down to finding a dominating set in $\overline{G}$ as explained in \Cref{rk:PDS_to_dom}.

First, below we show that, given a graph $G=(V,E)$ on $n$ vertices, $S \subseteq V$ of size at least $\lceil\frac{n}{2}\rceil+1$, and $u \in V$ such that $d_{\overline{G}}(u) \leq 2$, $u$ is satisfied with respect to $S$ if and only if at least one of its neighbors in $\overline{G}$ is not contained in $S$.

Suppose for a contradiction that both neighbors of $u$ in $\overline{G}$, say $v,w$, are contained in $S$ and $u$ is satisfied. Then $\frac{d_S(u)}{|S|-1} \leq \frac{|S|-3}{|S|-1} < 1$ and $\frac{d_{\overline{S}}(u)}{|\overline{S}|}=1$, a contradiction. If $u$ has only one neighbor in $\overline{G}$, say $v$, and $v \in S$, we similarly obtain a contradiction.

Then, we show that if exactly one of the neighbors of $u$ is contained in $S$, then $u$ is satisfied. Without loss of generality, assume that $v \notin S$. Recall that $|S| \geq \lceil\frac{n}{2}\rceil+1$ and $|S|-1 \geq |\overline{S}|$. Thus,
$$ \frac{d_S(u)}{|S|-1} \geq \frac{|S|-2}{|S|-1} = 1 - \frac{1}{|S|-1} \geq 1 - \frac{1}{|\overline{S}|} = \frac{|\overline{S}|-1}{|\overline{S}|} \geq \frac{d_{\overline{S}}(u)}{|\overline{S}|}. $$
Hence, $u$ is satisfied with respect to $S$. Let $G=(V,E)$ be a graph. We say that a subset $D\subseteq V$ dominates $G$ if every vertex of $G$ is either in $D$ or adjacent to a vertex in $D$. The observations above imply the following.

\begin{remark}\label{rk:PDS_to_dom}
    Let $G=(V,E)$ be a graph on $n$ vertices such that $h(\overline{G}) \leq 2$. Let $u^*,v^*$ be two vertices of maximum degree in $\overline{G}$. Let $S \subseteq V$ be a set of vertices of size at least $\lceil\frac{n}{2}\rceil+1$. 
    Then, $S$ forms a \PDS{} in $G$ if and only if the following two conditions are satisfied.
    \begin{itemize}
        \item[$(i)$] The vertices in $\{u^*,v^*\} \cap S$ are satisfied with respect to $S$.

        \item[$(ii)$] Every non-universal vertex in $S$ in $G$ has at least one non-neighbor (i.e. neighbor in $\overline{G}$) outside of $S$. Equivalently, $\overline{S}$ dominates every non-isolated vertex of $\overline{G}$.
    \end{itemize}
\end{remark}

\begin{proof}
By the argument above, we know that vertices $u$ with $d_{\overline{G}}(u) \leq 2$ are satisfied with respect to $S$ if and only if one of the two neighbors of $u$ in $\overline{G}$ is contained in $\overline{S}$. Since $S$ forms a \PDS{} in $G$ if and only if all the vertices in $S$ are satisfied and $u^*,v^*$ are satisfied only if they have at least one of their neighbors in $\overline{G}$ in $\overline{S}$, the statement follows.
\end{proof}
 
Further, we show that finding a \PDS{} of maximum size in $G$ such that $h(\overline{G})\le 2$ can be done in polynomial time. Similarly as before, we first consider the case when $h(\overline{G})=1$.

\begin{lemma}
\label{lemma:h=1comp}
    \maxPDS{} can be solved in polynomial time on graphs $G$ such that $h(\overline{G})=1$.
\end{lemma}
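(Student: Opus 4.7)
My plan is to apply \cref{lem:structure_h_1} to $\overline{G}$, obtaining a partition $V = V_I \cup V_M \cup V_*$ where $V_I$ is independent in $\overline{G}$ (so its vertices are universal in $G$), $\overline{G}[V_M]$ is a matching, and $\overline{G}[V_*]$ is a star with center $u^*$ and, say, $k\ge 0$ leaves. Moreover, as observed in the proof of \cref{lem:structure_h_1}, $\overline{G}$ has no edges between different parts, so every vertex's set of non-neighbors in $G$ lies entirely inside its own part. By \cref{lowerboundS} we already have in linear time a \PDS{} of size $\lceil|V|/2\rceil$ or $\lceil|V|/2\rceil+1$, so it suffices to decide whether a \PDS{} of size at least $\lceil|V|/2\rceil+1$ exists and, if so, find a maximum such one; for these sizes, the characterization of \cref{rk:PDS_to_dom} applies.

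I will then analyze maximum \PDS{}s satisfying this size bound part by part. Every vertex of $V_I$ is universal in $G$, so it can be placed in $S$ with no constraint. For each matching edge $\{a,b\}$ of $\overline{G}$ in $V_M$, condition (ii) of \cref{rk:PDS_to_dom} forces at most one of $a,b$ to lie in $S$ (since each has only one non-neighbor in $G$, namely the other one), and exactly one is achievable, giving a total contribution of $|V_M|/2$ to $|S|$. For the star on $V_*$ with $k \ge 2$: if $u^* \in S$, condition (ii) applied to each leaf forces every leaf into $\overline{S}$, and condition (i) for $u^*$ then simplifies to $|\overline{S}| \le |V|-1$, which always holds, giving a contribution of $1$; if $u^* \in \overline{S}$, every leaf is dominated in $\overline{G}$ by $u^*$ and all $k$ leaves can sit in $S$, giving a contribution of $k$. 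The second option is always better when $k \ge 2$; the degenerate sub-cases $k \in \{0,1\}$ either contribute nothing or merge with the matching analysis.

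Hence the largest possible \PDS{} size (under the size hypothesis of \cref{rk:PDS_to_dom}) is $|V_I| + |V_M|/2 + k$ when $V_*\ne\emptyset$ with $k \ge 2$, and $|V_I| + |V_M|/2$ otherwise, and it is realized by the construction just described. The algorithm computes the decomposition of $\overline{G}$, evaluates this formula, and returns either the constructed $S$ (if its size exceeds $\lceil|V|/2\rceil$) or the \PDS{} from \cref{lowerboundS}; all steps run in polynomial (in fact linear) time. I do not expect a real obstacle: the only subtle step is checking condition (i) for $u^*\in S$, but once all leaves are forced into $\overline{S}$ by condition (ii) the inequality is immediate, so the proof reduces to a clean case analysis driven entirely by the structure of $\overline{G}$.
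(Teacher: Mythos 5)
Your proposal is correct and follows essentially the same route as the paper: decompose $\overline{G}$ via \cref{lem:structure_h_1}, place all of $V_I$, one endpoint of each matching edge, and all star leaves into $S$ while excluding the center, and argue optimality by noting that each matching edge and the star each force at least one vertex into $\overline{S}$. The paper verifies this same construction and upper bound directly rather than routing through \cref{rk:PDS_to_dom} and the fallback to \cref{lowerboundS}, but the underlying satisfaction conditions are identical; just take care that in the degenerate star cases $k\in\{0,1\}$ your size formula still credits the (then universal or matching-like) vertex of $V_*$ to $S$.
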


\begin{proof}
Recall that $V = V_I \cup V_M \cup V_*$where $V_I, V_M$ and $V_*$ represent an independent set, matching and a star in $\overline{G}$, respectively. Let $u^*$ be the center of the star $V^*$, i.e. the vertex of maximum degree in $\overline{G}$. Notice that $|V_M| = 2k$, for some positive integer $k$. Let $V_M = \{u_1,v_1,...,u_k,v_k\}$, where $u_iv_i \in E(\overline{G})$, for all $i \in [\![1,k]\!]$.

Further, we show that $S = V_I \cup \{u_1,...,u_k\} \cup (V_* \setminus \{u^*\})$ forms a \PDS{} of maximum size in $G$. Notice that $|S| \geq \lceil \frac{n}{2} \rceil +1$ and
\begin{itemize}
    \item[$(i)$] $S$ only contains vertices of degree at most $2$ in $\overline{G}$, i.e. it does not contain $u^*$;
    \item[$(ii)$] every non universal vertex of $S$ has a non-neighbor in $\overline{S}$.
\end{itemize}
For every \PDS{} $S'$ in $G$, to ensure that all vertices in $S'$ are satisfied, we must include at least one of $u_i$ or $v_i$, for every $i\in \{1,\ldots,k\}$, in $\overline{S'}$. Similarly, we must include at least one vertex of $V^*$ in $\overline{S'}$. Thus, $|S'| \leq |S|$, for all \PDS{}'s $S'$ in $G$. This implies that $S$ is a \PDS{} of maximum size in $G$. 
\end{proof}

Note that the \PDS{} considered in the proof of the \cref{lemma:h=1comp} induces a connected subgraph of $G$. Indeed, if $\overline{G}$ contains at least two connected components, then $\overline{G}[S]$ also contains at least two connected components. In this case, $G[S]$ is clearly connected. Otherwise, either $V_I=\emptyset$ and $V_M = \emptyset$ which implies $S = V_* \setminus \{u^*\}$ is an independent set in $\overline{G}$ and a clique in $G$, or $V_I = \emptyset$, $V_* = \emptyset$ and $V_M = \{u_1,v_1\}$ which implies that $|V|=2$, a contradiction to the fact that $|V| \geq 3$.

Let us now consider the case when  $h(\overline{G})=2$. We will use the structure of $\overline{G}$ (see \cref{lemma:h_21}) and construct a dominating set $D$ in $\overline{G}$ such that $S = V \setminus D$ forms a \PDS{} in $G$ (see \cref{rk:PDS_to_dom}).

\begin{theorem}
\label{thrm:h=2comp}
\maxPDS{} can be solved in time $\mathcal{O}(n^*)$ in graphs $G$ such that $h(\overline{G})=2$.
\end{theorem}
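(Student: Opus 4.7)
The plan is to use \cref{rk:PDS_to_dom} to recast the problem as a constrained dominating-set problem in $\overline{G}$, and then exploit the restrictive structure of $\overline{G}$ (coming from \cref{lemma:h_21} applied to $\overline{G}$) to solve it by a low-dimensional knapsack-like dynamic program. By \cref{lowerboundS} a \PDS{} of size $\lceil |V|/2\rceil$ or $\lceil |V|/2\rceil+1$ is always available in linear time, so it suffices to decide, for each target size $k\ge \lceil |V|/2\rceil+1$, whether a \PDS{} of size $k$ exists, and return the largest such $k$.

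Let $u^*,v^*$ be the two vertices of maximum degree in $\overline{G}$. By \cref{lemma:h_21} applied to $\overline{G}$, the graph $\overline{G}[V\setminus\{u^*,v^*\}]$ is a disjoint union of paths, cycles and isolated vertices, and the only edges of $\overline{G}$ between $\{u^*,v^*\}$ and such a path are incident to its endpoints, while cycle vertices are adjacent to neither $u^*$ nor $v^*$. Isolated vertices of $\overline{G}$ are universal in $G$ and impose no condition, so they are placed directly into $S$ and removed from the analysis. I would then enumerate the four choices for $S^*:=S\cap\{u^*,v^*\}$ and, writing $D=V\setminus S$, use \cref{rk:PDS_to_dom}: $S$ is a \PDS{} if and only if $D$ dominates every non-isolated vertex of $\overline{G}$ and each $w\in S^*$ satisfies
\[ d_{\overline{S}}(w) \;\le\; \frac{d(w)}{|V|-1}\,|D|. \]
Using the identity $d_{\overline{S}}(w) = |D| - [w\in D] - |D\cap N_{\overline{G}}(w)|$, this is a linear inequality in the two integers $|D|$ and $|D\cap N_{\overline{G}}(w)|$.

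For each path or cycle $C$ of $\overline{G}[V\setminus\{u^*,v^*\}]$, a standard linear-time dynamic program along $C$ enumerates all triples $(d_C,a_C,b_C)$ for which there exists $D_C\subseteq V(C)$ with $|D_C|=d_C$ that dominates $V(C)$ (taking into account whether the endpoints of $C$ are already dominated from outside by $u^*$ or $v^*$, which is determined by $S^*$), and with $a_C = |D_C\cap N_{\overline{G}}(u^*)|$ and $b_C = |D_C\cap N_{\overline{G}}(v^*)|$. Since only the endpoints of $C$ can lie in $N_{\overline{G}}(u^*)\cup N_{\overline{G}}(v^*)$, we have $a_C,b_C\in\{0,1,2\}$. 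I would then aggregate components via a knapsack-style dynamic program on the global triple $(|D|,\,|D\cap N_{\overline{G}}(u^*)|,\,|D\cap N_{\overline{G}}(v^*)|)\in[0,n]^3$, keeping the minimum achievable $|D|$ for each triple; for every such triple one checks the two global satisfaction inequalities above, and returns, across the four choices of $S^*$, the one giving the smallest $|D|$, i.e.\ the largest $|S|$.

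The hard part is reconciling the \emph{global} satisfaction constraints on $u^*$ and $v^*$ with the \emph{local} domination requirements on each path and cycle. This is resolved by the structural observation from \cref{lemma:h_21}: each path or cycle interacts with $\{u^*,v^*\}$ in $\overline{G}$ only through its endpoints, so the only globally relevant summary of a component is the three-dimensional triple above, with the small coordinates $a_C,b_C\in\{0,1,2\}$. Hence the knapsack aggregation has polynomially many states and the whole procedure runs in time polynomial in $n$.
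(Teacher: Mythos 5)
Your proposal is correct, and it rests on the same two pillars as the paper's proof---\cref{rk:PDS_to_dom}, which recasts the problem as finding a smallest $D=\overline{S}$ that dominates the non-isolated vertices of $\overline{G}$ while keeping the vertices of $S\cap\{u^*,v^*\}$ satisfied, and \cref{lemma:h_21}, which restricts $\overline{G}[V\setminus\{u^*,v^*\}]$ to paths and cycles meeting $\{u^*,v^*\}$ only at path endpoints---but it solves the resulting optimization problem by a genuinely different method. The paper builds candidate minimum sets $D$ explicitly through a case analysis over six component types and the component size modulo $3$, then verifies whether $S=V\setminus D$ is a \PDS{} and retries with larger $D$ or with $u^*$ and/or $v^*$ forced into $D$; this gives an $\mathcal{O}(n^2)$ algorithm, but its completeness depends on a delicate enumeration (notably for Type~6 components, where one must balance how many $\overline{G}$-neighbors of $u^*$ versus $v^*$ enter $D$). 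Your knapsack-style aggregation of the per-component summaries $(d_C,a_C,b_C)$ with $a_C,b_C\in\{0,1,2\}$ sidesteps that case analysis: it systematically records every achievable global triple $\bigl(|D|,\,|D\cap N_{\overline{G}}(u^*)|,\,|D\cap N_{\overline{G}}(v^*)|\bigr)$, against which the two linear satisfaction inequalities are checked exactly, so correctness is immediate; the price is a larger but still polynomial running time (roughly $\mathcal{O}(n^4)$ per choice of $S^*$). Two points are worth making explicit when writing this up: the component DP must enumerate \emph{all} achievable values of $d_C$, not only minimum dominating sets of each component, since the constraint $|D\cap N_{\overline{G}}(w)|\ge |D|\,d_{\overline{G}}(w)/(n-1)$ for $w\in S^*$ can force $D$ to contain endpoints adjacent to $w$ beyond what domination requires (your formulation already permits this); and the search must be restricted to $|D|\le n-\lceil n/2\rceil-1$, falling back to \cref{lowerboundS} when no such feasible $D$ exists, since \cref{rk:PDS_to_dom} only characterizes \PDS{}s of size at least $\lceil n/2\rceil+1$.
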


\begin{proof}
\Cref{lemma:h_21} implies that $\overline{G}$ has the following structure. Let $u^*,v^*$ be the two vertices of $\overline{G}$ of maximum degree. The connected components of $\overline{G}[V\setminus \{u^*v^*\}]$ can be categorized into the following types:
\begin{itemize}
    \item Type 1. Cycles or paths with adjacencies to neither $u^*$ nor $v^*$;
    \item Type 2. Paths with one endpoint adjacent to $u^*$;
    \item Type 3. Paths with both endpoints adjacent to $u^*$;
    \item Type 4. Paths with one endpoint adjacent to $v^*$;
    \item Type 5. Paths with one endpoint adjacent to $v^*$;
    \item Type 6. Paths with one endpoint adjacent to $u^*$ and one endpoint adjacent to $v^*$.
\end{itemize}

Below, we consider different subtypes of the types of the components of $\overline{G}[V \setminus \{u^*,v^*\}]$ depending on the cardinality of the connected component modulo $3$. More specifically, we consider three cases. Case when the cardinality of the connected component equals $0 \mod 3$, $1 \mod 3$ and finally, $2 \mod 3$. We denote these subtypes as follows. For $x\in \{1,\ldots,6\}$, types $x.1$, $x.2$ and $x.3$ refers to the connected components of type $x$ containing $3k$, $3k+1$ and $3k+2$ vertices for some positive integer $k$, respectively. 
%See \cref{fig:dom_each_type} for an illustration of a set $D$ for all types of components. 

\paragraph{Procedure:} As mentioned earlier, we will construct sets $D$ in $\overline{G}$ and check whether $S = V \setminus D$ forms a \PDS{} in $G$. Recall that \Cref{rk:PDS_to_dom} implies that, for every \PDS{} $S$ in $G$, $D=\overline{S}$ must dominate every non-isolated vertex in $\overline{G}$. 
%Below, we construct a set $D$ of a minimum size such that $D$ dominates every non-isolated vertex of $\overline{G}$. 

%Also, if possible we want $D$ to contain $u^*$, $v^*$ or as many of their neighbors as possible. 

More specifically, we proceed as follows. We construct various sets $D$ with the smallest size possible while ensuring they meet the required properties. Subsequently, we check if each corresponding set $S = \overline{D}$ forms a \PDS{}. If none of the sets of such size satisfy this condition, we conclude that there does not exist a  \PDS{} of size $n-|D|$. We then proceed to construct sets $D$ of larger sizes. Notice that if there exists a set $D$ containing either $u^*$ or/and $v^*$ and another set $D'$ of the same size that doesn't contain those vertices, we favor the set $D$. This comes from the fact that $u^*$ and $v^*$ have the lowest degree in $G$. If included in $S$, we must maximize the inclusion of their neighbors in $\overline{G}$ to ensure that they are satisfied with respect to $S$.

%Also, note that the most difficult vertices to satisfy are $u^*$ and $v^*$. 
%To ensure that $S=V \setminus D$ is a PDS of maximum size, $D$ is computed such that $D$ contains $u^*$ or $v^*$ if possible and otherwise $D$ contains as much non-neighbors (i.e. neighbors in $\overline{G}$) of $u^*$ (resp. $v^*$) as possible.

We construct $D$ of the smallest possible size for each type and subtype presented above as follows. Let $P$ be a connected component of $\overline{G}[V \setminus \{u^*,v^*\}]$ such that $|P|=\ell$. First, we consider connected components of Type 1.

%Also, the adjacency is considered with respect to $\overline{G}$.
%The number of vertices in $P$ is denoted $|P|=3k$, $|P|=3k+1$ or $|P|=3k+2$ depending on the case. Also, if $P$ is a path, $p_0,p_{-1}$ denotes its endpoints.

\begin{itemize}
    \item Type 1: Let us denote the vertices in $P$ by $v_1,\ldots,v_{\ell}$ in such a way that, for every $i \in \{1,...,\ell\}$, $v_{i}$ is adjacent in $\overline{G}$ to $v_{i-1}$ and $v_{i+1}$, if such vertices exist. \Cref{rk:PDS_to_dom} implies that to ensure that the vertices in $S=\overline{D}$ are satisfied, $D$ contains at least one vertex from $\{v_{i-1},v_i,v_{i+1}\}$, for every $i \in \{2,...,\ell-1\}$. Then, since $D$ is of a minimum size, it contains $\lceil\frac{|P|}{3}\rceil$ vertices in $P$.
\end{itemize}

Let us consider the remaining cases. Let $ u $ be a neighbor of $ u^* $ and $ u' $ be a neighbor of $ u $ in $ P $. Similarly, let $ v $ be a neighbor of $ v^* $ and $ v' $ be a neighbor of $ v $ in $ P $. Regardless of the cardinality of $ P $ modulo 3, a minimum size set $ D $ contains at least $\left\lceil \frac{|P|}{3} \right\rceil - 1$ vertices from $ P $ that are not $ u, u', v, v', u^*, $ or $ v^* $. Then, depending on the cardinality of $ P $ modulo 3, additional neighbors must be included in $ D $ to ensure all vertices are satisfied.

\begin{itemize}
    \item Type 2. Below, we consider different subtypes of the components of $\overline{G}[V \setminus \{u^*,v^*\}]$ of type $2$ depending on the cardinality of the connected component modulo $3$.
    \begin{enumerate}
        \item[2.0] In this case we include $u'$ in $D$.
        \item[2.1] In this case we include $u^*$ in $D$.
        \item[2.2] In this case we include $u$ in $D$.
    \end{enumerate}

    \item Type 3. Below, we consider different subtypes of the components of $\overline{G}[V \setminus \{u^*,v^*\}]$ of type $3$ depending on the cardinality of the connected component modulo $3$.
    \begin{enumerate}
        \item[3.0] In this case we include $u'$ in $D$.
        \item[3.1] In this case we include $u^*$ in $D$.
        \item[3.2] In this case we include $u^*$ in $D$.
    \end{enumerate}

\item Type 4: This case is analogous to Type $2$, but with $ u^* $ replaced by $ v^* $.

\item Type 5: This case is analogous to Type $3$, but with $ u^* $ replaced by $ v^* $.
\end{itemize}

If, after considering the cases above, either $ u^* $ or $ v^* $ is contained in $ D $, then we can remove these vertices from the graph $\overline{G}$ as well as their neighbors in $\overline{G}$ (except $ u^* $ or $ v^* $ if they are adjacent in $\overline{G}$) when considering components of Type $6$. 
Indeed, let $ v $ be such that $ d_{\overline{G}}(v) \leq 2 $ and $ v $ is adjacent to $ u^* $ in $\overline{G}$, and let $ v' $ be its other neighbor. 
If $ u^* \in D $ and we want $ D $ to be of minimum size, it is better to add $ v' $ to $ D $ rather than $ v $. 
Thus, in this case, we can assume that $ v \notin D $. 
A similar argument applies if $ v^* \in D $. Moreover, if either $ u^* \in D $ or $ v^* \in D $, and we follow the steps outlined above, the Type $6$ components will transform into components of Type $1, 2,$ or $4$, and we proceed accordingly. 
If neither $ u^* \in D $ nor $ v^* \in D $, we proceed as follows.

Similarly as before, let $ u $ be a neighbor of $ u^* $ and $ u' $ be a neighbor of $ u $ in $ P $. Similarly, let $ v $ be a neighbor of $ v^* $ and $ v' $ be a neighbor of $ v $ in $ P $. Regardless of the cardinality of $ P $ modulo 3, a minimum size set $ D $ contains at least $\left\lceil \frac{|P|}{3} \right\rceil - 1$ vertices from $ P $ that are not $ u, u', v, v'$ and either not $ u^*$ or $ v^* $. Then, depending on the cardinality of $ P $ modulo 3, additional vertices must be included in $ D $ to ensure all vertices are satisfied. 

\begin{itemize}
\item Type $6$. Below, we consider different subtypes of the components of $\overline{G}[V \setminus \{u^*,v^*\}]$ of type $6$ depending on the cardinality of the connected component modulo $3$.
\begin{enumerate}
    \item[6.0] In this case, if $ u' \in D $, we include $ v' $ in $ D $, and if $ v' \in D $, we include $ u' $ in $ D $.
    \item[6.1] In this case we include either $u^*$ or $v^*$ in $D$.
    \item[6.2] In this case, if $u' \in D$, we include $v$, and if $v'\in D$, we include $u$ in $D$.
\end{enumerate}
\end{itemize}

On \cref{fig:dom_each_type}, we illustrate the connected components of Types $2$, $3$, and $6$. Note that components of Type $1$ are trivial. Components of Types $4$ and $5$ are analogous to Types $2$ and $3$, respectively, but with $u^*$ replaced by $v^*$. The red vertices represent those included in the set $D$ of minimum size.

\begin{figure}[!ht]
\centering
\begin{tikzpicture}

\node () at (-1,1) {\Large \underline{Type 2:}};
\node () at (-4,-0.1) {type 2.0};
\node () at (-4,-1.1) {type 2.1};
\node () at (-4,-2.1) {type 2.2};
\node () at (0,0) {
\begin{tikzpicture}[scale=0.5]
\draw[thick,black] (0.00,0.00) -- (-1.00,0.00);
\draw[thick,black] (-1.00,0.00) -- (-2.00,0.00);
\draw[thick,black] (-2.00,0.00) -- (-3.00,0.00);
\draw[thick,black,dashed] (-3.00,0.00) -- (-7.00,0.00);
\draw[thick,black] (-7.00,0.00) -- (-8.00,0.00);
\draw[thick,black] (-8.00,0.00) -- (-9.00,0.00);

\filldraw[blue] (0.00,0.00) circle (3pt);
\node () at (0.7,0.3) {\large $u^*$};
\filldraw[blue] (-1.00,0.00) circle (3pt);
\filldraw[red] (-2.00,0.00) circle (3pt);
\filldraw[blue] (-3.00,0.00) circle (3pt);
\filldraw[blue] (-7.00,0.00) circle (3pt);
\filldraw[red] (-8.00,0.00) circle (3pt);
\filldraw[blue] (-9.00,0.00) circle (3pt);

%\draw[thick] (-0.5,-0.5) -- (-0.5,0.5);
%\draw[thick] (-3.5,-0.5) -- (-3.5,0.5);
%\draw[thick] (-6.5,-0.5) -- (-6.5,0.5);

\draw[thick,red] (-2,0) ellipse (0.3cm and 0.3cm);
\draw[thick,red] (-8,0) ellipse (0.3cm and 0.3cm);
\end{tikzpicture}
};
\node () at (0,-1) {
\begin{tikzpicture}[scale=0.5]
\draw[thick,black] (0.00,0.00) -- (1.00,0.00);
\draw[thick,black] (0.00,0.00) -- (-1.00,0.00);
\draw[thick,black] (-1.00,0.00) -- (-2.00,0.00);
\draw[thick,black] (-2.00,0.00) -- (-3.00,0.00);
\draw[thick,black,dashed] (-3.00,0.00) -- (-6.00,0.00);
\draw[thick,black] (-6.00,0.00) -- (-7.00,0.00);
\draw[thick,black] (-7.00,0.00) -- (-8.00,0.00);

\filldraw[red] (1.00,0.00) circle (3pt);
\node () at (1.7,0.3) {\large $u^*$};
\filldraw[blue] (0.00,0.00) circle (3pt);
\filldraw[blue] (-1.00,0.00) circle (3pt);
\filldraw[red] (-2.00,0.00) circle (3pt);
\filldraw[blue] (-3.00,0.00) circle (3pt);
\filldraw[blue] (-6.00,0.00) circle (3pt);
\filldraw[red] (-7.00,0.00) circle (3pt);
\filldraw[blue] (-8.00,0.00) circle (3pt);

%\draw[thick] (-0.5,-0.5) -- (-0.5,0.5);
%\draw[thick] (-3.5,-0.5) -- (-3.5,0.5);
%\draw[thick] (-5.5,-0.5) -- (-5.5,0.5);

\draw[thick,red] (1,0) ellipse (0.3cm and 0.3cm);
\draw[thick,red] (-2,0) ellipse (0.3cm and 0.3cm);
\draw[thick,red] (-7,0) ellipse (0.3cm and 0.3cm);
\end{tikzpicture}};
\node () at (0,-2) {\begin{tikzpicture}[scale=0.5]
\draw[thick,black] (2.00,0.00) -- (1.00,0.00);
\draw[thick,black] (0.00,0.00) -- (1.00,0.00);
\draw[thick,black] (0.00,0.00) -- (-1.00,0.00);
\draw[thick,black] (-1.00,0.00) -- (-2.00,0.00);
\draw[thick,black] (-2.00,0.00) -- (-3.00,0.00);
\draw[thick,black,dashed] (-3.00,0.00) -- (-5.00,0.00);
\draw[thick,black] (-5.00,0.00) -- (-6.00,0.00);
\draw[thick,black] (-7.00,0.00) -- (-6.00,0.00);

\filldraw[blue] (2.00,0.00) circle (3pt);
\node () at (2.7,0.3) {\large $u^*$};
\filldraw[red] (1.00,0.00) circle (3pt);
\filldraw[blue] (0.00,0.00) circle (3pt);
\filldraw[blue] (-1.00,0.00) circle (3pt);
\filldraw[red] (-2.00,0.00) circle (3pt);
\filldraw[blue] (-3.00,0.00) circle (3pt);
\filldraw[blue] (-5.00,0.00) circle (3pt);
\filldraw[red] (-6.00,0.00) circle (3pt);
\filldraw[blue] (-7.00,0.00) circle (3pt);

%\draw[thick] (-0.5,-0.5) -- (-0.5,0.5);
%\draw[thick] (-3.5,-0.5) -- (-3.5,0.5);
%\draw[thick] (-4.5,-0.5) -- (-4.5,0.5);

\draw[thick,red] (1,0) ellipse (0.3cm and 0.3cm);
\draw[thick,red] (-2,0) ellipse (0.3cm and 0.3cm);
\draw[thick,red] (-6,0) ellipse (0.3cm and 0.3cm);
\end{tikzpicture}};

\node () at (8,1) {\Large \underline{Type 6:}};
\node () at (11,-0.1) {type 6.0};
\node () at (11,-1.1) {type 6.1};
\node () at (11,-2.1) {type 6.2};
\node () at (7,0) {\begin{tikzpicture}[scale=0.5]
\draw[thick,black] (0.00,0.00) -- (-1.00,0.00);
\draw[thick,black] (-1.00,0.00) -- (-2.00,0.00);
\draw[thick,black] (-2.00,0.00) -- (-3.00,0.00);
\draw[thick,black,dashed] (-3.00,0.00) -- (-7.00,0.00);
\draw[thick,black] (-7.00,0.00) -- (-8.00,0.00);
\draw[thick,black] (-8.00,0.00) -- (-9.00,0.00);
\draw[thick,black] (-9.00,0.00) -- (-10.00,0.00);

\filldraw[blue] (0.00,0.00) circle (3pt);
\node () at (0.7,0.3) {\large $u^*$};
\filldraw[blue] (-10.00,0.00) circle (3pt);
\node () at (-10.7,0.3) {\large $v^*$};
\filldraw[blue] (-1.00,0.00) circle (3pt);
\filldraw[red] (-2.00,0.00) circle (3pt);
\filldraw[blue] (-3.00,0.00) circle (3pt);
\filldraw[blue] (-7.00,0.00) circle (3pt);
\filldraw[red] (-8.00,0.00) circle (3pt);
\filldraw[blue] (-9.00,0.00) circle (3pt);

%\draw[thick] (-0.5,-0.5) -- (-0.5,0.5);
%\draw[thick] (-3.5,-0.5) -- (-3.5,0.5);
%\draw[thick] (-6.5,-0.5) -- (-6.5,0.5);
%\draw[thick] (-9.5,-0.5) -- (-9.5,0.5);

\draw[thick,red] (-2,0) ellipse (0.3cm and 0.3cm);
\draw[thick,red] (-8,0) ellipse (0.3cm and 0.3cm);
\end{tikzpicture}};
\node () at (7,-1) {\begin{tikzpicture}[scale=0.5]
\draw[thick,black] (0.00,0.00) -- (1.00,0.00);
\draw[thick,black] (0.00,0.00) -- (-1.00,0.00);
\draw[thick,black] (-1.00,0.00) -- (-2.00,0.00);
\draw[thick,black] (-2.00,0.00) -- (-3.00,0.00);
\draw[thick,black,dashed] (-3.00,0.00) -- (-6.00,0.00);
\draw[thick,black] (-6.00,0.00) -- (-7.00,0.00);
\draw[thick,black] (-7.00,0.00) -- (-8.00,0.00);
\draw[thick,black] (-9.00,0.00) -- (-8.00,0.00);

\filldraw[blue] (-9.00,0.00) circle (3pt);
\node () at (-9.7,0.3) {\large $v^*$};

\filldraw[red] (1.00,0.00) circle (3pt);
\node () at (1.7,0.3) {\large $u^*$};
\filldraw[blue] (0.00,0.00) circle (3pt);
\filldraw[blue] (-1.00,0.00) circle (3pt);
\filldraw[red] (-2.00,0.00) circle (3pt);
\filldraw[blue] (-3.00,0.00) circle (3pt);
\filldraw[blue] (-6.00,0.00) circle (3pt);
\filldraw[red] (-7.00,0.00) circle (3pt);
\filldraw[blue] (-8.00,0.00) circle (3pt);

%\draw[thick] (-0.5,-0.5) -- (-0.5,0.5);
%\draw[thick] (-3.5,-0.5) -- (-3.5,0.5);
%\draw[thick] (-5.5,-0.5) -- (-5.5,0.5);
%\draw[thick] (-8.5,-0.5) -- (-8.5,0.5);

\draw[thick,red] (1,0) ellipse (0.3cm and 0.3cm);
\draw[thick,red] (-2,0) ellipse (0.3cm and 0.3cm);
\draw[thick,red] (-7,0) ellipse (0.3cm and 0.3cm);
\end{tikzpicture}};
\node () at (7,-2) {\begin{tikzpicture}[scale=0.5]
\draw[thick,black] (2.00,0.00) -- (1.00,0.00);
\draw[thick,black] (0.00,0.00) -- (1.00,0.00);
\draw[thick,black] (0.00,0.00) -- (-1.00,0.00);
\draw[thick,black] (-1.00,0.00) -- (-2.00,0.00);
\draw[thick,black] (-2.00,0.00) -- (-3.00,0.00);
\draw[thick,black,dashed] (-3.00,0.00) -- (-5.00,0.00);
\draw[thick,black] (-5.00,0.00) -- (-6.00,0.00);
\draw[thick,black] (-7.00,0.00) -- (-6.00,0.00);
\draw[thick,black] (-7.00,0.00) -- (-8.00,0.00);

\filldraw[blue] (-8.00,0.00) circle (3pt);
\node () at (-8.7,0.3) {\large $v^*$};
\filldraw[blue] (2.00,0.00) circle (3pt);
\node () at (2.7,0.3) {\large $u^*$};
\filldraw[red] (1.00,0.00) circle (3pt);
\filldraw[blue] (0.00,0.00) circle (3pt);
\filldraw[blue] (-1.00,0.00) circle (3pt);
\filldraw[red] (-2.00,0.00) circle (3pt);
\filldraw[blue] (-3.00,0.00) circle (3pt);
\filldraw[blue] (-5.00,0.00) circle (3pt);
\filldraw[red] (-6.00,0.00) circle (3pt);
\filldraw[blue] (-7.00,0.00) circle (3pt);

%\draw[thick] (-0.5,-0.5) -- (-0.5,0.5);
%\draw[thick] (-3.5,-0.5) -- (-3.5,0.5);
%\draw[thick] (-4.5,-0.5) -- (-4.5,0.5);
%\draw[thick] (-7.5,-0.5) -- (-7.5,0.5);

\draw[thick,red] (1,0) ellipse (0.3cm and 0.3cm);
\draw[thick,red] (-2,0) ellipse (0.3cm and 0.3cm);
\draw[thick,red] (-6,0) ellipse (0.3cm and 0.3cm);
\end{tikzpicture}};

\node () at (3.5,-3.5) {\Large \underline{Type 3:}};
\node () at (0,-5) {type 3.0};
\node () at (7,-5) {type 3.1};
\node () at (3.5,-7.5) {type 3.2};
\node () at (0,-5) {\begin{tikzpicture}[scale=0.5]
\draw[thick] (-7.50,1.73) arc [start angle=120.00, end angle=-120.00, x radius=5, y radius=2];
\draw[thick,dashed] (-7.50,1.73) arc [start angle=120.00, end angle=240.00, x radius=5, y radius=2];

\filldraw[blue] (0.00,0.00) circle (3pt);
%\filldraw[blue] (-0.67,1.00) circle (3pt);
\filldraw[blue] (-2.50,1.73) circle (3pt);
\filldraw[red] (-5.00,2.00) circle (3pt);
\filldraw[blue] (-7.50,1.73) circle (3pt);
\filldraw[blue] (-7.50,-1.73) circle (3pt);
\filldraw[red] (-5.00,-2.00) circle (3pt);
\filldraw[blue] (-2.50,-1.73) circle (3pt);
%\filldraw[blue] (-0.67,-1.00) circle (3pt);
%\draw[thick] (-1.77,0.90) -- (-1.40,1.83);
%\draw[thick] (-8.23,0.90) -- (-8.60,1.83);
%\draw[thick] (-8.23,-0.90) -- (-8.60,-1.83);
%\draw[thick] (-1.77,-0.90) -- (-1.40,-1.83);
\draw[thick,red] (-5.00,2.00) ellipse (0.3cm and 0.3cm);
\draw[thick,red] (-5.00,-2.00) ellipse (0.3cm and 0.3cm);

\node () at (0.7,0.3) {\large $u^*$};
\end{tikzpicture}};
\node () at (7,-5) {\begin{tikzpicture}[scale=0.5]
\draw[thick] (-7.50,1.73) arc [start angle=120.00, end angle=-120.00, x radius=5, y radius=2];
\draw[thick,dashed] (-7.50,1.73) arc [start angle=120.00, end angle=240.00, x radius=5, y radius=2];

\filldraw[red] (0.00,0.00) circle (3pt);
%\filldraw[blue] (-0.67,1.00) circle (3pt);
\filldraw[blue] (-2.50,1.73) circle (3pt);
\filldraw[red] (-5.00,2.00) circle (3pt);
\filldraw[blue] (-7.50,1.73) circle (3pt);
\filldraw[blue] (-7.50,-1.73) circle (3pt);
\filldraw[red] (-5.00,-2.00) circle (3pt);
\filldraw[blue] (-2.50,-1.73) circle (3pt);
\filldraw[blue] (-0.67,-1.00) circle (3pt);
%\draw[thick] (-1.77,0.90) -- (-1.40,1.83);
%\draw[thick] (-8.23,0.90) -- (-8.60,1.83);
%\draw[thick] (-8.23,-0.90) -- (-8.60,-1.83);
%\draw[thick] (-1.77,-0.90) -- (-1.40,-1.83);
\draw[thick,red] (0.00,0.00) ellipse (0.3cm and 0.3cm);
\draw[thick,red] (-5.00,2.00) ellipse (0.3cm and 0.3cm);
\draw[thick,red] (-5.00,-2.00) ellipse (0.3cm and 0.3cm);

\node () at (0.7,0.3) {\large $u^*$};
\end{tikzpicture}};
\node () at (3.5,-7.5) {\begin{tikzpicture}[scale=0.5]
\draw[thick] (-7.50,1.73) arc [start angle=120.00, end angle=-120.00, x radius=5, y radius=2];
\draw[thick,dashed] (-7.50,1.73) arc [start angle=120.00, end angle=240.00, x radius=5, y radius=2];

\filldraw[red] (0.00,0.00) circle (3pt);
\filldraw[blue] (-0.67,1.00) circle (3pt);
\filldraw[blue] (-2.50,1.73) circle (3pt);
\filldraw[red] (-5.00,2.00) circle (3pt);
\filldraw[blue] (-7.50,1.73) circle (3pt);
\filldraw[blue] (-7.50,-1.73) circle (3pt);
\filldraw[red] (-5.00,-2.00) circle (3pt);
\filldraw[blue] (-2.50,-1.73) circle (3pt);
\filldraw[blue] (-0.67,-1.00) circle (3pt);
%\draw[thick] (-1.77,0.90) -- (-1.40,1.83);
%\draw[thick] (-8.23,0.90) -- (-8.60,1.83);
%\draw[thick] (-8.23,-0.90) -- (-8.60,-1.83);
%\draw[thick] (-1.77,-0.90) -- (-1.40,-1.83);
\draw[thick,red] (0.00,0.00) ellipse (0.3cm and 0.3cm);
\draw[thick,red] (-5.00,2.00) ellipse (0.3cm and 0.3cm);
\draw[thick,red] (-5.00,-2.00) ellipse (0.3cm and 0.3cm);

\node () at (0.7,0.3) {\large $u^*$};
\end{tikzpicture}
};

\end{tikzpicture}

\caption{In this figure, vertices in red represent the vertices included in the minimum size set $D$. For components of types $6.1$ and $6.2$, we depict only one of the two possible cases; the other cases, where $v^* \in D$ or $v^*$ has a neighbor in $D$, are analogous.}
\label{fig:dom_each_type}
\end{figure}
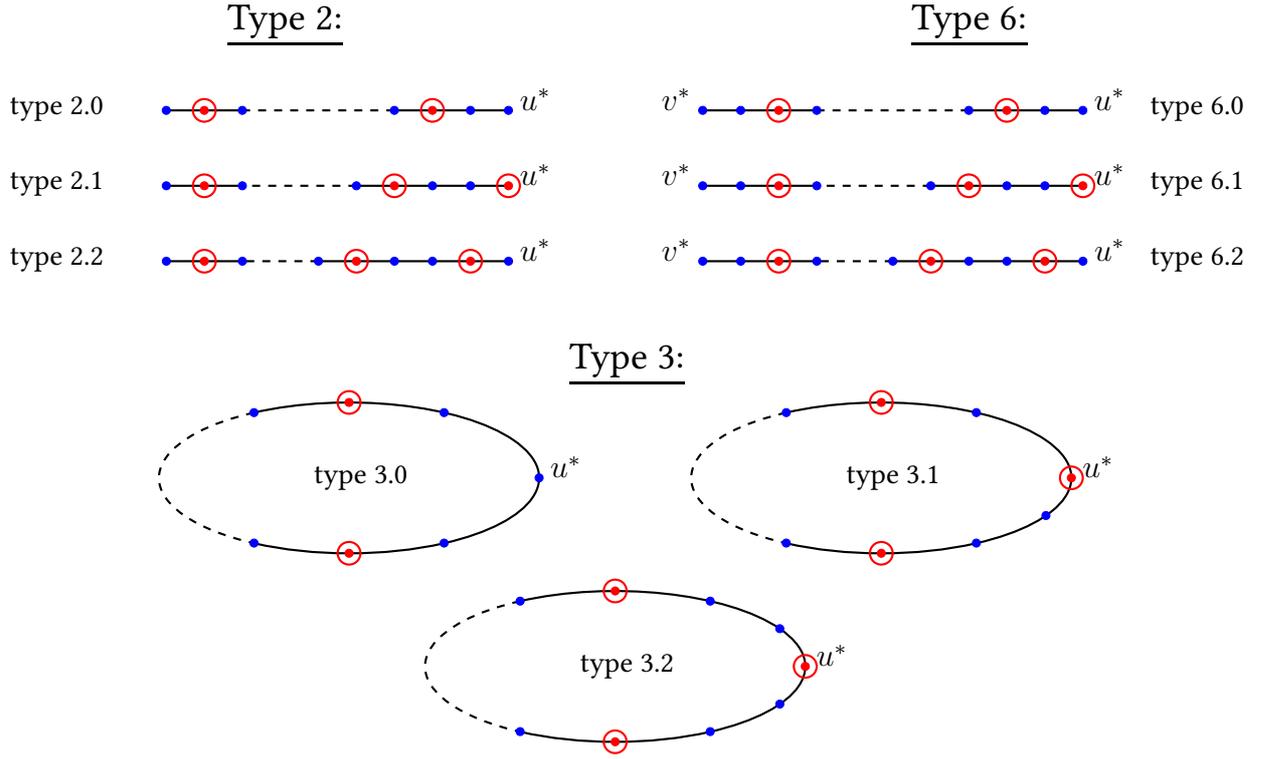

If there are connected components of Type $6.1$, we consider them first among the components of type $6$. We construct two sets $D$, one with $u^*\in D$ and another with $v^* \in D$. Similarly as before, we remove the vertex included in $D$ along with its neighbors from $\overline{G}$. Then, the components of Type $6$ will transform into components of Type $1,2$ or $4$.
Then, if components of Type $6.2$ exist, we denote by $k$ be the number of such components, then, for all $i \in [\![0,k]\!]$, we create a set $D$ containing $i$ vertices adjacent to $u^*$ and $k-i$ vertices adjacent to $v^*$.

For each set $D$ constructed through the procedure introduced above, we check whether $S = V \setminus D$ forms a PDS in $G$. %Note that we only have to check the satisfaction of the vertices in $\{u^*,v^*\} \cap D$. 
If none of these sets forms a PDS in $G$, we iterate the procedure with two cases: one with $u^* \in D$ and the other with $v^* \in D$. Similarly, if none of the constructed sets $D$ form a \PDS{},  we repeat the procedure with $u^*,v^* \in D$. Note, that if $u^*,v^* \in D$, $S = V\setminus D$ forms  a \PDS{} in $G$. Indeed, $u^*,v^* \notin S$ and $D=\overline{S}$ dominates all non-isolated vertices in $\overline{G}$ and \cref{rk:PDS_to_dom} implies that such $S$ forms a \PDS{}.

\paragraph{Correctness:} As outlined earlier and implied by \cref{rk:PDS_to_dom}, the procedure introduced above always outputs a set $D$ such that $S = V \setminus D$ forms a \PDS{} in $G$. This is achieved by ensuring that all non-isolated vertices of $\overline{G}$ are dominated, and that the vertices in $\{u^*, v^*\} \cap S$ are satisfied with respect to $S$ in $G$. Furthermore, the set $D$ is minimal, thereby justifying the correctness of the algorithm.

\paragraph{Complexity:} By using BFS algorithm, we can determine the connected components of $\overline{G}[V \setminus \{u^*,v^*\}]$ in time $\mathcal{O}(n \cdot |E(\overline{G})|) = \mathcal{O}(n^2)$. Then, constructing sets $D$ is done in time $\mathcal{O}(n)$. Thus, the overall  complexity is $\mathcal{O}(n^2)$.
\end{proof}

Note that the result introduced above implies a polynomial-time complexity on graphs $G$ such that $\Delta(\overline{G}) \leq 2$, i.e. graphs $G=(V,E)$ such that $\delta(G) \geq |V|-3$.

Note that the \PDS{} constructed in the proof of the \cref{thrm:h=2comp} induces a connected subgraph of $G$.
Indeed, if $u^*$ and $v^*$ are universal in $\overline{G}$, then $V \setminus \{u^*,v^*\}$ forms an independent set in $\overline{G}$ and a clique in $G$.
Otherwise, if $\overline{G}[V \setminus\{u^*,v^*\}]$ contains two connected components or a path with at least 4 vertices, the \PDS{} constructed by the procedure introduced in the proof of \cref{thrm:h=2comp} includes at least two vertices which are not adjacent and have no neighbors in common in $\overline{G}$. Consequently, $S$ always induces a connected subgraph of $G$.

Therefore, we introduced all the results outlined in \cref{table:results}.

\section{Conclusion}
In this paper, we investigated \maxPDS{} and \connmaxPDS{}. We gave new results on the parameterized complexity. The parameters considered in this paper are $\degen, \Delta, h-$index. There remain several open questions, some of which we present hereafter.
\begin{itemize}
    \item As mentioned in \cref{table:results}, the complexity of finding a (connected) \PDS{} of maximum size in graphs with $\Delta(G) =3$,  $h(G)=3$, $3\leq \Delta(\overline{G}) \leq 5$ or $3 \leq h(\overline{G}) \leq 5$ remain open problems.

    \item The complexity of \maxPDS{} and \connmaxPDS{} with respect to other graph parameters, both related and unrelated to graph density could lead to interesting results. %Among relevant parameters,  one could study the tree-width, the vertex cover or the.

    \item As mentioned in \cref{sec:introduction}, we know that the size of \PDS{} and connected \PDS{} can differ by a factor of 2 in some graphs. However, \maxPDS{} and \connmaxPDS{} behave similarly with respect to the parameters studied in this paper.
    Therefore, identifying a parameter or a graph class where there is a fundamental complexity difference between the \maxPDS{} and \connmaxPDS{} problems could be an interesting direction for future research.
\end{itemize}
\bibliographystyle{elsarticle-num} 
\bibliography{sn-bibliography}

\begin{thebibliography}{10}
\expandafter\ifx\csname url\endcsname\relax
  \def\url#1{\texttt{#1}}\fi
\expandafter\ifx\csname urlprefix\endcsname\relax\def\urlprefix{URL }\fi
\expandafter\ifx\csname href\endcsname\relax
  \def\href#1#2{#2} \def\path#1{#1}\fi

\bibitem{MR4023158}
C.~Bazgan, J.~Chleb\'{\i}kov\'{a}, C.~Dallard, T.~Pontoizeau, Proportionally dense subgraph of maximum size: complexity and approximation, Discrete Applied Mathematics. The Journal of Combinatorial Algorithms, Informatics and Computational Sciences 270 (2019) 25--36.
\newblock \href {https://doi.org/10.1016/j.dam.2019.07.010} {\path{doi:10.1016/j.dam.2019.07.010}}.

\bibitem{CommunityProtein}
Y.~Zhu, Y.~Li, J.~Liu, L.~Qin, J.~X. Yu, Discovering large conserved functional components in global network alignment by graph matching., BMC Genomics 19 (Suppl 7) 670 (2018).

\bibitem{olsenGenView}
M.~Olsen, A general view on computing communities, Mathematical Social Sciences 66~(3) (2013) 331--336.
\newblock \href {https://doi.org/https://doi.org/10.1016/j.mathsocsci.2013.07.002} {\path{doi:https://doi.org/10.1016/j.mathsocsci.2013.07.002}}.

\bibitem{densestksub}
U.~Feige, D.~Peleg, G.~Kortsarz, The dense k -subgraph problem, Algorithmica 29 (2001) 410–421.
\newblock \href {https://doi.org/https://doi.org/10.1007/s004530010050} {\path{doi:https://doi.org/10.1007/s004530010050}}.

\bibitem{ListOfSimilarProblems}
C.~Komusiewicz, Multivariate algorithmics for finding cohesive subnetworks, Algorithms 9~(1) (2016).
\newblock \href {https://doi.org/10.3390/a9010021} {\path{doi:10.3390/a9010021}}.

\bibitem{bazgan:k-comm}
C.~Bazgan, J.~Chlebikova, T.~Pontoizeau, Structural and algorithmic properties of 2-community structures, Algorithmica 80~(6) (2018) 1890--1908.
\newblock \href {https://doi.org/https://doi.org/10.1007/s00453-017-0283-7} {\path{doi:https://doi.org/10.1007/s00453-017-0283-7}}.

\bibitem{Estivill1}
V.~Estivill-Castro, M.~Parsa, On connected two communities, in: Proceedings of the Thirty-Sixth Australasian Computer Science Conference - Volume 135, ACSC '13, Australian Computer Society, Inc., AUS, 2013, pp. 23--30.

\bibitem{bazgan:firstInfiniteFamily}
C.~Bazgan, J.~Chleb{\'\i}kov{\'a}, C.~Dallard, Graphs without a partition into two proportionally dense subgraphs, Information Processing Letters 155 (2020) 105877.
\newblock \href {https://doi.org/https://doi.org/10.1016/j.ipl.2019.105877} {\path{doi:https://doi.org/10.1016/j.ipl.2019.105877}}.

\bibitem{BaghirovaK-Community}
N.~Baghirova, C.~Dallard, B.~Ries, D.~Schindl, Finding k-community structures in special graph classes (04 2023).
\newblock \href {https://doi.org/10.21203/rs.3.rs-2826086/v1} {\path{doi:10.21203/rs.3.rs-2826086/v1}}.

\bibitem{SatPartition}
C.~Bazgan, Z.~Tuza, D.~Vanderpooten, The satisfactory partition problem, Discrete Applied Mathematics 154~(8) (2006) 1236--1245.
\newblock \href {https://doi.org/https://doi.org/10.1016/j.dam.2005.10.014} {\path{doi:https://doi.org/10.1016/j.dam.2005.10.014}}.

\bibitem{NoteOnSatPart}
F.~Ciccarelli, M.~{Di Ianni}, G.~Palumbo, A note on the satisfactory partition problem: Constant size requirement, Information Processing Letters 179 (2023).
\newblock \href {https://doi.org/https://doi.org/10.1016/j.ipl.2022.106292} {\path{doi:https://doi.org/10.1016/j.ipl.2022.106292}}.

\bibitem{BalSatPart}
A.~Gaikwad, S.~Maity, S.~K. Tripathi, The Balanced Satisfactory Partition Problem, Springer International Publishing, 2021, pp. 322--336.
\newblock \href {https://doi.org/https://doi.org/10.1007/978-3-030-67731-2_23} {\path{doi:https://doi.org/10.1007/978-3-030-67731-2_23}}.

\bibitem{SatPartParamComp}
A.~Gaikwad, S.~Maity, S.~K. Tripathi, Parameterized complexity of satisfactory partition problem, Theoretical Computer Science 907 (2022) 113--127.
\newblock \href {https://doi.org/https://doi.org/10.1016/j.tcs.2022.01.022} {\path{doi:https://doi.org/10.1016/j.tcs.2022.01.022}}.

\bibitem{SatPartAlgApproach}
M.~U. Gerber, D.~Kobler, Algorithmic approach to the satisfactory graph partitioning problem, European Journal of Operational Research 125~(2) (2000) 283--291.
\newblock \href {https://doi.org/https://doi.org/10.1016/S0377-2217(99)00459-2} {\path{doi:https://doi.org/10.1016/S0377-2217(99)00459-2}}.

\bibitem{LCP}
N.~Baghirova, C.~L. Gonzalez, B.~Ries, D.~Schindl, Locally checkable problems parameterized by clique-width, in: 33rd International Symposium on Algorithms and Computation (ISAAC 2022), Vol. 248 of Leibniz International Proceedings in Informatics (LIPIcs), 2022, pp. 31:1--31:20.

\bibitem{PDS}
C.~Bazgan, J.~Chlebikova, C.~Dallard, T.~Pontoizeau, Proportionally dense subgraph of maximum size: Complexity and approximation, Discrete Applied Mathematics 270 (07 2019).
\newblock \href {https://doi.org/10.1016/j.dam.2019.07.010} {\path{doi:10.1016/j.dam.2019.07.010}}.

\bibitem{pontoizeau}
T.~Pontoizeau, Community detection: computational complexity and approximation, Ph.D. thesis, Université Paris sciences et lettres (2018).

\bibitem{indep_cubic}
H.~Fleischner, G.~Sabidussi, V.~I. Sarvanov, Maximum independent sets in 3- and 4-regular hamiltonian graphs, Discrete Mathematics 310~(20) (2010) 2742--2749, graph Theory — Dedicated to Carsten Thomassen on his 60th Birthday.
\newblock \href {https://doi.org/https://doi.org/10.1016/j.disc.2010.05.028} {\path{doi:https://doi.org/10.1016/j.disc.2010.05.028}}.

\bibitem{garey_johnson}
M.~R. Garey, D.~S. Johnson, Computers and Intractability; A Guide to the Theory of NP-Completeness, W. H. Freeman \& Co., USA, 1990.

\end{thebibliography}

\end{document}